\documentclass[final,3p,times]{elsarticle}
\usepackage{amsmath,amssymb,amsthm,amsfonts}
\usepackage{graphicx,color,booktabs,array}
\usepackage{float}
\newfloat{algorithm}{htbp}{loa}\floatname{algorithm}{Algorithm}
\usepackage{tikz}
\usetikzlibrary{arrows.meta,decorations.markings,calc}
\usepackage{lmodern}
\usepackage{microtype}
\usepackage{rotating} % landscape page for Table 1

\usepackage[hidelinks]{hyperref}

\newtheorem{thm}{Theorem}[section]
\newtheorem{lem}[thm]{Lemma}
\theoremstyle{remark}
\newtheorem{rem}{Remark}[section]
\numberwithin{equation}{section}

\newcommand{\vu}{\mathbf u}
\newcommand{\vx}{\mathbf x}
\newcommand{\vv}{\mathbf v}
\newcommand{\vn}{\mathbf n}
\newcommand{\ey}{\mathbf e_y}
\newcommand{\tu}{\tilde{\mathbf u}}
\newcommand{\bu}{\bar{\mathbf u}}
\newcommand{\dt}{\delta t}
\newcommand{\G}{\Gamma}
\newcommand{\curl}{\nabla\times}

\newcommand{\RacFS}{657.510}
\newcommand{\RacRG}{1707.76}

\usepackage{lineno} % with \linenumbers after \begin{document}
\journal{Journal of Computational Physics}

\begin{document}
%\linenumbers

\begin{frontmatter}

\title{A second-order linear decoupled scheme with a relaxed zero-energy variable for Rayleigh--B\'enard convection}

\author{M. McCurdy}
 \ead{matthew.mccurdy@uvi.edu} 
\affiliation{organization={Department of Mathematical Sciences, University of the Virgin Islands},
 city={St. Thomas},
 state={USVI},
 country={United States }}

\begin{abstract}
We construct a second-order, linear, and fully decoupled scheme for Rayleigh--B\'enard convection between Navier-slip walls.
The nonlinear terms are treated explicitly and multiplied by a scalar with exact value $1$ and no energy contributed to the continuous problem. As a result, each BDF2 step, with a rotational pressure correction, solves only constant-coefficient linear problems.
For the time discretization, we prove that, for any time-step, the energy decays when the walls have equal temperatures, under a condition on the thermal weight, and the energy is bounded on finite time intervals when heated from below.
Relaxing the scalar towards $1$ preserves both estimates.
Energy stability does not control the accuracy of the scalar; the velocity and temperature fields are computed with its unrelaxed value, whose deviation from $1$ scales like $\frac{E}{C}\big(\frac{\dt}{\tau}\big)^3$ for kinetic energy $E$, time-scale $\tau$ of the flow, and weight $C$ of the scalar equation.
In our tests, choosing $C$ comparable to $E$ keeps this deviation small and doubles the largest time-step for which steady rolls are reproduced to a relative difference of $10^{-8}$.
With Taylor--Hood elements, we observe second-order convergence in time, recover the critical Rayleigh numbers, and match published Nusselt numbers to within $0.5\%$.
A three-dimensional implementation passes corresponding validation tests.
\end{abstract}

%% Research highlights
\begin{highlights}
\item Linear, decoupled BDF2 scheme for Rayleigh--B\'enard convection with Navier-slip walls
\item A relaxed zero-energy scalar absorbs the energy error of explicit nonlinear terms
\item Energy decay without heating, finite-time bound with heating, for any time-step
\item Energy stability does not control the accuracy of the zero-energy scalar
\item In our tests, a kinetic-energy weight doubles the largest accurate time-step
\end{highlights}

\begin{keyword}
Rayleigh--B\'enard convection \sep Boussinesq equations \sep scalar auxiliary variable \sep relaxation \sep energy stability \sep rotational pressure correction \sep Navier-slip

 \MSC[2020] 65M60 \sep 65M12 \sep 76E06 \sep 76R10
\end{keyword}

\end{frontmatter}

\section{Introduction}\label{sec:intro}
Energy-stable linear schemes are traditionally designed for dissipative systems, in which the energy can only decrease \cite{SXY2018,SXY2019}.
Rayleigh--B\'enard convection is not dissipative: heat enters continuously through the lower wall, and above a critical Rayleigh number the buoyancy feeds the flow \cite{Chandrasekhar1961,Rayleigh1916}.
In this work, we investigate what a zero-energy auxiliary variable, which makes every time-step linear and decoupled, still controls when it is applied to this continuously-driven problem, and what it does not.
We consider walls with a Navier-slip condition, which models finite wall friction between the free-slip and no-slip limits \cite{MCYZ2017}.
For flows like this, fully implicit schemes solve nonlinear systems at every time-step, linearly implicit schemes solve linear systems whose matrices change at every step, and an explicit treatment of the convection restricts the time-step.

To overcome this bottleneck, Scalar Auxiliary Variable (SAV) and Generalized SAV (GSAV) methods \cite{SXY2018,SXY2019} have emerged as powerful frameworks.
They enable the construction of linear, unconditionally energy-stable discretizations that reduce nonlinear PDE systems to a sequence of decoupled, constant-coefficient linear solves \cite{JiangLi2024, JiangYang2024jmaa, LiShenLiu2022, LinYangDong2019, Yang2021nme}.
However, existing SAV extensions to Boussinesq systems \cite{AlhomsiWuZheng2026, ChenHanWangZhang2020, JiangYang2023anm, WagnerWohlmuthZawallich2026} consider settings without heat input through the boundary: insulated walls, homogeneous temperature conditions, or perturbations of a stably stratified state.
In these cases, total thermal energy decays monotonically or cancels in weighted norm estimates.
Classical Rayleigh--B\'enard convection represents a fundamentally different physical regime though. Continuous thermal energy input from the heated lower wall acts as an active, sustained source, breaking global energy decay and limiting stability bounds to finite time intervals $[0,T]$.
Consequently, SAV schemes for continuous thermal-driving are needed, along with an understanding of what their energy estimates do and do not control.
Here, we address this challenge by developing a linear, second-order, fully decoupled scheme combining a rotational pressure correction \cite{GMS2006}, a zero-energy auxiliary scalar \cite{Yang2021nme}, and a scalar relaxation \cite{JiangZhangZhao2022} specifically tailored to Rayleigh--B\'enard convection under Navier-slip boundaries.

The scalar auxiliary variable (SAV) approach \cite{SXY2018,SXY2019} yields linear, unconditionally energy-stable schemes for dissipative systems by treating nonlinear terms explicitly and controlling them through a scalar variable tied to the energy of the system.
For incompressible flow, Lin, Yang, and Dong \cite{LinYangDong2019} applied this idea to the convection term, and Yang \cite{Yang2021nme} introduced a variant for coupled systems in which the scalar carries no energy, also called the zero-energy-contribution (ZEC) approach.
The approach of \cite{Yang2021nme} multiplies all nonlinear couplings (whose contributions to the energy balance cancel) and the scalar function's exact value is $1$.
The resulting schemes are fully decoupled and solve only constant-coefficient problems.
Scalar variables with a known exact value $e^{-t/T}$ are used by Li, Shen, and Liu \cite{LiShenLiu2022}, who combine an SAV with first-order and rotational second-order pressure correction for the Navier--Stokes equations, and by N.~Jiang and coauthors in ensemble methods, where all members share one constant coefficient matrix (with artificial compressibility for the Navier--Stokes equations \cite{JiangYang2024jmaa}, and with rotational pressure correction for the Stokes--Darcy system \cite{JiangLi2024}). 
Li, Shen, and Wang \cite{LiShenWang2026} take a different route for the Navier--Stokes--Darcy model.
Each step predicts the velocity and temperature fields with explicit convection and interface coupling, then rescales them by a factor computed from the energy of the model. 
With this, this energy is dissipated, and the solutions stay bounded.
They prove a first-order error estimate and extend the approach to two-phase flow.

For the Boussinesq equations, Jiang and Yang \cite{JiangYang2023anm} proposed stabilized SAV ensemble schemes of Crank--Nicolson leap-frog and BDF2-type.
They introduce separate auxiliary scalars for the momentum and the heat equations, treat all nonlinear and coupling terms explicitly, and prove long-time stability for homogeneous Dirichlet conditions.
Two recent works use the GSAV approach, in which an energy-based scalar is advanced by an exponential integrator: Wagner, Wohlmuth, and Zawallich \cite{WagnerWohlmuthZawallich2026} for no-slip walls with homogeneous temperature conditions, and Alhomsi, Wu, and Zheng \cite{AlhomsiWuZheng2026} for perturbations of a stably stratified rest-state.
The rescaling of \cite{LiShenWang2026}, focused on the coupled Navier--Stokes--Darcy system, is likewise built on an energy that the model dissipates.
In all three works, no heat enters through the walls. 
With homogeneous conditions, the temperature decays and the resulting buoyancy work is bounded.
As a result, for stable stratification, the buoyancy and stratification terms cancel in a weighted energy.
For a two-phase Navier--Stokes--Darcy--Boussinesq system with insulated boundaries, Chen, Han, Wang, and Zhang \cite{ChenHanWangZhang2020} weight the thermal energy differently. They scale by a constant large enough that the buoyancy work (bounded with Poincar\'e and Korn inequalities) is absorbed by the viscous and thermal dissipation.
Rayleigh--B\'enard convection, though, differs from those; heat enters through the bottom wall, the buoyancy work is an energy source that no weight cancels or absorbs, the no-flow steady-state is unstable above a critical Rayleigh number \cite{Chandrasekhar1961, Rayleigh1916}, and a stability estimate can at best bound the energy on finite time intervals.
Table~\ref{tab:works} below compares these works with the present one.

This paper constructs and tests a numerical scheme for the case of Rayleigh--B\'enard convection.
Its components have been studied and implemented independently of each other: BDF2 with the rotational pressure-correction method \cite{GMS2006}, the zero-energy variable $Q$ of Yang \cite{Yang2021nme} which multiplies the convection of momentum and the advection of heat, the relaxation of Jiang, Zhang, and Zhao \cite{JiangZhangZhao2022}, proposed for gradient flows, and walls with a Navier-slip condition, as in \cite{MCYZ2017}.
While the zero-energy variable, the pressure correction, and the relaxation are each already established, our contribution is to show how they behave together when heating from below supplies energy at every step.
And although the energy estimate holds for every time-step, the scalar that makes the scheme linear can become inaccurate, by an amount that grows with the ratio of the kinetic energy of the flow to the weight of the scalar equation.
To our knowledge, no previous SAV or GSAV scheme for the Boussinesq equations treats heating from below with Navier-slip walls (Table~\ref{tab:works}).
The novel contributions of this manuscript are the following:

\begin{enumerate}
\item \emph{A scheme for heating from below.} We combine these approaches into a second-order, linear, fully decoupled scheme for Rayleigh--B\'enard convection between Navier-slip walls (free slip for zero slip coefficient, no slip in the limit of an infinite one), with the temperature written as the conductive steady-state plus a perturbation which vanishes on the walls.
Each step reduces to four constant-coefficient Helmholtz problems (two for the velocity, two for the temperature), one pressure Poisson problem, and a scalar update.
The relaxation is adapted to the BDF2 energy of $Q$.
\item \emph{Energy analysis of the time discretization} (Theorem~\ref{thm:stab}) with exact spatial operators.
For any time-step, we prove decay of a discrete energy when the walls are at the same temperature, under a condition on the weight $A$ of the thermal energy, and a finite-time bound under heating, where no weight removes the energy source.
The analysis of the rotational form with Navier-slip walls rests on an identity for flat walls (Lemma~\ref{lem:walls}), and holds unchanged in three dimensions (Remark~\ref{rem:3d}).
We prove neither a fully discrete estimate nor an error estimate; Section~\ref{sec:noerr} states the obstructions.
\item \emph{Energy stability is not accuracy of the scalar.} The relaxation controls the drift of the stored scalar $Q^n$, but the velocity and temperature fields of a step are computed with the unrelaxed value $\tilde Q^{n+1}$, which the relaxation does not change.
We weight the equation of $Q$ by a constant $C$ with the dimension of an energy, which leaves the analysis unchanged, and argue that $\tilde Q^{n+1}-1=O\big(\frac{E}{C}\big(\frac{\dt}{\tau}\big)^3\big)$ for a flow with kinetic energy $E$ and time scale $\tau$.
The computations confirm the $\dt^3$ and $\frac{1}{C}$ dependence, and the growth with $E$ over two orders of magnitude of $E$, which gives the practical criterion $C\gtrsim E$.
The range of $\tilde Q^n$, not the stored $Q^n$, is the diagnostic that detects failure, and a limit of the explicit convection, $\dt\max|\nabla\vu|\approx1.2$--$1.6$, remains.
\end{enumerate}

We implement the scheme (Section~\ref{sec:impl}) with Taylor--Hood finite elements and remove the hydrostatic pressure of the conduction state analytically.
This is essential near onset, since these elements are not pressure robust.
The numerical experiments (Section~\ref{sec:num}) begin with a manufactured solution showing second-order convergence in time and the expected spatial orders of the discretization.
The onset of convection checks the linear behavior against the exact free-slip growth rates and the critical Rayleigh numbers for free-slip and rigid walls.
Steady rolls check the nonlinear heat transport against weakly nonlinear theory \cite{MalkusVeronis1958} and published Nusselt numbers for rigid \cite{CleverBusse1974} and free-slip \cite{MooreWeiss1973} walls.
Runs at $Ra=5000$ with different time-steps and weights $C$, and at Rayleigh numbers from $2000$ to $2\cdot10^4$, test the accuracy of the scalar and its scaling with $\frac{E}{C}$.
We also check the decay of the discrete energy without heating and the sensitivity to the relaxation parameter.
A run at $Ra=10^6$ serves as a stress test, and Section~\ref{sec:3d} verifies that the scheme and its energy behavior carry over to three dimensions.
The FreeFEM \cite{Hecht2012} scripts, the command line of every run reported in this work, and the Python scripts that produce the tables and figures from the run logs are provided as supplementary material.

The numerical experiments expose three limitations on accuracy that the energy estimate does not capture.
Without relaxation, $Q$ can drift away from $1$. 
As a result, the computed steady-states are those of a different problem, with the convection and advection scaled by the drifted value of $Q$ (in Section~\ref{sec:dt}, the Nusselt number of steady rolls is off by $7\%$ at $\dt=5\cdot10^{-4}$ with $C=1$).
Additionally, for larger time-steps, $Q$ drifting from $1$ switches off the advection that saturates the instability, so that the solution grows without bound although the energy estimate holds.
With $C=1$, the energy error of the explicit terms is large compared with the energy of $Q$, and $\tilde Q$ collapses within a step.
With $C$ comparable to the kinetic energy, $\tilde Q$ stays within about $1\%$ of $1$.
The observed convergence during the formation of the rolls is then second order, with smaller errors.
Steady rolls are also reproduced up to a time-step twice as large, where the limit of the explicit convection takes over.

 \begin{sidewaystable}
\centering
\smallskip
\setlength{\tabcolsep}{3pt}\scriptsize
\renewcommand{\arraystretch}{1.12}
\begin{tabular}{>{\raggedright\arraybackslash}p{0.73in} >{\raggedright\arraybackslash}p{1.18in}
>{\raggedright\arraybackslash}p{0.68in} >{\raggedright\arraybackslash}p{0.78in}
>{\raggedright\arraybackslash}p{1.18in} >{\raggedright\arraybackslash}p{1.24in}
>{\raggedright\arraybackslash}p{1.16in}}
\toprule
Work & System; walls & Time; order & Pressure & Scalar variable(s) & Proved & Space; tests\\
\midrule
\multicolumn{7}{l}{\emph{Navier--Stokes and related flows}}\\
\cite{LinYangDong2019} & NS; Dirichlet & BDF1/2; 2 & decoupled (two Poisson problems) & one energy-based, $\sqrt{E}$; nonlinear scalar equation (Newton) & energy law & spectral elements; manufactured, cylinder\\
\cite{LiShenLiu2022} & NS; no slip & BE; 1\newline BDF2; 2 & standard PC\newline rotational PC & one, exact value $e^{-t/T}$; linear & unconditional stability; error estimate for BE (2D) & MAC grid; manufactured, cavity\\
\cite{JiangYang2024jmaa} & NS ensembles; no slip & CNLF; 2 & artificial compressibility & one per member, $e^{-t/T}$; stabilized & long-time stability under a condition on the parameter spread & Taylor--Hood FE; manufactured, cylinder, 3D cavity\\
\cite{JiangLi2024} & Stokes--Darcy ensembles & BE; 1 & rotational PC (tunable) & one per member, $e^{-t/T}$, for the interface terms & long-time stability under three parameter conditions & FE; manufactured, stochastic, realistic\\
\cite{LiShenWang2026} & NS--Darcy (and CH--NS--Darcy); Beavers--Joseph--Saffman interface & BE; 1\newline CN/AB2; 2 & coupled Stokes solve, constant coefficients & none stored: the predicted fields are rescaled by $\sqrt{\xi}$, $\xi^{n+1}=E^n/(\tilde E^{n+1}+\dt\,I)$ & dissipation of the original energy, boundedness; error estimate for BE (2D) & manufactured, filtration\\
\cite{Yang2021nme} & NS + mass-conserving AC, two-phase & 2 & decoupled & one zero-energy scalar for all couplings, exact value $1$ & unconditional stability & \\
\cite{JiangZhangZhao2022} & gradient flows (no flow) & BDF2, CN; 2 & -- & energy-based, relaxed towards its exact value & unconditional stability & Fourier\\
\midrule
\multicolumn{7}{l}{\emph{Boussinesq systems}}\\
\cite{ChenHanWangZhang2020} & CH--NS--Darcy--Boussinesq; no slip, no heat input & 1 & coupled (NS), stabilized (Darcy) & none (convex splitting) & unconditional stability (weighted energy) & Taylor--Hood FE; B\'enard cells\\
\cite{JiangYang2023anm} & Boussinesq ensembles; no slip, no heat input & CNLF, BDF2; 2 & coupled & two per member ($\vu$, $\theta$), $e^{-t/T}$; stabilized; quadratic scalar equation & unconditional long-time stability & FE; manufactured, Marsigli flow\\
\cite{WagnerWohlmuthZawallich2026} & Boussinesq; no slip, no heat input & BDF-type; 2 & consistent splitting & one energy-based (GSAV), exponential update & bounds on $[0,T]$; error estimate $O(\dt^2)$ (2D) & FE; manufactured, Marsigli flow (2D, 3D)\\
\cite{AlhomsiWuZheng2026} & Boussinesq, perturbation of stable stratification; no slip & BDF2-type; 2 & consistent splitting & one energy-based (GSAV) & bounds on $[0,T]$; error estimate $O(\dt^2)$ (2D) & manufactured, internal waves\\
\midrule
This paper & Boussinesq, heated from below; Navier slip (free slip to no slip) & BDF2; 2 & rotational PC & one zero-energy scalar, weighted by $C$, exact value $1$, relaxed; linear & decay without heating; bound on $[0,T]$ with heating (any $\dt$); no error estimate & Taylor--Hood FE; manufactured, onset, published $Nu$, energy decay\\
\bottomrule
\end{tabular}
\caption{Linear schemes with scalar variables for incompressible flow and Boussinesq-type systems, and the
works whose techniques are used here. ``Proved'' lists what is proved for the time discretization (for the
walls in column 2); ``no heat input'' means that the temperature satisfies homogeneous conditions or the
walls are insulated, so that no heat enters through the walls. BE: backward Euler; CNLF: Crank--Nicolson
leap-frog; AC: Allen--Cahn; CH: Cahn--Hilliard; PC: pressure correction; FE: finite elements. Order: order of accuracy in time.}
\label{tab:works}
\end{sidewaystable}

\newpage

\section{Model}\label{sec:model}
For notation, $(f,g)=\int_{\Omega} fg\,d\vx$ and $\| f \|^2 = (f,f)$.
The $x$-, $y$-, and $z$-components of $\vu$ are $u_x,$ $u_y,$ and $u_z,$ respectively.

Let $\Omega=[0,L)\times(0,H)$, periodic in $x$, with walls $\G=\G_b\cup\G_t$ at $y=0$ and $y=H$ (Figure~\ref{fig:domain}(a)).
With the Navier--Stokes--Boussinesq approximation, we have
\begin{align}\label{pde}
\left\{
\begin{array}{l}
\vu_t+(\vu\cdot\nabla)\vu=\nu\Delta\vu-\nabla p+G\,(T-T_{\rm ref})\,\ey\,\\
 \nabla\cdot\vu=0\,,\\
T_t+\vu\cdot\nabla T=\kappa\Delta T\,,
\end{array}
\right.
\end{align}
where $\vu$ is the fluid velocity, $p$ the pressure divided by the constant reference density $\rho_0$, $T$ the temperature, $\nu$ the kinematic viscosity, and $\kappa$ the thermal diffusivity of the fluid.
In the Boussinesq approximation, the density is taken constant except in the gravity term, where it depends linearly on the temperature, $\rho=\rho_0\big(1-\beta(T-T_{\rm ref})\big)$, with $\beta$ as the coefficient of thermal expansion and $T_{\rm ref}$ as a reference temperature.
Here, $G=g\beta$ with $g$ as the acceleration due to gravity, which points along $-\ey$, where $\ey$ is the upward-pointing unit vector.
The bottom and top walls are held at the temperatures $T_b$ and $T_t$, with $H$ as the depth of the layer and $L$ as the length of the periodic cell.
We take $T_{\rm ref}$ between the wall temperatures: $T_{\rm ref}=\frac{1}{2}$ for $T_b=1$, $T_t=0$ in Section~\ref{sec:num}.
(We note that using a different value for $T_{\rm ref}$ only adds a linear function of $y$ to the pressure.)
The Rayleigh number $Ra=G(T_b-T_t)H^3/(\nu\kappa)$ measures the strength of the buoyancy against viscous and thermal diffusion, and the Prandtl number $Pr=\nu/\kappa$ compares the diffusion of momentum with that of heat.
The numerical experiments of Section~\ref{sec:num} are specified by $Ra$ and $Pr$.

On the walls, we enforce no-penetration and a Navier-slip condition for the flow along with Dirichlet conditions for temperature:
\begin{equation}\label{bc}
\vu\cdot\vn=0\quad \textrm{and} \quad l\,u_x+\nu\,\partial_\vn u_x=0 \quad \textrm{on $\G$}\,,\qquad T=T_b\ \text{on }\G_b,\quad T=T_t\ \text{on }\G_t,
\end{equation}
where $\vn$ is the outward-pointing unit normal vector and the slip coefficient $l\ge0$: $l=0$ gives free-slip walls and $l\to\infty$ no-slip walls.
The same Navier-slip condition is used in other linear, energy-stable schemes, like the Allen--Cahn--Navier--Stokes model of two-phase flow with moving contact lines \cite{MCYZ2017}, where slip at the wall is what allows the contact line to move.

\begin{figure}[htbp]
\centering
\resizebox{\textwidth}{!}{\input{fig_domain.tex}}
\caption{The domain. (a) Periodic cell $\Omega=[0,L)\times(0,H)$, heated from below ($T_b>T_t$), with gravity along $-\ey$.
The walls carry the Navier-slip condition \eqref{bc}, and the dashed sides $x=0$ and $x=L$ are noted.
Shading is the conductive linear profile $T_c(y)$, and the sketch shows two counter-rotating rolls, with warm fluid rising at $x=0$ and cold fluid sinking at $x=L/2$ (dashed: clockwise, as in Figure~\ref{fig:flows}). (b) The three-dimensional channel of Remark~\ref{rem:3d} and Section~\ref{sec:3d}, periodic in $x$ and $z$.}
\label{fig:domain}
\end{figure}

\paragraph{Conductive steady-state and perturbation} The conductive steady-state solution is the fluid at rest $\vu=0$ with a linear temperature profile $T_c(y)=T_b+T_c'\,y$, where $T_c'=(T_t-T_b)/H$, balanced by a hydrostatic pressure.
We write the temperature as this steady-state plus a perturbation, $T=T_c+\theta$.
The perturbation $\theta$ vanishes on $\G$, since $T_c$ already takes the wall values.
Substituting the $T=T_c+\theta$ expansion into the advection-diffusion equation of \eqref{pde} gives the equation governing $\theta$:
\begin{equation}\label{theta}
\theta_t+\vu\cdot\nabla\theta+T_c'\,u_y=\kappa\Delta\theta .
\end{equation}

\paragraph{Energy balance} For a weight $A>0$, summing the inner products of the momentum equation of \eqref{pde} with $\vu$ and \eqref{theta} taken with $A\theta$ gives:
\begin{equation}\label{law}
\frac{d}{dt}\left(\frac{1}{2}\|\vu\|^2+\frac{A}{2}\|\theta\|^2\right)
=-\nu\|\nabla\vu\|^2-l\|u_x\|_\G^2-A\kappa\|\nabla\theta\|^2+\big(G(T_c+\theta-T_{\rm ref}),u_y\big)-AT_c'(u_y,\theta),
\end{equation}
since $((\vu\cdot\nabla)\vu,\vu)=(\vu\cdot\nabla\theta,\theta)=0$ for $\nabla\cdot\vu=0$, $\vu\cdot\vn=0$.
The last two terms of \eqref{law}---the work of the buoyancy and the exchange of energy with the conduction profile, respectively---are sign-indefinite.

Since $T_c-T_{\rm ref}$ depends on $y$ only, $G(T_c-T_{\rm ref})\ey$ is a gradient and does no work on $\vu$, so the last two terms reduce to $(G-AT_c')(\theta,u_y)$.
When heated from above ($T_c'>0$), the choice $A=G/T_c'$ makes that term cancel.
When heated from below though ($T_c'<0$), we have that $G-AT_c'>0$ for every $A>0$, and the rising warmer fluid and sinking cooler fluid, $(\theta,u_y)>0$, feed the energy.
This is innate to convection, and $A$ cannot be chosen to remove that energy.
At Rayleigh numbers below the onset of convection, the energy method bounds this term by the dissipation and turns the balance into nonlinear stability of the no-flow conductive steady-state.
For Rayleigh--B\'enard convection, the resulting nonlinear threshold coincides with the linear one \cite{Straughan2004}.
For a fluid layer over a saturated porous medium, the linear and nonlinear stabilities lie close together \cite{McCurdyMooreWang2019}.
If the walls are at the reference temperature, $T_b=T_t=T_{\rm ref}$, only $(G\theta,u_y)$ remains.
Since $\theta$ and $u_y$ vanish on $\G$, the Poincar\'e inequality $\|v\|\le C_P\|\partial_yv\|$, with the sharp constant $C_P=H/\pi$, applies to both.
With Young's inequality $ab\le\epsilon a^2+\frac{b^2}{4\epsilon}$---taken with $\epsilon=\nu$, $a=\|\nabla\vu\|$, and $b=G\,C_P^2\|\nabla\theta\|$---we have
\begin{equation*}
|(G\theta,u_y)|\le G\|\theta\|\,\|u_y\|\le G\,C_P^2\|\nabla\theta\|\,\|\nabla\vu\|\le\nu\|\nabla\vu\|^2+\frac{G^2C_P^4}{4\nu}\|\nabla\theta\|^2 .
\end{equation*}
The choice $\epsilon=\nu$ lets the viscous dissipation $-\nu\|\nabla\vu\|^2$ in \eqref{law} absorb the first term, so that
\begin{equation*}
\frac{d}{dt}\left(\frac{1}{2}\|\vu\|^2+\frac{A}{2}\|\theta\|^2\right)\le-l\|u_x\|_\G^2-\Big(A\kappa-\frac{G^2C_P^4}{4\nu}\Big)\|\nabla\theta\|^2 .
\end{equation*}
Hence, the energy does not increase once $A\kappa\ge G^2C_P^4/(4\nu)$, as in \cite{ChenHanWangZhang2020}, where the boundaries are insulated.

\paragraph{Zero-energy variable} Following \cite{Yang2021nme}, we introduce the scalar variable $Q(t)$ with $Q(0)=1$ and fix a constant $C>0$ with the dimension of an energy.
The scalar $Q$ is multiplied by the nonlinear terms of the momentum equation and the advection-diffusion equation for $\theta$, and an equation for $Q$ is added:
\begin{align}\label{Qform}
\left\{
\begin{array}{l}
\vu_t+Q(\vu\cdot\nabla)\vu=\nu\Delta\vu-\nabla p+G(T_c+\theta-T_{\rm ref})\ey,\\
\nabla\cdot\vu=0,\\
\theta_t+Q\,\vu\cdot\nabla\theta+T_c'u_y=\kappa\Delta\theta,\\
C\,Q_t=((\vu\cdot\nabla)\vu,\vu)+A(\vu\cdot\nabla\theta,\theta).
\end{array}
\right.
\end{align}
The right-hand side of the $Q$ equation vanishes since $((\vu\cdot\nabla)\vu,\vu)=(\vu\cdot\nabla\theta,\theta)=0$ as noted with the energy balance above.
So, $Q\equiv1$ and \eqref{Qform} is equivalent
to \eqref{pde} for every $C>0$.
Multiplying the $Q$ equation by $Q$ and adding it to the energy
balance, we see the two terms multiplied by $Q$ cancel.
That is, $\frac{C}{2}Q^2$ is a constant that carries no physical energy.
In the continuous problem, $Q$ contributes nothing to the energy balance---in this sense, it is {\em zero-energy}.
Its BDF2 discretization, however, contributes $\frac{C}{4}(|Q^n|^2+|2Q^n-Q^{n-1}|^2)$ to the discrete energy of Section~\ref{sec:stab}, an auxiliary numerical energy that changes only through the energy error of the explicit terms.
The discrete scheme keeps this cancellation while still treating the nonlinear terms explicitly.
The original formulation \cite{Yang2021nme} has $C=1$, and our scheme recovers a result analogous to theirs (\cite{Yang2021nme} used a two-phase flow with no heat).
The role of $C$ in the discrete scheme is discussed after \eqref{relax}.

\section{Scheme}\label{sec:scheme}
Let $\dt>0$, $t_n=n\dt$, and write $D_tX^{n+1}=(3X^{n+1}-4X^n+X^{n-1})/(2\dt)$ for the BDF2 time integration.
We write the extrapolations of $\vu$ and $\theta$ as $\bu=2\vu^n-\vu^{n-1}$ and $\bar\theta=2\theta^n-\theta^{n-1}$, respectively.
Both the BDF2 scheme and the extrapolations are second-order approximations; their pairing in this scheme follows the convention of \cite{SXY2018}. \\

\noindent Given $(\vu^n,\vu^{n-1},p^n,\theta^n,\theta^{n-1},Q^n,Q^{n-1})$, with $\nabla\cdot\vu^k=0$ in $\Omega$ and $\vu^k\cdot\vn=0$ on $\G$ for $k=n,n-1$:

\medskip\noindent\textbf{Step 1.} We set up a system for $\tu^{n+1}$ and $\tilde Q^{n+1}$---intermediate steps on the way to $\vu^{n+1}$ and $Q^{n+1}$, respectively---and $\theta^{n+1}$.
We find $(\tu^{n+1},\theta^{n+1},\tilde Q^{n+1})$ such that
\begin{align}
&\frac{3\tu^{n+1}-4\vu^n+\vu^{n-1}}{2\dt}+\tilde Q^{n+1}(\bu\cdot\nabla)\bu
=\nu\Delta\tu^{n+1}-\nabla p^n+G(T_c+\bar\theta-T_{\rm ref})\ey,\label{su}\\[10pt]
&\qquad\textrm{with}\quad \tu^{n+1}\cdot\vn=0\quad \textrm{and}\quad l\tilde u_x^{n+1}+\nu\partial_\vn\tilde u_x^{n+1}=0\quad \textrm{on }\G,\notag\\[20pt]
&D_t\theta^{n+1}+\tilde Q^{n+1}\bu\cdot\nabla\bar\theta+T_c'\bar u_y=\kappa\Delta\theta^{n+1},
\qquad \textrm{with}\quad \theta^{n+1}=0\quad \textrm{on }\G,\label{sT}\\[20pt]
&C\,\frac{3\tilde Q^{n+1}-4Q^n+Q^{n-1}}{2\dt}=((\bu\cdot\nabla)\bu,\tu^{n+1})+A(\bu\cdot\nabla\bar\theta,\theta^{n+1}).\label{sQ}
\end{align}

\noindent\textbf{Step 2.} Next, we introduce a rotational pressure correction that splits each step into a Helmholtz problem for $\tu$ with the old pressure and a Poisson problem that projects onto divergence-free fields.
In the standard form, $p^{n+1} = p^n +\phi^{n+1}$ and the Neumann condition $\partial_\vn\phi=0$ freezes $\partial_\vn p$ on the wall at its initial value.
This artificial condition limits the accuracy of the pressure.
The {\em rotational form} $p^{n+1} = p^n +\phi^{n+1}-\nu \nabla \cdot \tu^{n+1}$ includes $-\nu \nabla \cdot \tu^{n+1}$ to remove the artificial condition.
This process, introduced in \cite{TimmermansMinevVandeVosse1996} and analyzed by Guermond and Shen \cite{GuermondShen2004}, avoids the artificial Neumann condition that the standard incremental form imposes on the pressure \cite{GMS2006}.
(Section~\ref{sec:mms} shows that with finite elements, the standard form loses accuracy on meshes that resolve the resulting numerical boundary layer at the walls.)
The same pressure correction appears in velocity-correction methods \cite{GuermondShen2003,ChenShen2019} and in decoupled schemes for coupled systems, such as the Cahn--Hilliard--Darcy--Stokes system \cite{WangZouLi2024} and with magnetohydrodynamics \cite{ZhangSuLiu2023}.
Pressure-correction schemes have also been analyzed with open boundary conditions \cite{GuermondMinevShen2005}, extended to variable viscosity with natural boundary conditions \cite{PlasmanDeteixYakoubi2020}, and used in ensemble schemes for the Boussinesq equations \cite{Jiang2019jsc}.

We find $\phi^{n+1}$ and $\vu^{n+1}$ such that
\begin{align}
&\frac{3(\vu^{n+1}-\tu^{n+1})}{2\dt}+\nabla\phi^{n+1}=0,\qquad \nabla\cdot\vu^{n+1}=0,\label{sp}\\[4pt]
&\qquad\textrm{with}\quad \vu^{n+1}\cdot\vn=0\quad\textrm{on }\G,\notag\\[12pt]
&p^{n+1}=p^n+\phi^{n+1}-\nu\nabla\cdot\tu^{n+1}.\label{spsp}
\end{align}
Taking the divergence of the first equation in \eqref{sp} and using $\nabla\cdot\vu^{n+1}=0$ gives the Poisson equation below, and its normal component on $\G$---where $\vu^{n+1}\cdot\vn=\tu^{n+1}\cdot\vn=0$---gives the Neumann condition:
\begin{equation}\label{sphi}
\begin{aligned}
&\Delta\phi^{n+1}=\frac{3}{2\dt}\nabla\cdot\tu^{n+1},\quad\textrm{with}\quad \partial_\vn\phi^{n+1}=0\quad\textrm{on }\G.
\end{aligned}
\end{equation}

\noindent\textbf{Step 3.} Finally, we relax the scalar, following \cite{JiangZhangZhao2022} (see also
\cite{zhang2022generalized}).
We store
\begin{equation}\label{relaxQ}
Q^{n+1}=\xi\,\tilde Q^{n+1}+(1-\xi),\qquad\text{or equivalently}\qquad Q^{n+1}-1=\xi\,\big(\tilde Q^{n+1}-1\big),
\end{equation}
where $\xi$ is the smallest number in $[0,1]$ such that
\begin{equation}\label{relax}
\begin{gathered}
\mathcal E_Q\big(\xi\,\tilde Q^{n+1}+1-\xi\big)-\mathcal E_Q(\tilde Q^{n+1})\le\eta\,\dt\,\mathcal D^{n+1},\\[4pt]
\textrm{with}\quad\mathcal E_Q(q)=\tfrac{C}{4}\big(q^2+(2q-Q^n)^2\big),\qquad
\mathcal D^{n+1}=\tfrac{\nu}{2}\|\curl\tu^{n+1}\|^2+l\|\tilde u_x^{n+1}\|_\G^2 .
\end{gathered}
\end{equation}
Here, $\mathcal E_Q(q)$ is the energy of the scalar at the new time level when its value is $q$, $\mathcal D^{n+1}$ is the viscous and wall dissipation of the step, and $\eta\in[0,1)$ is a fixed fraction.
We call $Q^{n+1}$ the \emph{stored scalar} and $\tilde Q^{n+1}$ the \emph{computational scalar}: only the stored value enters the next step, while the velocity and temperature of step $n+1$ are computed with $\tilde Q^{n+1}$.

Relaxation is needed because the computed value carries the energy error of the explicit terms.
Since $((\bu\cdot\nabla)\bu,\bu)=0$ and $(\bu\cdot\nabla\bar\theta,\bar\theta)=0$, the right-hand side of \eqref{sQ} equals
\begin{equation*}
R^{n+1}:=((\bu\cdot\nabla)\bu,\tu^{n+1}-\bu)+A(\bu\cdot\nabla\bar\theta,\theta^{n+1}-\bar\theta),
\end{equation*}
and $\tilde Q^{n+1}-1=\frac{2\dt}{3C}R^{n+1}$ when $Q^n=Q^{n-1}=1$.
If $\tilde Q^{n+1}$ were stored, these errors would accumulate through \eqref{sQ}, and $Q$ would drift to a value that rescales the convection and advection.
This could cause a computed steady-state to belong to a different problem ($Nu$ is off by $7\%$ at $\dt=5\cdot10^{-4}$ with $C=1$, Section~\ref{sec:dt}).

Moving $Q$ towards $1$ can raise the discrete energy of Section~\ref{sec:stab}, which contains $\mathcal E_Q(Q^{n+1})$.
Constraint \eqref{relax} pays for any such increase out of the fraction $\eta$ of the dissipation, so Theorem~\ref{thm:stab} holds for every admissible $\xi$ with $(1-\eta)\mathcal D^{n+1}$ to spare (which is why $\eta<1$).
The left-hand side of \eqref{relax} is a convex quadratic in $\xi$ that vanishes at $\xi=1$; so, the smallest admissible $\xi$, which stores the value closest to $1$, is either $0$ or the smaller root (Section~\ref{sec:impl}).
In practice though, the budget far exceeds the cost:
\begin{equation*}
\tfrac{3C}{2}\,\big|\tilde Q^{n+1}-1\big|=\dt\,|R^{n+1}|=O(\dt^3)
\qquad\text{against}\qquad
\eta\,\dt\,\mathcal D^{n+1}=O(\dt)
\end{equation*}
(Section~\ref{sec:notcontrolled}), so $Q^{n+1}$ is reset to exactly $1$ at every step unless the flow is nearly at rest.

For our simulations, we take $\eta=0.95$.
Every $\eta\in[0.5,1)$ gives the same results, while $\eta=0$ behaves like no relaxation (Section~\ref{sec:eta}).
The relaxation changes only the value stored for the next step: the velocity and temperature fields of step $n+1$ are computed with $\tilde Q^{n+1}$, whose deviation from $1$ is controlled by $\dt$ and $C$.

\paragraph{Remark} The first step uses backward Euler with the same structure ($\bu=\vu^0$,
$\bar\theta=\theta^0$).

\paragraph{The weight $C$} For $Q^n=Q^{n-1}=1$, \eqref{sQ} gives $\tilde Q^{n+1}-1=\frac{2\dt}{3C}R^{n+1}$, with $R^{n+1}$ the energy error of the explicit terms defined in Step~3.
The size of this deviation follows from a scaling argument, not from a bound.
Consider a flow with kinetic energy $E=\frac{1}{2}\|\vu\|^2$ (not the discrete energy $E^n$ of Section~\ref{sec:stab}), velocity scale $U=\sqrt{2E/|\Omega|}$ (from $E\sim \frac{1}{2}U^2$ averaged over a domain $\Omega$), and a single time-scale $\tau$, which is both the advective time $\ell/U$ of its length-scale $\ell$ and the time over which the flow changes.
The extrapolation error is $\tu^{n+1}-\bu=\dt^2\,\partial_{tt}\vu+O(\dt^3)$, of size $U\dt^2/\tau^2$, and $(\bu\cdot\nabla)\bu$ is of size $U^2/\ell=U/\tau$.
The first term of $R^{n+1}$ is therefore of size $|\Omega|\,\frac{U}{\tau}\,\frac{U\dt^2}{\tau^2}=\frac{2E}{\tau}\big(\frac{\dt}{\tau}\big)^2$, and the second has the same form with the thermal energy $\frac{A}{2}\|\theta\|^2$ in place of $E$.
With $\tilde Q^{n+1}-1=\frac{2\dt}{3C}R^{n+1}$,
$$|\tilde Q^{n+1}-1|\sim\frac{E}{C}\Big(\frac{\dt}{\tau}\Big)^3,$$
up to a constant that depends on the flow.
The factor $\dt^3$ comes from one $\dt$ in \eqref{sQ} and two from the extrapolation, and the factor $\frac{1}{C}$ comes from the weight of \eqref{sQ}.
The factor $E$ comes from $R^{n+1}$ being quadratic in the velocity: each of its two factors, the convection term and the extrapolation error, carries one factor of $U$, and $|\Omega|U^2=2E$.

The velocity and temperature fields of the intermediate step are computed with $\tilde Q^{n+1}$, and the relaxation---which acts on the stored value---does not reduce this deviation (Lemma~\ref{lem:contract}).
With $C=1$, this error is amplified by $E$, which in convection grows with the Rayleigh number (in the units of Section~\ref{sec:num}, $E\approx140$ for the rolls at $Ra=5000$, $E\approx400$ at $Ra=10^4$, and more in time-dependent convection).
We therefore take $C$ comparable to the kinetic energy of the flow, fixed during a run; equivalently, the variables can be scaled so that the energy is of order one.
The scaling of $\tilde Q^{n+1}-1$ with $\dt^3$ and $1/C$ is a prediction of this argument, and Section~\ref{sec:dt} confirms both: slope $3$ in $\dt$, and a reduction by the factor $100$ from $C=1$ to $100$ and again to $10^4$ (Figure~\ref{fig:C}(b)).
Over Rayleigh numbers from $2000$ to $2\cdot10^4$, it also confirms the growth with $E$ (Figure~\ref{fig:scaling}).
The criterion $C\gtrsim E$ is a practical rule suggested by this scaling and confirmed in the tests of Section~\ref{sec:dt}; it is not a condition of the stability analysis, which holds for every $C>0$.
Below the explicit-convection limit, taking $C$ larger than $E$ did no harm there ($C=10^4$ at $Ra=5000$, where $E\approx142$, gives the same results as $C=100$ up to $\dt=6\cdot10^{-3}$). Beyond that limit, at $\dt\ge10^{-2}$, the relaxation no longer reset the stored $Q$ to $1$ with $C=10^4$, while it did with $C=100$.
Neither the theorem nor the relaxation depends on $C$ (see the discussion after Lemma~\ref{lem:contract}).

An a priori value can therefore be taken from the free-fall velocity $U_{\rm ff}=\sqrt{G|T_b-T_t|H}$.
In all our heated runs that remained bounded, the kinetic energy stayed below $\frac{1}{2}|\Omega|U_{\rm ff}^2$.
This bound is $5\cdot10^3$ for the rolls at $Ra=5000$ ($E\approx142$) and $2\cdot10^6$ for the run at $Ra=10^6$ (largest $E\approx2.4\cdot10^5$).
Theorem~\ref{thm:stab} holds for every $C>0$ with the same constants, and $C=1$ recovers the formulation of \cite{Yang2021nme}. %We speculate the same consideration applies to any scalar variable whose equation absorbs the energy error of explicit terms, although we have not thoroughly examined it for other schemes.

\section{Energy stability}\label{sec:stab}
This section proves that, for every time-step, the energy of the time discretization of Section~\ref{sec:scheme} decays without heating, under a condition on the thermal weight $A$, and is bounded on finite time intervals with heating from below.
The analysis is for exact spatial operators, with the finite-element version discussed after the theorem.
Our proof has three components:
\begin{itemize}
\item two identities for flat walls with a Navier-slip condition, which turn the viscous term of the rotational form into the vorticity plus a wall term (Section~\ref{sec:walls}), as in the analysis of vorticity at Navier-slip walls \cite{ClopeauMikelicRobert1998,Kelliher2006};
\item a discrete energy $E^n$ that contains, besides the kinetic and thermal energies, a pressure term produced by the pressure correction, as in the stability analysis of the rotational form \cite{GuermondShen2004,GMS2006}, and the discrete energy of the scalar $Q$ (Section~\ref{sec:thm});
\item the cancellation of the explicit convection and advection terms against the equation of $Q$, which is what the zero-energy variable is for \cite{LinYangDong2019,Yang2021nme}.
\end{itemize}
Heating enters only through the work of the buoyancy.
Without heating, this work is absorbed by the dissipation (Theorem~\ref{thm:stab}(a)).
However, when heating from below, buoyancy becomes an energy source, and the estimate is a Gr\"onwall bound on finite time intervals (Theorem~\ref{thm:stab}(b)).
Section~\ref{sec:notcontrolled} states what the theorem does {\em not} give, Section~\ref{sec:noerr} details our hindrances to proving a more generalized error estimate, and Section~\ref{sec:stab3d} extends the result to three dimensions.

\subsection{Two identities for flat Navier walls}\label{sec:walls}
\begin{lem}\label{lem:walls}
Let $\vv\in H^1(\Omega)^2$ be periodic in $x$ with $v_y=0$ on $\G$, and let $\omega=\curl\vv=\partial_xv_y-\partial_yv_x$.
Then,
\begin{equation}\label{id1}
\|\nabla\vv\|^2=\|\omega\|^2+\|\nabla\cdot\vv\|^2 .
\end{equation}
If, moreover, $\vv\in H^2(\Omega)^2$ and $lv_x+\nu\partial_\vn v_x=0$ on $\G$, then
\begin{equation}\label{id2}
\nu(\curl\curl\vv,\vv)=\nu\|\omega\|^2+l\|v_x\|_\G^2 .
\end{equation}
\end{lem}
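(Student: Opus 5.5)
We prove \eqref{id1} first for smooth fields and then pass to $H^1$ by density. Smooth $x$-periodic fields with $v_y=0$ on $\G$ are dense in the closed subspace of $H^1(\Omega)^2$ defined by these conditions, and both sides of \eqref{id1} are continuous in the $H^1$ norm. For smooth $\vv$ we expand both sides:
\begin{equation*}
\|\omega\|^2+\|\nabla\cdot\vv\|^2-\|\nabla\vv\|^2
=2\int_\Omega\big(\partial_xv_x\,\partial_yv_y-\partial_yv_x\,\partial_xv_y\big)\,d\vx .
\end{equation*}
The integrand is a null Lagrangian (a Jacobian determinant). Two integrations by parts, one in $x$ and one in $y$, rewrite it as
\begin{equation*}
\partial_x\big(v_x\,\partial_yv_y\big)-\partial_y\big(v_x\,\partial_xv_y\big).
\end{equation*}
The $x$-derivative term integrates to zero by periodicity. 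The $y$-derivative term leaves only the boundary integrals $\mp\int_{\G}v_x\,\partial_xv_y\,dx$ at $y=H$ and $y=0$. Since $v_y=0$ along each flat wall, its tangential derivative $\partial_xv_y$ vanishes there, so these integrals are zero and \eqref{id1} follows.

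For \eqref{id2}, we use that in two dimensions $\curl\curl\vv=(\partial_y\omega,-\partial_x\omega)$ for the scalar vorticity $\omega$. Integrating by parts then gives
\begin{equation*}
(\curl\curl\vv,\vv)=\int_\Omega\big(v_x\,\partial_y\omega-v_y\,\partial_x\omega\big)\,d\vx
=\|\omega\|^2+\int_{\G}\omega\,(v_x n_y-v_yn_x)\,ds ,
\end{equation*}
where the periodic sides contribute nothing. On the walls $n_x=0$ and $v_y=0$, so the boundary integrand reduces to $\omega v_x n_y$.

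The step that needs care is evaluating $\omega$ on the walls through the slip condition. Because $v_y=0$ on $\G$, we have $\partial_xv_y=0$ there, hence $\omega=-\partial_yv_x$ on $\G$. At $y=H$, where $n_y=1$, this reads $\omega n_y=-\partial_\vn v_x$. At $y=0$, where $n_y=-1$ and $\partial_\vn=-\partial_y$, we get $\omega n_y=\partial_yv_x=-\partial_\vn v_x$ again. So on both walls $\omega n_y=-\partial_\vn v_x=\frac{l}{\nu}v_x$ by the Navier condition, and the boundary integral becomes $\frac{l}{\nu}\|v_x\|_\G^2$. Multiplying by $\nu$ gives \eqref{id2}.

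The $H^2$ regularity makes all traces meaningful, including the trace of $\omega\in H^1$, and justifies the integration by parts directly. The main obstacle is only the sign bookkeeping of $\vn$ on the two walls, which the calculation above settles.
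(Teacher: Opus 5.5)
Your proof is correct and follows the paper's argument: for \eqref{id1} you use the same null-Lagrangian divergence identity, and for \eqref{id2} the same integration by parts with the wall evaluation $\omega n_y=-\partial_\vn v_x=\frac{l}{\nu}v_x$. The differences are cosmetic. You put $v_x$ in the flux and remove the wall term using $\partial_x v_y=0$ on $\G$, whereas the paper puts $v_y$ in the flux so the wall term vanishes directly. You also spell out the density argument from smooth fields, which the paper leaves implicit.
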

\begin{proof}
For \eqref{id1}, a direct computation gives
\begin{equation*}
|\nabla\vv|^2-\omega^2-(\nabla\cdot\vv)^2=2\big(\partial_xv_y\,\partial_yv_x-\partial_xv_x\,\partial_yv_y\big)
=2\partial_x\big(v_y\,\partial_yv_x\big)-2\partial_y\big(v_y\,\partial_xv_x\big).
\end{equation*}
The first term of the last expression integrates to zero by periodicity.
The second integrates to boundary terms that contain $v_y$, which vanishes on $\G$ (for $\vv\in H^1$, with $v_y=0$ on $\G$).

For \eqref{id2}, in two dimensions $\curl\curl\vv=(\partial_y\omega,-\partial_x\omega)$, and integration by parts gives
\begin{equation*}
(\curl\curl\vv,\vv)=\|\omega\|^2+\int_\G\omega\,v_x\,n_y\,ds .
\end{equation*}
On a wall, $v_y=0$ implies $\partial_xv_y=0$, so $\omega=-\partial_yv_x$.
The slip condition gives
$\partial_yv_x=\frac{l}{\nu} v_x$ at $y=0$, where $n_y=-1$, and $\partial_yv_x=-\frac{l}{\nu} v_x$ at $y=H$, where $n_y=1$.
In both cases, we have $\omega\,v_x\,n_y=\frac{l}{\nu} v_x^2$.
\end{proof}

Identity \eqref{id1} is classical for $\vv\in H^1_0(\Omega)^2$, \cite{GiraultRaviart1986}.
On flat walls, it still holds when only the normal component of $\vv$ vanishes.
Identity \eqref{id2} is the energy form of the boundary relation between vorticity and slip velocity used for Navier-slip walls in
\cite{ClopeauMikelicRobert1998,Kelliher2006}.
Both identities are used throughout this section.
The intermediate velocity $\tu^{n+1}$ satisfies the Navier-slip condition, but is not divergence free---\eqref{id2} gives its viscous dissipation.
The end-of-step velocity $\vu^{n+1}$ is divergence free, and \eqref{id1} gives $\|\nabla\vu^{n+1}\|=\|\curl\vu^{n+1}\|$.
Both identities use that the walls are flat (Section~\ref{sec:stab3d}).

\subsection{Discrete energy and main result}\label{sec:thm}
The rotational pressure correction puts a pressure term into the energy, as in the analyses of
\cite{GuermondShen2004,GMS2006}, and we measure it in the following norm.
Let
$S:=I-\frac{2\nu\dt}{3}\Delta$ with homogeneous Neumann conditions, and the $M$-norm be defined as:
\begin{equation}\label{Mnorm}
\|q\|_M^2:=(\nabla q,\nabla S^{-1}q).
\end{equation}
With the Neumann eigenpairs $-\Delta\psi_k=\lambda_k\psi_k$ and $c=\frac{2\nu\dt}{3}$, $(\nabla a,\nabla S^{-1}b)=\sum\limits_k\frac{\lambda_k}{1+c\lambda_k}a_kb_k$ for all $a,b\in H^1$.
This form is symmetric and nonnegative, and
\begin{equation}\label{Mprop}
\|q\|_M\le\|\nabla q\|,\qquad\|\nabla S^{-1}q\|\le\|q\|_M .
\end{equation}

Now, using the $M$-norm for pressure, the discrete energy is defined as
\begin{equation}\label{En}
\begin{aligned}
E^n={}&\underbrace{\tfrac{1}{4}\big(\|\vu^n\|^2+\|\bu\|^2\big)}_{\text{kinetic}}
+\underbrace{\tfrac{\dt^2}{3}\|p^n\|_M^2}_{\text{pressure}}+\underbrace{\tfrac{A}{4}\big(\|\theta^n\|^2+\|\bar\theta\|^2\big)}_{\text{thermal}}
+\underbrace{\tfrac{C}{4}\big(|Q^n|^2+|\bar Q|^2\big)}_{\text{scalar}} ,
\end{aligned}
\end{equation}
with the extrapolation for $\bar Q:= 2Q^n-Q^{n-1}$ (also second-order, like $\bu$ and $\bar\theta$).
The kinetic and thermal parts are the usual BDF2 energies \cite{SXY2018,GuermondShen2004}.
The scalar part equals $\frac{C}{2}$ when $Q^n=Q^{n-1}=1$; it is the discrete counterpart of the constant $\frac{C}{2}Q^2$ of Section~\ref{sec:model}, and it changes only through the energy error of the explicit terms.
We also use
\begin{itemize}
\item the dissipation budget of the relaxation, $\mathcal D^{n+1}=\frac{\nu}{2}\|\curl\tu^{n+1}\|^2+l\|\tilde u_x^{n+1}\|_\G^2$,
from \eqref{relax};
\item the Poincar\'e constant in $y$, $C_P=H/\pi$: $\|v\|\le C_P\|\partial_yv\|$ when $v=0$ on $\G$.
\end{itemize}

\begin{thm}\label{thm:stab}
Let $n\ge1$ and let $\dt>0$ be arbitrary.
\begin{enumerate}
\item[(a)] \emph{No heating.} Let $T_b=T_t=T_{\rm ref}$, and
\begin{equation*}
B':=\frac{G^2C_P^4}{2\nu}+\frac{\dt\,G^2C_P^2}{3},\qquad
X^n:=E^n+\dt B'\big(10\|\nabla\theta^n\|^2+2\|\nabla\theta^{n-1}\|^2\big).
\end{equation*}
If $A\kappa\ge10B'$, then
\begin{equation}\label{thma}
X^{n+1}-X^n\le-\dt\Big((1-\eta)\mathcal D^{n+1}+(A\kappa-10B')\|\nabla\theta^{n+1}\|^2\Big)\le0 .
\end{equation}
\item[(b)] \emph{Heating.} For any wall temperatures and any $A>0$,
\begin{equation}\label{thmb}
E^{n+1}\le(1+K\dt)E^n+c\,\dt,\qquad\text{hence}\qquad E^n\le e^{K t_n}\big(E^1+c\,t_n\big),
\end{equation}
with
\begin{equation*}
K:=\frac{4G^2}{A}\Big(\frac{C_P^2}{\nu}+\frac{2\dt}{3}\Big)+\frac{2A|T_c'|^2C_P^2}{\kappa},\qquad
c:=G^2\Big(\frac{C_P^2}{\nu}+\frac{2\dt}{3}\Big)\|T_c-T_{\rm ref}\|^2 .
\end{equation*}
\end{enumerate}
Both statements also hold without the relaxation; that is, with Step~3 of the scheme omitted and
$Q^{n+1}=\tilde Q^{n+1}$, if $\eta=0$ in \eqref{thma}.
\end{thm}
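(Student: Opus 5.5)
The plan is the standard BDF2/rotational-pressure-correction energy argument, with the explicit nonlinear terms cancelled through \eqref{sQ}. We test \eqref{su} with $4\dt\,\tu^{n+1}$, \eqref{sT} with $4\dt A\theta^{n+1}$, and \eqref{sQ} with $4\dt\tilde Q^{n+1}$, then add the three identities.

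\emph{Cancellation and discrete energy.} After the sum, the terms $\tilde Q^{n+1}((\bu\cdot\nabla)\bu,\tu^{n+1})$ and $A\tilde Q^{n+1}(\bu\cdot\nabla\bar\theta,\theta^{n+1})$ cancel exactly. For $\theta$ and $Q$ we use the identity
\begin{equation*}
2(3a-4b+c)a=|a|^2+|2a-b|^2-|b|^2-|2b-c|^2+|a-2b+c|^2 .
\end{equation*}
For the viscous term we write $-\Delta=\curl\curl-\nabla\nabla\cdot$ and apply \eqref{id2} to $\tu^{n+1}$. This gives $\nu\|\curl\tu^{n+1}\|^2+l\|\tilde u_x^{n+1}\|_\G^2$, plus a term $\nu\|\nabla\cdot\tu^{n+1}\|^2$ that is to be paired with the rotational pressure update.

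For the velocity and pressure we follow \cite{GuermondShen2004,GMS2006}. We write $\tu^{n+1}=\vu^{n+1}+\frac{2\dt}{3}\nabla\phi^{n+1}$ and use orthogonality of $\vu^{n+1}$ to gradients. The BDF2 identity then produces the kinetic part of $E^{n+1}-E^n$, with the non-orthogonal remainder $\|\tu^{n+1}-\vu^{n+1}\|^2$ kept as a nonnegative term. For the pressure term $(\nabla p^n,\tu^{n+1})$ we express $\phi^{n+1}$ through \eqref{spsp}. Note that $p^{n+1}-p^n=\phi^{n+1}-\nu\nabla\cdot\tu^{n+1}$, and $\nabla\cdot\tu^{n+1}=\frac{2\dt}{3}\Delta\phi^{n+1}$ by \eqref{sphi}, so $p^{n+1}-p^n=S\phi^{n+1}$, i.e.\ $\phi^{n+1}=S^{-1}(p^{n+1}-p^n)$. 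This is what produces the $M$-norm \eqref{Mnorm}. Polarizing $(p^n,\,p^{n+1}-p^n)_M$ gives the term $\frac{\dt^2}{3}(\|p^{n+1}\|_M^2-\|p^n\|_M^2)$ together with a remainder $\|p^{n+1}-p^n\|_M^2$. That remainder must be absorbed by the terms $\|\tu^{n+1}-\vu^{n+1}\|^2$ and $\nu\dt\|\nabla\cdot\tu^{n+1}\|^2$ via \eqref{Mprop}.

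I expect this bookkeeping to be the main obstacle: tracking the constants so that only $\frac{\nu}{2}\|\curl\tu^{n+1}\|^2$ survives, which matches $\mathcal D^{n+1}$. The intermediate bound to aim for is
\begin{equation*}
E^{n+1}-E^n+\dt\,\mathcal D^{n+1}+A\kappa\dt\|\nabla\theta^{n+1}\|^2\le \dt\,\mathcal B^{n+1},
\end{equation*}
where $\mathcal B^{n+1}$ collects the buoyancy work $\big(G(T_c+\bar\theta-T_{\rm ref}),\tilde u_y^{n+1}\big)$ and, in case (b), the exchange term $-AT_c'(\bar u_y,\theta^{n+1})$.

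\emph{Relaxation.} Replacing $\tilde Q^{n+1}$ by $Q^{n+1}$ in $E^{n+1}$ changes the scalar part by $\mathcal E_Q(Q^{n+1})-\mathcal E_Q(\tilde Q^{n+1})$, which \eqref{relax} bounds by $\eta\dt\mathcal D^{n+1}$. This leaves $(1-\eta)\mathcal D^{n+1}$ in the estimate. Omitting Step~3 is the case $\eta=0$.

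\emph{Case (a).} Only $(G\bar\theta,\tilde u_y^{n+1})$ remains. We split $\tilde u_y^{n+1}=u_y^{n+1}+\frac{2\dt}{3}\partial_y\phi^{n+1}$. On the first piece we apply Poincar\'e and then \eqref{id1}, giving $\|\nabla\vu^{n+1}\|=\|\curl\vu^{n+1}\|$, and compare with $\curl\tu^{n+1}$. The second piece we move onto $\bar\theta$ by integration by parts, which is where the $\dt G^2C_P^2/3$ part of $B'$ appears. Young's inequality absorbs the velocity factors into the viscous and pressure remainders and leaves $B'\|\nabla\bar\theta\|^2$. Since $\|\nabla\bar\theta\|^2\le 8\|\nabla\theta^n\|^2+2\|\nabla\theta^{n-1}\|^2$, the telescoping weights $10,2$ in $X^n$ are chosen so that the lagged gradients shift onto $\|\nabla\theta^{n+1}\|^2$. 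The total coefficient $10B'$ is then dominated by $A\kappa$, which yields \eqref{thma}.

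\emph{Case (b).} The same splitting is used, but with $\epsilon$ chosen so that the velocity parts are absorbed. This leaves $\frac{G^2}{\cdot}\|\bar\theta\|^2$ and $G^2(\cdots)\|T_c-T_{\rm ref}\|^2$. We bound $\|\bar\theta\|^2$ by $\frac{4}{A}$ times the thermal part of $E^n$. For the exchange term, Young's inequality with $A\kappa\|\nabla\theta^{n+1}\|^2$ and Poincar\'e bound $\|\bar u_y\|^2$ by the kinetic part of $E^n$. Together these give $E^{n+1}\le(1+K\dt)E^n+c\dt$ with the constants $K$ and $c$ stated in (b). Iterating and using $1+K\dt\le e^{K\dt}$ gives the Gr\"onwall bound \eqref{thmb}.
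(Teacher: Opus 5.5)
Your route is the paper's in substance (you test \eqref{su} directly instead of first adding it to \eqref{sp}). But the step you flag as the main obstacle is left open, the tool you name for it does not work, and the outcome you predict would not let the argument close.

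Testing \eqref{su} with $\dt\,\tu^{n+1}$ (your factor $4$ is immaterial) gives:
\begin{itemize}
\item the BDF2 kinetic difference;
\item $\frac{\dt^2}{3}(\|p^{n+1}\|_M^2-\|p^n\|_M^2)$;
\item the \emph{full} $\dt\nu\|\curl\tu^{n+1}\|^2+\dt\,l\|\tilde u_x^{n+1}\|_\G^2$;
\item the three terms $\frac32\|\tu^{n+1}-\vu^{n+1}\|^2+\nu\dt\|\nabla\cdot\tu^{n+1}\|^2-\frac{\dt^2}{3}\|\delta p\|_M^2$, with $\delta p=p^{n+1}-p^n$.
\end{itemize}
Neither inequality in \eqref{Mprop} absorbs the negative term. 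The bound $\|\nabla S^{-1}q\|\le\|q\|_M$ points the wrong way, and $\|q\|_M\le\|\nabla q\|$ brings in $\nabla\Delta\phi^{n+1}$. What closes it is an exact identity, from \eqref{Mnorm} and $\partial_\vn\phi^{n+1}=0$:
\[
\|\delta p\|_M^2=(\nabla S\phi^{n+1},\nabla\phi^{n+1})=\|\nabla\phi^{n+1}\|^2+\tfrac{2\nu\dt}{3}\|\Delta\phi^{n+1}\|^2 .
\]
With $\tu^{n+1}-\vu^{n+1}=\frac{2\dt}{3}\nabla\phi^{n+1}$ and $\nabla\cdot\tu^{n+1}=\frac{2\dt}{3}\Delta\phi^{n+1}$, this shows the three terms add up to $+\frac{\dt^2}{3}\|\delta p\|_M^2$. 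So no vorticity is spent on the pressure, and a positive pressure increment survives. The paper gets this directly: adding \eqref{su} and \eqref{sp} and using \eqref{spsp} as $\nabla p^n+\nabla\phi^{n+1}=\nabla p^{n+1}+\nu\nabla\nabla\cdot\tu^{n+1}$ cancels the $\nu\nabla\nabla\cdot\tu^{n+1}$ terms. The pressure is then tested as $(\nabla p^{n+1},\tu^{n+1})$, which polarizes with a plus sign.

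This matters for the rest of your plan. Suppose only $\dt\mathcal D^{n+1}$ stays on the left, as in your intermediate target. Then the piece $(\mathbf b,\vu^{n+1})$ of the buoyancy work, with $\mathbf b=G(T_c+\bar\theta-T_{\rm ref})\ey$, must eat into $\mathcal D^{n+1}$, so $(1-\eta)\mathcal D^{n+1}$ cannot survive the relaxation. The piece $\frac{2\dt}{3}(\mathbf b,\nabla\phi^{n+1})$ would have nothing to be absorbed by. Your later appeal to ``viscous and pressure remainders'' contradicts that target.

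The right intermediate inequality is \eqref{basic}, with $\dt\nu\|\curl\tu^{n+1}\|^2$ and $\frac{\dt^2}{3}\|\delta p\|_M^2$ on the left. The buoyancy uses exactly half of the vorticity term and all of the pressure term. What is left is $\dt\mathcal D^{n+1}$, which pays for the relaxation.

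For the $\nabla\phi^{n+1}$ piece you do not need to integrate by parts. Cauchy--Schwarz, $\|\bar\theta\|\le C_P\|\nabla\bar\theta\|$, $\|\nabla\phi^{n+1}\|=\|\nabla S^{-1}\delta p\|\le\|\delta p\|_M$ (this is where \eqref{Mprop} enters), and Young give exactly $\frac{\dt^2G^2C_P^2}{3}\|\nabla\bar\theta\|^2$. Moving the derivative onto $\bar\theta$ instead puts $\phi^{n+1}$ itself into the bound. That requires a Poincar\'e--Wirtinger inequality whose constant on the whole cell is $\max\{L/(2\pi),H/\pi\}$, not $C_P$, when $L>2H$.

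The remaining steps match the paper: the reserve weights $10$ and $2$, the constants $K$ and $c$ in (b), the relaxation, and the Gr\"onwall iteration.
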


\begin{proof}
We write $\omega^{n+1}=\curl\tu^{n+1}$ and $\delta p=p^{n+1}-p^n$, and use the BDF2 identity
\begin{equation}\label{bdfid}
2(3a-4b+c)\,a=a^2+(2a-b)^2-b^2-(2b-c)^2+(a-2b+c)^2
\end{equation}
for scalars along with its analogue for inner products (see, e.g., \cite{GuermondShen2004,SXY2018}).
The last term, $(a-2b+c)^2$, is nonnegative; it is the numerical dissipation of BDF2.
We drop it each time \eqref{bdfid} is used (for $\vu$ in Step~2, for $\theta$ and $Q$ in Step~3), which is why those relations become inequalities.

\medskip\noindent\emph{Step 1: the momentum equation in rotational form.} By \eqref{sp}, $\tu^{n+1}-\vu^{n+1}=\frac{2\dt}{3}\nabla\phi^{n+1}$,
so $\nabla\cdot\tu^{n+1}=\frac{2\dt}{3}\Delta\phi^{n+1}$ (since $\nabla \cdot \vu^{n+1} = 0$).
By \eqref{spsp}, we then obtain
\begin{equation*}
\delta p=\phi^{n+1}-\nu\nabla\cdot\tu^{n+1}=S\phi^{n+1}.
\end{equation*}
Adding \eqref{su} and the first equation of \eqref{sp} cancels the terms in $\tu^{n+1}$ on the left and leaves $D_t\vu^{n+1}$:
\begin{align*}
D_t\vu^{n+1}+\tilde Q^{n+1}(\bu&\cdot\nabla)\bu=\nu\Delta\tu^{n+1}-\nabla p^n-\nabla\phi^{n+1}+\mathbf b,\\
&\quad\textrm{with}\quad \mathbf b:=G(T_c+\bar\theta-T_{\rm ref})\ey .
\end{align*}
We then rewrite the two terms on the right which involve $\tu^{n+1}$ and the pressure.
For the viscous term, the identity $-\Delta=\curl\curl-\nabla\nabla\cdot$ gives
\begin{equation*}
\nu\Delta\tu^{n+1}=-\nu\curl\curl\tu^{n+1}+\nu\nabla\nabla\cdot\tu^{n+1}.
\end{equation*}
For the pressure, \eqref{spsp} gives
\begin{equation*}
\nabla p^n+\nabla\phi^{n+1}=\nabla p^{n+1}+\nu\nabla\nabla\cdot\tu^{n+1}.
\end{equation*}
The two terms $\nu\nabla\nabla\cdot\tu^{n+1}$ cancel, and we are left with:
\begin{equation}\label{sum}
D_t\vu^{n+1}+\nu\curl\curl\tu^{n+1}+\nabla p^{n+1}=-\tilde Q^{n+1}(\bu\cdot\nabla)\bu+\mathbf b .
\end{equation}

\medskip\noindent\emph{Step 2: kinetic energy.} We take the inner product of \eqref{sum} with $\dt\,\tu^{n+1}$.
The left-hand side gives three terms: the time derivative, the viscous term, and the pressure.
We treat them in turn.
\begin{itemize}
\item \emph{time derivative.} $\tu^{n+1}-\vu^{n+1}$ is a gradient, and $D_t\vu^{n+1}$ is divergence free with zero normal component, so the two are orthogonal.
Using \eqref{bdfid} (with $a=\vu^{n+1}$, $b=\vu^n$, $c=\vu^{n-1}$) allows:          
\begin{equation*}
\begin{aligned}
\dt(D_t\vu^{n+1},\tu^{n+1})&=\dt(D_t\vu^{n+1},\vu^{n+1})\\
&=\tfrac{1}{4}\big(\|\vu^{n+1}\|^2+\|2\vu^{n+1}-\vu^n\|^2\big)\\
&\qquad-\tfrac{1}{4}\big(\|\vu^n\|^2+\|2\vu^n-\vu^{n-1}\|^2\big)\\
&\qquad+\tfrac{1}{4}\|\vu^{n+1}-2\vu^n+\vu^{n-1}\|^2\\[5pt]
&\ge\tfrac{1}{4}\big(\|\vu^{n+1}\|^2+\|2\vu^{n+1}-\vu^n\|^2\big)\\
&\qquad-\tfrac{1}{4}\big(\|\vu^n\|^2+\|2\vu^n-\vu^{n-1}\|^2\big).
\end{aligned}
\end{equation*}
\item \emph{viscous term.} By \eqref{id2},
$\dt\nu(\curl\curl\tu^{n+1},\tu^{n+1})=\dt\nu\|\omega^{n+1}\|^2+\dt l\|\tilde u_x^{n+1}\|_\G^2$.
\item \emph{pressure.} $(\nabla p^{n+1},\vu^{n+1})=0$ and $\nabla\phi^{n+1}=\nabla S^{-1}\delta p$.
The identity
$2(a,a-b)=|a|^2-|b|^2+|a-b|^2$, applied to the symmetric form of \eqref{Mnorm}, gives
\begin{equation*}
\begin{aligned}
\dt(\nabla p^{n+1},\tu^{n+1})&=\tfrac{2\dt^2}{3}\big(\nabla p^{n+1},\nabla S^{-1}\delta p\big)\\
&=\tfrac{\dt^2}{3}\big(\|p^{n+1}\|_M^2-\|p^n\|_M^2+\|\delta p\|_M^2\big).
\end{aligned}
\end{equation*}
\end{itemize}

\medskip\noindent\emph{Step 3: thermal energy and the scalar.} The inner product of \eqref{sT} with $A\dt\,\theta^{n+1}$ gives (by
\eqref{bdfid}, with the nonnegative term dropped):
\begin{multline*}
\tfrac{A}{4}\big(\|\theta^{n+1}\|^2+\|2\theta^{n+1}-\theta^n\|^2\big)-\tfrac{A}{4}\big(\|\theta^n\|^2+\|2\theta^n-\theta^{n-1}\|^2\big)
+A\dt\kappa\|\nabla\theta^{n+1}\|^2\\
\le-A\dt\,\tilde Q^{n+1}(\bu\cdot\nabla\bar\theta,\theta^{n+1})-A\dt\,T_c'(\bar u_y,\theta^{n+1}).
\end{multline*}
Equation \eqref{sQ} multiplied by $\dt\,\tilde Q^{n+1}$, again by \eqref{bdfid} and with $\mathcal E_Q$ from
\eqref{relax}, gives:
\begin{equation*}
\begin{aligned}
&\mathcal E_Q(\tilde Q^{n+1})-\tfrac{C}{4}\big(|Q^n|^2+|2Q^n-Q^{n-1}|^2\big)\\
&\qquad\le\dt\,\tilde Q^{n+1}\big[((\bu\cdot\nabla)\bu,\tu^{n+1})+A(\bu\cdot\nabla\bar\theta,\theta^{n+1})\big].
\end{aligned}
\end{equation*}

\medskip\noindent\emph{Step 4: cancellation.} Here, we add the relations of Steps~2 and~3.
The terms multiplied by $\tilde Q^{n+1}$ cancel, as in
\cite{LinYangDong2019,Yang2021nme}.
Notably, this is the only place where the scalar is used.
Let $\tilde E^{n+1}$ denote $E^{n+1}$ with $Q^{n+1}$ replaced by
$\tilde Q^{n+1}$, the value before relaxation.
Then,
\begin{equation}\label{basic}
\begin{aligned}
\tilde E^{n+1}-E^n&+\dt\left(\nu\|\omega^{n+1}\|^2+l\|\tilde u_x^{n+1}\|_\G^2+A\kappa\|\nabla\theta^{n+1}\|^2\right)
+\tfrac{\dt^2}{3}\|\delta p\|_M^2 \le\dt(\mathbf b,\tu^{n+1})-A\dt\,T_c'(\bar u_y,\theta^{n+1}).
\end{aligned}
\end{equation}
The right-hand side is the work of the buoyancy and the exchange with the conduction profile, as in \eqref{law}.

\medskip\noindent\emph{Step 5: buoyancy.} After splitting $(\mathbf b,\tu^{n+1})=(\mathbf b,\vu^{n+1})+\frac{2\dt}{3}(\mathbf b,\nabla\phi^{n+1})$, we use three bounds:
\begin{equation}\label{three}
\|u_y^{n+1}\|\le C_P\|\omega^{n+1}\|,\qquad\|\nabla\phi^{n+1}\|\le\|\delta p\|_M,\qquad\|\bar\theta\|\le C_P\|\nabla\bar\theta\|.
\end{equation}
For the first, $u_y^{n+1}=0$ on $\G$, $\curl\vu^{n+1}=\curl\tu^{n+1}=\omega^{n+1}$ (the two differ by the gradient $\phi^{n+1}$, which goes to 0), and
\eqref{id1} gives $\|\nabla\vu^{n+1}\|=\|\omega^{n+1}\|$.
The second is \eqref{Mprop} with $\nabla\phi^{n+1}=\nabla S^{-1}\delta p$.
The third is the Poincar\'e inequality.

\begin{enumerate}[(a)]
\item \emph{No heating.} Here, $\mathbf b=G\bar\theta\ey$ and $T_c'=0$.
By \eqref{three} and Young's inequality, we have the two bounds:
\begin{align*}
\dt(\mathbf b,\vu^{n+1})&\le\dt\,GC_P\|\bar\theta\|\,\|\omega^{n+1}\|
\le\tfrac{\dt\nu}{2}\|\omega^{n+1}\|^2+\dt\,\tfrac{G^2C_P^4}{2\nu}\|\nabla\bar\theta\|^2,\\[5pt]
\tfrac{2\dt^2}{3}(\mathbf b,\nabla\phi^{n+1})&\le\tfrac{\dt^2}{3}\|\delta p\|_M^2+\tfrac{\dt^2G^2C_P^2}{3}\|\nabla\bar\theta\|^2 .
\end{align*}
Combining them gives: $$(\mathbf b,\vu^{n+1})+\frac{2\dt}{3}(\mathbf b,\nabla\phi^{n+1})\leq \frac{\dt\nu}{2}\|\omega^{n+1}\|^2+\frac{\dt^2}{3}\|\delta p\|_M^2+\dt B'\|\nabla\bar\theta\|^2\,.$$ We introduce the notation $a_k:=\|\nabla\theta^k\|^2$ for the sake of convenience.
Using the triangle inequality and the bound $(x+y)^2\le2x^2+2y^2$ gives
\begin{equation*}
\|\nabla\bar\theta\|^2=\|2\nabla\theta^n-\nabla\theta^{n-1}\|^2\le\big(2\sqrt{a_n}+\sqrt{a_{n-1}}\big)^2\le8a_n+2a_{n-1}.
\end{equation*}
These are gradients at the old levels $n$ and $n-1$, whereas the dissipation of the step controls only the new level through $A\kappa a_{n+1}$.
We therefore carry them in a reserve $\Psi^n=\dt B'(\alpha a_n+\beta a_{n-1})$, for which $\Psi^n-\Psi^{n+1}=\dt B'\big(-\alpha a_{n+1}+(\alpha-\beta)a_n+\beta a_{n-1}\big)$.
Matching $8a_n+2a_{n-1}$ requires $\beta=2$ and $\alpha-\beta=8$, so $\alpha=10$. This leaves us with:
\begin{equation*}
\dt B'(8a_n+2a_{n-1})=\Psi^n-\Psi^{n+1}+10\dt B'a_{n+1},\qquad \Psi^n=\dt B'(10a_n+2a_{n-1}).
\end{equation*}
Each $a_k=\|\nabla\theta^k\|^2$ term appears in the buoyancy bounds of the next two steps, with weights $8$ and $2$, and the reserve pays the total $10$ in advance, out of the dissipation $A\kappa a_k$ of the step at which $\theta^k$ is computed; this is the hypothesis $A\kappa\ge10B'$. (The constant $10$ is not sharp; see Remark~\ref{rem:thm}.)
Insert this into \eqref{basic}.
The first two terms are absorbed by the left-hand side, which keeps $\dt\big(\frac{\nu}{2}\|\omega^{n+1}\|^2+l\|\tilde u_x^{n+1}\|_\G^2\big)=\dt\mathcal D^{n+1}$.
With $X^n=E^n+\Psi^n$, we have:
\begin{equation*}
\tilde X^{n+1}-X^n\le-\dt\big(\mathcal D^{n+1}+(A\kappa-10B')a_{n+1}\big),
\end{equation*} where $\tilde X^{n+1}$ is $X^{n+1}$ with $Q^{n+1}$ replaced by $\tilde Q^{n+1}$.

\item \emph{Heating.} Now, $\|\mathbf b\|^2\le2G^2\big(\|\bar\theta\|^2+\|T_c-T_{\rm ref}\|^2\big)$, and
\begin{align*}
\dt(\mathbf b,\vu^{n+1})&\le\tfrac{\dt\nu}{2}\|\omega^{n+1}\|^2+\dt\,\tfrac{G^2C_P^2}{\nu}\big(\|\bar\theta\|^2+\|T_c-T_{\rm ref}\|^2\big),\\
\tfrac{2\dt^2}{3}(\mathbf b,\nabla\phi^{n+1})&\le\tfrac{\dt^2}{3}\|\delta p\|_M^2+\tfrac{2\dt^2G^2}{3}\big(\|\bar\theta\|^2+\|T_c-T_{\rm ref}\|^2\big),\\
A\dt\,|T_c'(\bar u_y,\theta^{n+1})|&\le\tfrac{A\dt\kappa}{2}\|\nabla\theta^{n+1}\|^2+\tfrac{A\dt|T_c'|^2C_P^2}{2\kappa}\|\bu\|^2 .
\end{align*}
By \eqref{En}, $\|\bar\theta\|^2\le\frac{4}{A}E^n$ and $\|\bu\|^2\le4E^n$.
Inserting into \eqref{basic} gives
\begin{equation*}
\tilde E^{n+1}-E^n+\dt\mathcal D^{n+1}\le\dt\,(KE^n+c).
\end{equation*}
\end{enumerate}

\medskip\noindent\emph{Step 6: relaxation} In both cases, the bound keeps all of the dissipation $\dt\mathcal D^{n+1}$---on the right-hand side with a minus sign in (a), and on the left-hand side in (b).

As in \cite{JiangZhangZhao2022}, the relaxation is paid for out of it: by \eqref{relax},
\begin{equation*}
E^{n+1}-\tilde E^{n+1}=\mathcal E_Q(Q^{n+1})-\mathcal E_Q(\tilde Q^{n+1})\le\eta\,\dt\,\mathcal D^{n+1}.
\end{equation*}
Adding this gives \eqref{thma}, and $E^{n+1}\le(1+K\dt)E^n+c\,\dt$.
By induction,
$E^n\le(1+K\dt)^{n-1}(E^1+c\,t_n)\le e^{Kt_n}(E^1+c\,t_n)$.
\end{proof}

\begin{rem}\label{rem:thm} Here, we present three comments on the hypotheses and the scope of the theorem:
\begin{enumerate}
\item \emph{The weights.} $A$ and $C$ enter the scheme only through \eqref{sQ} and \eqref{relax}.
Part (b) holds for any $A>0$, and the computations use $A=1$ unless stated otherwise: $A$ only weights the thermal part of the energy against the kinetic part.
The large value $A\kappa\ge10B'$ is a hypothesis of the proof of (a), not a requirement of the scheme; Section~\ref{sec:decay} shows the energy decaying with both values. 
\item \emph{What part (b) does not say.} In convection, $K$ is large ($K\approx4G^2C_P^2/(\nu A)$), so part (b) does not exclude growth of the discrete solution over times of order one, and it says nothing about the accuracy of $Q$.
Both limitations are visible in Section~\ref{sec:dt}.
\item \emph{Space discretization.} The theorem concerns the time discretization with exact spatial operators.
For the Taylor--Hood implementation of Section~\ref{sec:impl}, we do not prove it, since the discrete velocity is only approximately divergence free.
In the tests of Section~\ref{sec:decay}, and in three dimensions in Section~\ref{sec:3d}, the discrete $X^n$ decreases at every step, as part (a) states.
\end{enumerate}
\end{rem}

\subsection{What the theorem does not control}\label{sec:notcontrolled}
Theorem~\ref{thm:stab} bounds the discrete energy, but it says nothing about how close $Q$ stays to its exact value $1$; the scalar can drift away from $1$ while the energy remains bounded.
This inconsistency is a known feature of SAV schemes and is the motivation for the relaxation of \cite{JiangZhangZhao2022} and for recent corrections that keep the scalar closer to its exact value \cite{ZhangShen2026}.
The only rigorous statement we have about the accuracy of $Q$ is the following.

\begin{lem}\label{lem:contract}
At every step $|Q^{n+1}-1|\le|\tilde Q^{n+1}-1|$, and $Q^{n+1}=1$ if and only if
$\mathcal E_Q(1)-\mathcal E_Q(\tilde Q^{n+1})\le\eta\,\dt\,\mathcal D^{n+1}$. 
\end{lem}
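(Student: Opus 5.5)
The plan is to reduce both claims to elementary facts about the scalar function $f(\xi):=\mathcal E_Q\big(\xi\tilde Q^{n+1}+1-\xi\big)-\mathcal E_Q(\tilde Q^{n+1})$ on $[0,1]$, since by \eqref{relaxQ} the stored value is determined by the $\xi$ selected in \eqref{relax}.

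First, the contraction. By \eqref{relaxQ}, $Q^{n+1}-1=\xi(\tilde Q^{n+1}-1)$ with $\xi\in[0,1]$. Hence $|Q^{n+1}-1|=\xi|\tilde Q^{n+1}-1|\le|\tilde Q^{n+1}-1|$ for whatever $\xi$ the relaxation selects. We still need the selected $\xi$ to exist in $[0,1]$. Here $f(1)=0\le\eta\dt\mathcal D^{n+1}$, because $\eta\ge0$ and $\mathcal D^{n+1}\ge0$. Since $f$ is a continuous quadratic, the admissible set $\{\xi\in[0,1]: f(\xi)\le\eta\dt\mathcal D^{n+1}\}$ is closed and nonempty, so its smallest element exists.

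Second, the characterization. Note that $f(0)=\mathcal E_Q(1)-\mathcal E_Q(\tilde Q^{n+1})$.
\begin{itemize}
\item If $f(0)\le\eta\dt\mathcal D^{n+1}$, then $\xi=0$ is admissible and is the smallest point of $[0,1]$. So $\xi=0$ and $Q^{n+1}=1$.
\item Conversely, suppose $Q^{n+1}=1$. Then $\xi(\tilde Q^{n+1}-1)=0$. If $\xi=0$, admissibility of $\xi=0$ is exactly the stated inequality. If instead $\tilde Q^{n+1}=1$, then $f\equiv0$ on $[0,1]$ and the inequality $0\le\eta\dt\mathcal D^{n+1}$ holds trivially.
\end{itemize}
In both cases the claimed equivalence holds.

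The only delicate point is this degenerate case $\tilde Q^{n+1}=1$, where $\xi$ is not determined by $Q^{n+1}$. It must be handled separately, as above, rather than by appeal to the convexity of $f$ in $\xi$. No property of $f$ beyond continuity and $f(1)=0$ is needed, so convexity need not be invoked at all.
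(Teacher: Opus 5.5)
Your proof is correct and follows the paper's argument. The contraction comes from $Q^{n+1}-1=\xi(\tilde Q^{n+1}-1)$ with $\xi\in[0,1]$, the characterization from the fact that the smallest admissible $\xi$ is $0$ exactly when $\xi=0$ satisfies \eqref{relax}, and the degenerate case $\tilde Q^{n+1}=1$ is treated separately. Your check that the admissible set is closed and nonempty, so that the minimal $\xi$ exists, is a detail the paper leaves implicit.
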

That is, the relaxation only moves the stored value of $Q$ towards $1$, never past it and never away from it.
The lemma compares the stored and computed values of the same step; it does not bound $|Q^{n+1}-1|$ by $|Q^n-1|$.
Each step's $\tilde Q^{n+1}$ carries a new deviation $\frac{2\dt}{3C}R^{n+1}$, so the stored $Q$ can drift when the reset is only partial over many steps, as for $\eta\le10^{-4}$ in Section~\ref{sec:eta}.
\begin{proof}
By \eqref{relaxQ}, $Q^{n+1}-1=\xi(\tilde Q^{n+1}-1)$ with $\xi\in[0,1]$, which gives the inequality.
If $\tilde Q^{n+1}=1$, then $Q^{n+1}=1$, and the condition holds because its left-hand side vanishes.
Suppose now that $\tilde Q^{n+1}\neq1$.
Then, $Q^{n+1}=1$ exactly when $\xi=0$.
Because $\xi$ is the smallest admissible value in $[0,1]$, it equals $0$ exactly when $\xi=0$ satisfies \eqref{relax}.
With $\xi=0$, \eqref{relax} is the stated condition.
\end{proof}

The condition of Lemma~\ref{lem:contract} compares two quantities.
The left-hand side, $\mathcal E_Q(1)-\mathcal E_Q(\tilde Q^{n+1})$, is the change in the discrete energy $E^{n+1}$ when the computed value $\tilde Q^{n+1}$ is replaced by the exact value $1$---this is the \emph{cost} of resetting $Q$.
The right-hand side, $\eta\,\dt\,\mathcal D^{n+1}$, is the part of the dissipation of the step that \eqref{relax} allows the relaxation to spend while keeping the energy estimate of Theorem~\ref{thm:stab}---this is the \emph{budget}.
The relaxation resets $Q$ to $1$ precisely when the cost does not exceed the budget, and a reset with negative cost, which lowers the energy, is always allowed.

To compute the cost, we write $d=\tilde Q^{n+1}-1$ and $m=Q^n-1$.
With $\mathcal E_Q$ from \eqref{relax}, a direct computation gives the exact expression
\begin{equation*}
\mathcal E_Q(1)-\mathcal E_Q(\tilde Q^{n+1})=-\tfrac{3C}{2}\,d+C\,d\,m-\tfrac{5C}{4}\,d^2 .
\end{equation*}
With this in mind, two cases arise:
\begin{enumerate}
\item If $\tilde Q^{n+1}\ge1$ and $Q^n\le\frac{5}{2}$, the cost equals $-C\,d\,\big(\frac{3}{2}-m+\frac{5}{4}d\big)\le0$: lowering a value above $1$ to $1$ does not increase the energy, and the relaxation always does it.
\item If $\tilde Q^{n+1}<1$ and $Q^n=1$, the cost equals $\frac{3C}{2}|d|-\frac{5C}{4}d^2$, which is positive but at most
$\frac{3C}{2}|d|$.
Raising $Q$ to $1$ is then allowed whenever $\frac{3C}{2}|\tilde Q^{n+1}-1|\le\eta\,\dt\,\mathcal D^{n+1}$.
\end{enumerate}

Now, we compare the sizes of the cost and the budget.
Once the relaxation has reset $Q$, we have $Q^n=Q^{n-1}=1$, and \eqref{sQ} reduces to $\frac{3C}{2\dt}(\tilde Q^{n+1}-1)=R^{n+1}$, with $R^{n+1}$ from Step~3 of Section~\ref{sec:scheme}.
Therefore, the cost is at most
\begin{equation*}
\tfrac{3C}{2}\,|\tilde Q^{n+1}-1|=\dt\,|R^{n+1}|,
\end{equation*}
a quantity that does not depend on $C$.
For a smooth solution, the extrapolations $\bu$ and $\bar\theta$ differ from $\tu^{n+1}$ and $\theta^{n+1}$ by $O(\dt^2)$, so $R^{n+1}=O(\dt^2)$ and the cost is $O(\dt^3)$.
By contrast, the budget is $O(\dt)$ since $\mathcal D^{n+1}$ is of the order of the rate at which the flow dissipates kinetic energy through viscosity and wall friction.
In terms of the kinetic energy $E$ and the time-scale $\tau$ used for the weight $C$ in Section~\ref{sec:scheme}, the cost is of order $\dt\,\frac{E}{\tau}\big(\frac{\dt}{\tau}\big)^2$ and the budget of order $\eta\,\dt\,\frac{E}{\tau}$. That is, the budget exceeds the cost by a factor of order $\eta(\tau/\dt)^2$.
Consequently, for any $C$, the relaxation can reset $Q$ to $1$ at every step when $\eta \left(\tau / \dt \right)^2$ is large and the flow is not close to reset, where $\mathcal D^{n+1}$ tends to zero. This is a scaling argument though; when the budget is too small, the reset if only partial. In the runs of Section \ref{sec:eta}, the reset is exact at every step for $\eta > 0.5$. For $\eta = 10^{-2},$ it falls short brieftly at $\dt=3\cdot10^{-3}$, and for $\eta < 10^{-4},$ the reset is only partial.

Lemma~\ref{lem:contract} concerns only the stored value $Q^{n+1}$, which enters the next step through \eqref{sQ}.
The velocity and temperature of step $n+1$, on the other hand, are computed in Step~1 with $\tilde Q^{n+1}$, before the relaxation takes place.
So, the relaxation cannot change the deviation $\tilde Q^{n+1}-1=\frac{2\dt}{3C}R^{n+1}$ with which the velocity and temperature fields are computed.
That deviation is controlled only by the time-step, through $R^{n+1}=O(\dt^2)$, and by the weight $C$.
For this reason, we measure the accuracy of $Q$ by the range of $\tilde Q^n$ over a run, rather than by that of the stored values $Q^n$ (Section~\ref{sec:dt}).

\subsection{Why we do not prove an error estimate}\label{sec:noerr}
We do not pursue an error estimate here.
The obstruction is the combination of four features of the problem:
\begin{enumerate}
\item \emph{Growth constants.} Any energy-method estimate carries a Gronwall factor.
Already the bound of Theorem~\ref{thm:stab}(b) grows like $e^{Kt}$, with $K\approx4G^2C_P^2/(\nu A)$, which is of order $Ra^2$ in units of $\kappa/H^2$.
The error equation contains the same buoyancy and stratification terms, plus the linearization of the convection around the exact solution.
Part of this growth is expected; above $Ra_c$, perturbations of the conduction state grow \cite{Chandrasekhar1961}, and energy methods give decay only below the energy-stability threshold \cite{Straughan2004}.
Moreover, time-dependent convection is sensitive to perturbations, so no estimate uniform in time can hold.
This is the main difference from the settings in which such estimates are available.
In \cite{WagnerWohlmuthZawallich2026} (homogeneous temperature conditions) and \cite{AlhomsiWuZheng2026} (stable stratification), no heat enters through the walls, and error estimates are proved.
The zero-energy-contribution scheme of \cite{SunXiaZhang2025} is for the Cahn--Hilliard--Navier--Stokes system, which is dissipative, and second-order error estimates are proved for it.
The stability analysis of \cite{JiangYang2023anm} also assumes homogeneous conditions.
\item \emph{Explicit convection.} Bounding $\tilde Q^{n+1}(\bu\cdot\nabla)\bu-Q(\vu\cdot\nabla)\vu$ in the error equation requires bounds on the discrete velocity beyond those of the energy estimate (for example, in $W^{1,\infty}$), which are usually obtained by induction under a condition $\dt\le\dt_0$ that depends on the solution.
Section~\ref{sec:dt} shows that a condition of this kind, $\dt\max|\nabla\vu|\lesssim1.2$, is needed in practice, although the energy estimate holds for every $\dt$.
The same bounds would be needed to control $\tilde Q^{n+1}-1=\frac{2\dt}{3C}R^{n+1}$.
\item \emph{Pressure correction.} For the rotational BDF2 pressure correction without convection or buoyancy, the known estimates are of order $\dt^2$ for the velocity in $L^2$, but only $\dt^{3/2}$ for the velocity in $H^1$ and the pressure in $L^2$ \cite{GuermondShen2004}, \cite[Thm.~3.3]{GMS2006}.
These analyses use no-slip walls; the interaction of the rotational term with the Navier-slip condition (Lemma~\ref{lem:walls}) has not been analyzed.
For the Navier--Stokes equations with an energy-based scalar, Li, Shen, and Liu \cite{LiShenLiu2022} prove error estimates for a first-order SAV pressure-correction scheme, and energy stability but no error estimate for their second-order rotational one.
In our manufactured-solution test, we observe second-order also for $p$, so the order-$\frac{3}{2}$ bound may not be sharp in a flat channel. A pressure error estimate with Navier-slip walls remains open.
\item \emph{Relaxation.} By Lemma~\ref{lem:contract}, the relaxation never increases $|Q-1|$ within a step, but the relaxed values enter the next step through the BDF2 combination $4Q^n-Q^{n-1}$. So, a per-step contraction does not by itself transfer an estimate from the unrelaxed scheme. The relaxed schemes of \cite{JiangZhangZhao2022} are likewise analyzed for stability only.
Moreover, the relaxation does not act on $\tilde Q^{n+1}$, with which the velocity and temperature fields are computed.
\end{enumerate}
Without an error estimate, our evidence for convergence is numerical: second-order in velocity, pressure, and temperature for a manufactured solution (Table~\ref{tab:mms}), second-order convergence during the formation of the rolls (Table~\ref{tab:trans}), and time-step independence of all steady-states.
An error estimate for the first-order (backward Euler) version on bounded time intervals, under a step restriction and with constants growing exponentially in $t$, seems feasible with the techniques of \cite{GuermondShen2004,LiShenLiu2022}.
We leave it for future work.

\subsection{Three dimensions}\label{sec:stab3d}
\begin{rem}[Three dimensions]\label{rem:3d}
Lemma~\ref{lem:walls}, and with it Theorem~\ref{thm:stab}, hold without change in a channel
$[0,L_x)\times(0,H)\times[0,L_z)$, periodic in $x$ and $z$, with the Navier-slip condition
$l\vv_\tau+\nu\partial_\vn\vv_\tau=0$ on both tangential components $\vv_\tau=(v_x,v_z)$, and with
$\mathcal D^{n+1}=\frac{\nu}{2}\|\curl\tu^{n+1}\|^2+l\|\tilde\vu_\tau^{n+1}\|_\G^2$.
\begin{itemize}
\item \emph{Identity \eqref{id1}.} Here,
\begin{equation*}
|\nabla\vv|^2-|\curl\vv|^2-(\nabla\cdot\vv)^2=\nabla\cdot\big((\vv\cdot\nabla)\vv-(\nabla\cdot\vv)\vv\big).
\end{equation*}
The contributions of the periodic faces cancel.
On a wall, the normal component of the flux is
$\pm(v_x\partial_xv_y+v_z\partial_zv_y)=0$ since $v_y=0$ there and so do its tangential derivatives.
\item \emph{Identity \eqref{id2}.} With $\boldsymbol\omega=\curl\vv$,
$(\curl\curl\vv,\vv)=\|\boldsymbol\omega\|^2+\int_\G(\vn\times\boldsymbol\omega)\cdot\vv\,ds$.
On a wall
$\omega_z=-\partial_yv_x$ and $\omega_x=\partial_yv_z$, so
\begin{equation*}
(\vn\times\boldsymbol\omega)\cdot\vv=n_y(\omega_zv_x-\omega_xv_z)=-\vv_\tau\cdot\partial_\vn\vv_\tau=\tfrac{l}{\nu}|\vv_\tau|^2 .
\end{equation*}
\end{itemize}
The rest of the analysis (Poincar\'e inequality in $y$, BDF2 identities, pressure terms) does not depend on the dimension.
The three-dimensional code needs the same constant-coefficient solves per step, with one more velocity component.

What {\em does} depend on the walls being flat is Lemma~\ref{lem:walls}.
On curved walls, both identities acquire curvature terms; in two dimensions, the Navier-slip condition reads $\omega=(2\varkappa-l/\nu)\,\vv\cdot\boldsymbol\tau$, with $\varkappa$ the curvature \cite{Kelliher2006}, which would have to be absorbed by the dissipation.
\end{rem}

\section{Implementation}\label{sec:impl}
Equations \eqref{su}--\eqref{sQ} are linear in $\tilde Q^{n+1}$.
So, as is standard for SAV schemes
\cite{SXY2018,LinYangDong2019,Yang2021nme}, the unknowns for the velocity and temperature fields can be split into a part independent of $\tilde Q^{n+1}$ and
a part proportional to it.
Writing
$$\tu^{n+1}=\vu_1+\tilde Q^{n+1}\vu_2 \quad \textrm{and}\quad \theta^{n+1}=\theta_1+\tilde Q^{n+1}\theta_2,$$ we obtain the four problems:
\begin{align*}
&\left(\tfrac{3}{2}-\dt\nu\Delta\right)\vu_1=2\vu^n-\tfrac{1}{2}\vu^{n-1}-\dt\nabla p^n+\dt G(T_c+\bar\theta-T_{\rm ref})\ey,\\
& \left(\tfrac{3}{2}-\dt\nu\Delta\right)\vu_2=-\dt(\bu\cdot\nabla)\bu,\\
&\left(\tfrac{3}{2}-\dt\kappa\Delta\right)\theta_1=2\theta^n-\tfrac{1}{2}\theta^{n-1}-\dt T_c'\bar u_y,\\
&\left(\tfrac{3}{2}-\dt\kappa\Delta\right)\theta_2=-\dt\,\bu\cdot\nabla\bar\theta,
\end{align*}
with the (homogeneous) wall conditions of \eqref{su} for both $\vu_k$ and $\theta_k=0$ on $\G$.
These are four constant-coefficient Helmholtz problems (six scalar ones when counting velocity components).
With the pressure Poisson problem of Step~2, each step solves four constant-coefficient Helmholtz problems (six scalar ones) and one Poisson problem, all independent of $\tilde Q^{n+1}$.
Then, we update $\tilde Q^{n+1}$:
\begin{equation}\label{Qsol}
\tilde Q^{n+1}=\frac{2Q^n-\frac{1}{2}Q^{n-1}+\dt I_1/C}{\frac{3}{2}-\dt I_2/C},\qquad
I_k=((\bu\cdot\nabla)\bu,\vu_k)+A(\bu\cdot\nabla\bar\theta,\theta_k).
\end{equation}
The denominator is at least $\frac{3}{2}$ for every $C>0$.
Taking the inner product of $(\bu\cdot\nabla)\bu$ and the equation for $\vu_2$ provides
$$( (\bu\cdot\nabla)\bu,\vu_2)=-\dt^{-1}\left(\frac{3}{2}\|\vu_2\|^2+\dt\nu\|\nabla\vu_2\|^2+\dt l\|u_{2,x}\|_\G^2\right)\le0.$$ Likewise, $A(\bu\cdot\nabla\bar\theta,\theta_2)\le0$. So, $I_2\le0$, and every step is uniquely solvable.

For the relaxation, with $d=\tilde Q^{n+1}-1$ and $Q=1+\xi d$, the constraint in \eqref{relax} reads $a\xi^2+b\xi+c\le0$ with $a=\frac{5C}{4}d^2$, $b=\frac{Cd}{2}(5-2Q^n)$, $c=\frac{C}{4}(1+(2-Q^n)^2)-\mathcal E_Q(\tilde Q^{n+1})-\eta\dt\mathcal D^{n+1}$ (dividing by $C$, the budget becomes $\eta\dt\mathcal D^{n+1}/C$).
At $\xi=1$, the left side is $-\eta\dt\mathcal D^{n+1}\le0$; so, $\xi=0$ if $c\le0$ and otherwise, $\xi$ is the smaller root $(-b-\sqrt{b^2-4ac})/(2a)$ as in \cite{JiangZhangZhao2022}.
The budget $\mathcal D^{n+1}$ is assembled from $\tu^{n+1}$ before the projection.
For rigid walls ($l\ge10^5$), we omit the term $l\|\tilde u_x^{n+1}\|_\G^2$ from $\mathcal D^{n+1}$; a smaller budget only makes \eqref{relax} stricter, so Theorem~\ref{thm:stab} still applies.
We record the smallest and largest $\tilde Q^n$ of each run as the diagnostic of the accuracy of $Q$.
In the relaxed runs below, the stored $Q^n$ equals $1$ at every step unless stated otherwise.
Algorithm~\ref{alg:step} summarizes one step and shows where the relaxation acts.

\begin{algorithm}[htbp]
\centering
\fbox{\begin{minipage}{0.94\textwidth}\small
\textbf{Given} $\vu^n,\vu^{n-1},p^n,\theta^n,\theta^{n-1}$ and the stored $Q^n,Q^{n-1}$:
\begin{enumerate}
\item Extrapolate $\bu$ and $\bar\theta$, and solve the four constant-coefficient Helmholtz problems for
$\vu_1,\vu_2,\theta_1,\theta_2$ (independent of $\tilde Q^{n+1}$).
\item Compute $\tilde Q^{n+1}$ from \eqref{Qsol}; set $\tu^{n+1}=\vu_1+\tilde Q^{n+1}\vu_2$ and
$\theta^{n+1}=\theta_1+\tilde Q^{n+1}\theta_2$.
\emph{The fields of the step are now fixed---they use $\tilde Q^{n+1}$.}
\item Pressure correction \eqref{sp}: solve the Neumann problem for $\phi^{n+1}$, then update $\vu^{n+1}$ and
$p^{n+1}$.
\item Relaxation \eqref{relax}: compute $\mathcal D^{n+1}$ from $\tu^{n+1}$ and store $Q^{n+1}=\xi\tilde Q^{n+1}+(1-\xi)$.
\emph{Only this stored value enters the next step, through $2Q^{n+1}-\frac{1}{2}Q^n$ in \eqref{Qsol}.}
\item Record $\tilde Q^{n+1}$, and report its range over the run.
\end{enumerate}

For $Q^n=Q^{n-1}=1$, $\tilde Q^{n+1}-1=\frac{2\dt}{3C}R^{n+1}$ is fixed in step 2 by
$\dt$ and $C$; step 4 cannot change it.
Step 4 undoes, as far as the budget $\eta\dt\mathcal D^{n+1}$ allows, the accumulation of such deviations in the
stored $Q$ from one step to the next.
\end{minipage}}
\caption{One time-step of the scheme.}\label{alg:step}
\end{algorithm}

\paragraph{Spatial discretization} We use Taylor--Hood finite elements, continuous piecewise quadratic velocity and temperature and piecewise linear pressure, on a mesh of $N_x\times N_y$ rectangles, each split into two triangles, periodic in $x$.
For rigid walls, the mesh points are placed at $y_j=\frac{H}{2}(1-\cos(j\pi/N_y))$ to refine the mesh towards the walls.
%The spacing at the walls is $\frac{H}{2}(1-\cos\frac{\pi}{N_y})\approx\frac{\pi^2H}{4N_y^2}$ (this corresponds to $2.4\cdot10^{-3}H$ for $N_y=32$ and $1.1\cdot10^{-3}H$ for $N_y=48$), and $\frac{\pi H}{2N_y}$ at mid-height.

The Navier-slip condition is a natural condition, as it contributes $l\int_\G u_xv_x\,ds$ to the bilinear form of $u_x$, while $u_y=0$ and $\theta=0$ are imposed on $\G$.
Since the discrete velocity is only approximately divergence free, both nonlinear terms are written in the skew-symmetric form \cite{Temam1977}:
\begin{equation}\label{skew}
b(\mathbf w,v,z)=\tfrac{1}{2}\big((\mathbf w\cdot\nabla v,z)-(\mathbf w\cdot\nabla z,v)\big),
\end{equation}
applied componentwise to the velocity: $((\bu\cdot\nabla)\bu,\mathbf v)$ and $(\bu\cdot\nabla\bar\theta,\psi)$ in \eqref{su}--\eqref{sQ} are replaced by $b(\bu,\bu,\mathbf v)$ and $b(\bu,\bar\theta,\psi)$.
For the exact solution, the two forms coincide.
Moreover, $b(\mathbf w,z,z)=0$ for any $\mathbf w$, so the right-hand side of \eqref{sQ} is again the energy error of the explicit terms.
The cancellation of the explicit terms against the equation of $Q$ therefore holds exactly for the Galerkin scheme, and so does the bound on the denominator of \eqref{Qsol}, by testing the discrete equation for $\vu_2$ with $\vu_2$.
The rest of the proof of Theorem~\ref{thm:stab} does not carry over automatically: it uses $\nabla\cdot\vu^{n+1}=0$ and the identities of Lemma~\ref{lem:walls} for the projected velocity, which the Taylor--Hood velocity satisfies only approximately.
In Step~2, the pressure increment solves the Neumann problem $(\nabla\phi^{n+1},\nabla q)=-\frac{3}{2\dt}(\nabla\cdot\tu^{n+1},q)$ in the pressure space.
Its matrix is singular, since $\phi^{n+1}$ is determined only up to a constant \cite{BochevLehoucq2005}. So, we add a small $L^2$ term $10^{-10}(\phi^{n+1},q)$, whose solution tends to the minimal-norm (zero-mean) solution as its coefficient tends to zero \cite{KaleemGebhardtRomero2026}, and then set the mean of $\phi^{n+1}$ to zero.
Further, $\vu^{n+1}$ is the $L^2$ projection of $\tu^{n+1}-\frac{2\dt}{3}\nabla\phi^{n+1}$ onto the velocity space with $u_y^{n+1}=0$ on $\G$, and $\nabla\cdot\tu^{n+1}$ in the pressure update is its $L^2$ projection onto the pressure space.
All matrices---three Helmholtz operators, the pressure Laplacian, and three mass matrices (the velocity space with and without $u_y=0$ on $\G$, and the pressure space)---are assembled and factorized once using a sparse, direct solver (so that there are no iteration tolerances).
Each step assembles right-hand sides and solves with the stored factors.

\paragraph{Hydrostatic pressure} The computed pressure is the deviation from the hydrostatic pressure $p_c$ of the conduction state, $\nabla p_c=G(T_c-T_{\rm ref})\mathbf e_y$, which is removed analytically together with the buoyancy it balances.
This changes nothing in the equations, but matters for the discretization since Taylor--Hood elements are not pressure robust \cite{JohnLinkeMerdonNeilanRebholz2017}.
If the quadratic $p_c$ is left to the piecewise linear pressure, the unbalanced part of the buoyancy drives a steady spurious flow, of size $O(Gh^2)$, which near onset is larger than the perturbation whose growth is measured.

\paragraph{Scope of the analysis} The analysis of Section~\ref{sec:stab} is for the time discretization with exact spatial operators.
We do not
claim a fully discrete estimate, since the discrete $\vu^{n+1}$ does not satisfy $\nabla\cdot\vu^{n+1}=0$ exactly.
Instead, we verify the energy behavior of the finite-element scheme numerically (Sections~\ref{sec:decay} and~\ref{sec:3d}).

 \paragraph{Reproducibility} The computations use FreeFEM 4.13 \cite{Hecht2012} (one script, \texttt{nsb.edp}, for all two-dimensional runs, and \texttt{nsb3d.edp} for Section~\ref{sec:3d}) on one core.
The supplementary code contains the script, a list of the command line of every run reported in this work, which fixes all parameters, meshes, and time-steps, and the Python scripts that produce the tables, figures, and fitted values from the run logs.
Meshes are generated by FreeFEM's \texttt{square} with the cosine mapping for rigid walls.
Unless stated otherwise, the initial temperature is one mode, $\theta=a\sin(\pi y/H)\cos(kx)$, with the amplitude $a$ given with each test: $a=10^{-6}$ for the onset of convection, so that the flow stays linear, and $a$ between $10^{-2}$ and $0.2$ for the nonlinear runs, where it only starts the flow.
The run at $Ra=10^6$ uses independent uniform random values in $[-10^{-2},10^{-2}]$ at the nodes of the $P_2$ temperature space, multiplied by $\sin(\pi y/H)$ so that they vanish on the walls, and uses FreeFEM's generator with \texttt{randinit(1)}.
A computed state is called \emph{steady} when the Nusselt number at the bottom wall varies by less than $2\cdot10^{-8}$ relative to its value over the last quarter of the run and the two wall values agree.
There are two exceptions, also noted where they occur:
\begin{enumerate}
\item the free-slip rolls at $r=1.02$, which approach their steady-state geometrically (by a factor $0.17$ per $6$ time units at $t=60$), so that the remaining change is about $3\cdot10^{-7}$ relative difference (within $8\cdot10^{-6}$ of $Nu-1$).
\item the runs with $C=1$ and $\dt=3\cdot10^{-3}$ in Tables~\ref{tab:dt} and~\ref{tab:eta}, in which $\tilde Q^n$ still oscillates at $t=4$ and $Nu$ varies by $3.5$--$6\cdot10^{-7}$ relative difference, although its value agrees with the converged one to the digits shown.
\end{enumerate}

\section{Numerical experiments}\label{sec:num}
Unless stated otherwise, $A=1$, $\eta=0.95$, $C=1$, and the relaxation is on.
With $Q^n=1$, a steady-state of the scheme has $\tilde Q^n=1$ and is independent of $\dt$ and $C$.
The parameter $C$ matters for transient flows, where the solution changes in time before reaching a steady state, and time-dependent flows (Section~\ref{sec:dt}).

\subsection{Manufactured solution}\label{sec:mms}
On $\Omega=[0,2\pi)\times(-1,1)$ we take the exact solution:
\begin{align}\label{mms}
&\vu=\big(\partial_y\psi,-\partial_x\psi\big)=\sin t\,\big(-2y\sin x,\,-(1-y^2)\cos x\big),\\[2pt]
&\psi=\sin t\sin x\,(1-y^2),\\[2pt]
&p=\sin t\cos x\sin y,\\[2pt]
 &T=\tfrac{1-y}{2}+\tfrac{1}{2}\cos t\,(1-y^2)\big(\sin x+\tfrac{y}{2}\cos x\big),
\end{align} with $T_b=1$, $T_t=0$, $\nu=0.1$, $\kappa=0.05$, $G=1$, $T_{\rm ref}=0.5$, and $l=1$. 
The velocity is divergence free with $u_y=0$ on the walls, and $T$ takes the wall values $T_b$ and $T_t$. 
We add forcing terms that make \eqref{mms} an exact solution \cite{Roache2002}: $\mathbf f=(f_x,f_y)$ in the momentum equation, $f_T$ in the heat equation, and $g_w$ in the Navier-slip condition, obtained by substituting \eqref{mms} into each (\ref{app:mms} gives them in full, with the rest of the setup). 
We use a uniform $40\times40$ mesh. Its spatial error is fixed while $\dt$ varies, so we measure the error of the time discretization against a run with a much smaller $\dt$ on the same mesh. 
Table~\ref{tab:mms} and Figure~\ref{fig:conv}(a) show observed second-order convergence in velocity, pressure, and temperature up to $t=0.5$ for the rotational scheme.
The standard incremental form (the rotational term dropped from \eqref{sp}) imposes $\partial_\vn p^{n+1}=\partial_\vn p^n$ on the walls, which creates a numerical boundary layer of width $O(\sqrt{\nu\dt})$ \cite{GMS2006}.

On the uniform mesh (element height $0.05$), this layer of width $\sqrt{\nu\dt}\le0.035$ for $\dt\le0.0125$ is thinner than the elements, and the standard form is also second-order, with a pressure error about $1.2$ times that of the rotational form (Table~\ref{tab:mms}).
To resolve the layer, we repeat the test on a mesh graded towards the walls, with spacing $3\cdot10^{-3}$ at the walls, below $\sqrt{\nu\dt}$ for all steps used.
There, the pressures of the standard and rotational forms differ by $1.6\cdot10^{-4}$, $2.2\cdot10^{-5}$, and $1.6\cdot10^{-6}$ at $\dt=0.0125$, $0.003125$, and $0.00078125$, which corresponds to orders $1.45$ and $1.88$ between successive steps.
At the two smaller steps, this difference is three to four times the time error of the rotational form on the uniform mesh (Table~\ref{tab:mms}, extrapolated at second-order to $\dt=0.00078125$).
Once the layer is resolved, the standard form is measurably less accurate with the pressure, as the theory predicts \cite{GuermondShen2004,GMS2006}. In these tests, we note its observed order exceeds the first-order bound of that theory.

\emph{Spatial convergence.} Table~\ref{tab:space} gives the errors against the exact solution on uniform $N\times N$ meshes, with $\dt=2.5\cdot10^{-4}$.
For $P_2$--$P_1$ elements and a smooth solution, we should expect order $3$ for the velocity
and the temperature in $L^2$, and order $2$ for their gradients and for the pressure in $L^2$ \cite{GiraultRaviart1986}.
The observed orders approach these values, $3.04$, $3.00$, $2.02$, $2.00$, and $2.01$ between the two finest meshes, and $\|\nabla\cdot\vu_h\|$, which is not zero for Taylor--Hood elements, decreases at order $2$.
The time error does not affect the finest mesh; rerunning $N=64$ with $\dt=5\cdot10^{-4}$ changes the errors by at most $0.12\%$, as expected from Table~\ref{tab:mms}, which puts the time error at $\dt=2.5\cdot10^{-4}$ below $10^{-7}$.

\begin{table}[htbp]
\centering
\setlength{\tabcolsep}{2pt}\footnotesize
\begin{tabular}{r cc cc cc c c}
\toprule
& \multicolumn{6}{c}{rotational, relaxed} & standard & unrelaxed\\
\cmidrule(lr){2-7}\cmidrule(lr){8-8}\cmidrule(lr){9-9}
$\dt$ & $\vu$ & order & $p$ & order & $T$ & order & $p$ & $|Q^N-1|$\\
\midrule
0.05 & $2.40{\times}10^{-3}$ & -- & $1.68{\times}10^{-3}$ & -- & $3.56{\times}10^{-3}$ & -- & $1.70{\times}10^{-3}$ & $3.54{\times}10^{-6}$\\
0.025 & $6.02{\times}10^{-4}$ & 1.99 & $4.25{\times}10^{-4}$ & 1.99 & $8.83{\times}10^{-4}$ & 2.01 & $5.03{\times}10^{-4}$ & $2.01{\times}10^{-7}$\\
0.0125 & $1.51{\times}10^{-4}$ & 2.00 & $1.07{\times}10^{-4}$ & 1.99 & $2.20{\times}10^{-4}$ & 2.01 & $1.27{\times}10^{-4}$ & $1.23{\times}10^{-8}$\\
0.00625 & $3.78{\times}10^{-5}$ & 2.00 & $2.67{\times}10^{-5}$ & 2.00 & $5.47{\times}10^{-5}$ & 2.00 & $3.16{\times}10^{-5}$ & $7.70{\times}10^{-10}$\\
0.003125 & $9.44{\times}10^{-6}$ & 2.00 & $6.66{\times}10^{-6}$ & 2.00 & $1.36{\times}10^{-5}$ & 2.01 & $7.90{\times}10^{-6}$ & $4.75{\times}10^{-11}$\\
\bottomrule
\end{tabular}
\caption{Manufactured solution (Section~\ref{sec:mms}), Taylor--Hood elements on a $40\times40$ mesh: $L^2$ differences at
$t=0.5$ from a run with $\dt=1.953125\cdot10^{-4}$ on the same mesh, which isolate the error of the time
discretization, and observed orders, for the rotational scheme with relaxation. The next column is the pressure of the
standard incremental form, measured in the same way; the last is $|Q^N-1|$ without relaxation.}
\label{tab:mms}
\end{table}

\begin{table}[htbp]\centering\small\setlength{\tabcolsep}{3pt}
\resizebox{\textwidth}{!}{\begin{tabular}{rllllllllllll}\toprule
$N$ & \multicolumn{2}{c}{$\|\vu-\vu_h\|$} & \multicolumn{2}{c}{$|\vu-\vu_h|_{1}$} & \multicolumn{2}{c}{$\|\nabla\cdot\vu_h\|$} & \multicolumn{2}{c}{$\|T-T_h\|$} & \multicolumn{2}{c}{$|T-T_h|_{1}$} & \multicolumn{2}{c}{$\|p-p_h\|$}\\
& error & order & error & order & error & order & error & order & error & order & error & order\\\midrule
8 & $8.36\cdot10^{-3}$ &  & $1.20\cdot10^{-1}$ &  & $9.30\cdot10^{-2}$ &  & $4.01\cdot10^{-3}$ &  & $7.48\cdot10^{-2}$ &  & $2.11\cdot10^{-2}$ & \\
16 & $9.12\cdot10^{-4}$ & 3.20 & $2.72\cdot10^{-2}$ & 2.14 & $2.00\cdot10^{-2}$ & 2.22 & $5.27\cdot10^{-4}$ & 2.93 & $1.91\cdot10^{-2}$ & 1.97 & $4.97\cdot10^{-3}$ & 2.09\\
32 & $1.06\cdot10^{-4}$ & 3.11 & $6.47\cdot10^{-3}$ & 2.07 & $4.55\cdot10^{-3}$ & 2.13 & $6.62\cdot10^{-5}$ & 2.99 & $4.79\cdot10^{-3}$ & 2.00 & $1.22\cdot10^{-3}$ & 2.02\\
64 & $1.28\cdot10^{-5}$ & 3.04 & $1.59\cdot10^{-3}$ & 2.02 & $1.10\cdot10^{-3}$ & 2.05 & $8.27\cdot10^{-6}$ & 3.00 & $1.20\cdot10^{-3}$ & 2.00 & $3.05\cdot10^{-4}$ & 2.01\\
\midrule expected &  & 3 &  & 2 &  & 2 &  & 3 &  & 2 &  & 2\\
\bottomrule\end{tabular}}
\caption{Spatial convergence of the Taylor--Hood discretization for the manufactured solution (\ref{app:mms}): errors at $t=0.5$ on uniform $N\times N$ meshes with $\dt=2.5\cdot10^{-4}$, against the exact functions; $|\cdot|_1$ is the $H^1$ seminorm, and the pressures are compared after removing their means. Orders from successive meshes. The expected orders are those of $P_2$--$P_1$ elements for smooth solutions.}
\label{tab:space}
\end{table}

\begin{figure}[htbp]
\centering
\includegraphics[width=\textwidth]{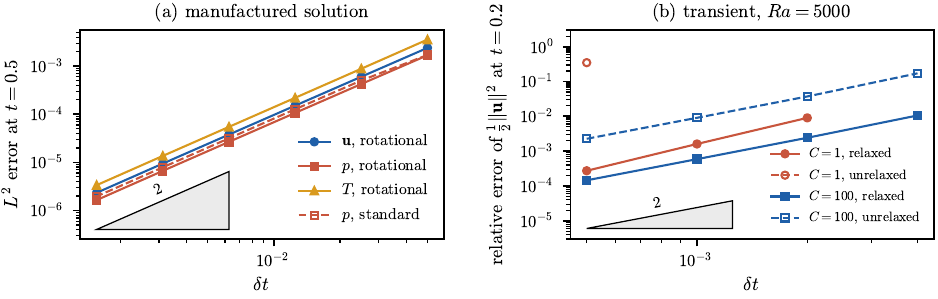}
\caption{Convergence in time. (a) Manufactured solution (Section~\ref{sec:mms}): $L^2$ errors at $t=0.5$ of the rotational scheme with relaxation, and the pressure error of the standard incremental form (dashed), which on this uniform mesh is also second-order. (b) The transient period at $Ra=5000$ (Section~\ref{sec:dt}): relative error of $\frac{1}{2}\|\vu\|^2$ at $t=0.2$ against a computation with $\dt=3.125\cdot10^{-5}$, for the weights $C=1$ and $C=100$ of the $Q$ equation, with and without relaxation.
Points with errors above $1$ (blow-up) are omitted.
The triangles show slope $2$.}
\label{fig:conv}
\end{figure}

\subsection{Onset of convection}\label{sec:onset}
For a cell of height $H=1$ with $\nu=\kappa=1$, $T_b=1$, $T_t=0$, $G=Ra$, we perturb the conduction state by $\theta=10^{-6}\sin(\pi y)\cos(kx)$ in a cell of one wavelength $L=2\pi/k$ and measure the growth rate of the kinetic energy.
For free-slip walls ($l=0$), the linear problem is solved exactly by $\sigma=k\sqrt{Ra}/K-K^2$, $K^2=k^2+\pi^2$ \cite{Chandrasekhar1961, Rayleigh1916}.
The wave number $k=\pi/\sqrt2$ becomes unstable first, at $Ra_c=27\pi^4/4\approx657.511$.
Table~\ref{tab:growth} and Figure~\ref{fig:onset} show second-order convergence of the computed growth rates in $\dt$.

On the $32\times32$ mesh, the spatial error of $\sigma$ is about $6\cdot10^{-5}$, which dominates near onset where the time error is smallest.
There, the successive differences $\sigma(\dt)-\sigma(\dt/2)$ of Table~\ref{tab:growth} are below $10^{-5}$, so the orders computed from their ratios are less reliable.
The spatial error of this eigenvalue decreases like $h^4$ (to about $10^{-5}$ on the $48\times48$ mesh) and the zero of a quadratic fit of $\sigma(Ra)$ on the $48\times48$ mesh with $\dt=0.005$ gives $Ra_c=\RacFS$.

For rigid walls, we take $l/\nu=10^6$, a finite-slip approximation to no slip, with $k=3.1163$ and the mesh graded towards the walls.
With $\dt=0.01$, the same procedure gives $Ra_c=1707.863$, $1707.789$, and $\RacRG$ on the $24^2$, $32^2$, and $48^2$ meshes, against the no-slip value $1707.76$ \cite{Chandrasekhar1961}.
The fits use $Ra=1650$, $1700$, $1750$, and $1800$.
On the $48^2$ mesh, we add $Ra=1710$, just above onset, which lies on the fit of the other four to within $2\cdot10^{-5}$ in $\sigma$ and leaves the fitted $Ra_c$ unchanged to the digits shown.
The error of this finite-slip approximation is below the spatial error.
On the $32\times32$ mesh, the same procedure gives $Ra_c=1609.88$ (extrapolated slightly below the fitted range), $1697.00$, $1706.71$, and $1707.69$ for $l/\nu=10^2,\dots,10^5$, and the deficits from the value at $10^6$ ($97.9$, $10.8$, $1.08$, $0.098$) are approximately proportional to $\nu/l$. So, the slip lowers $Ra_c$ by about $0.01$ at $l/\nu=10^6$, about $6\cdot10^{-6}$ relative difference and below the spatial error of the mesh.

\begin{table}[htbp]
\centering
\setlength{\tabcolsep}{3pt}\footnotesize
\begin{tabular}{r r cccc c}
\toprule
$Ra$ & exact $\sigma$ & $\dt=0.04$ & $0.02$ & $0.01$ & $0.005$ & orders\\
\midrule
500 & $-1.89446$ & $-9.47{\times}10^{-2}$ & $-2.12{\times}10^{-2}$ & $-5.16{\times}10^{-3}$ & $-1.32{\times}10^{-3}$ & 2.20, 2.06\\
600 & $-0.66227$ & $-1.14{\times}10^{-2}$ & $-2.83{\times}10^{-3}$ & $-7.46{\times}10^{-4}$ & $-2.30{\times}10^{-4}$ & 2.05, 2.01\\
650 & $-0.08481$ & $-3.20{\times}10^{-4}$ & $-1.26{\times}10^{-4}$ & $-7.53{\times}10^{-5}$ & $-6.20{\times}10^{-5}$ & 1.92, 1.95\\
660 & $0.02799$ & $-4.55{\times}10^{-5}$ & $-5.31{\times}10^{-5}$ & $-5.60{\times}10^{-5}$ & $-5.69{\times}10^{-5}$ & 1.36, 1.70\\
700 & $0.47085$ & $-4.69{\times}10^{-3}$ & $-1.22{\times}10^{-3}$ & $-3.47{\times}10^{-4}$ & $-1.29{\times}10^{-4}$ & 1.99, 2.00\\
800 & $1.52553$ & $-5.02{\times}10^{-2}$ & $-1.35{\times}10^{-2}$ & $-3.51{\times}10^{-3}$ & $-9.26{\times}10^{-4}$ & 1.87, 1.96\\
1000 & $3.45301$ & $-2.44{\times}10^{-1}$ & $-7.17{\times}10^{-2}$ & $-1.91{\times}10^{-2}$ & $-4.93{\times}10^{-3}$ & 1.71, 1.89\\
\bottomrule
\end{tabular}
\caption{Onset with free-slip walls ($l=0$, $k=\pi/\sqrt2$, Taylor--Hood elements on a $32\times32$ mesh): error of the
computed growth rate, $\frac{1}{2}\frac{d}{dt}\ln\|\vu\|^2$ fitted on $t\in[1,2]$, against the exact value $\sigma=k\sqrt{Ra}/K-K^2$,
and orders in $\dt$ from successive differences $\sigma(\dt)-\sigma(\dt/2)$, which cancel the spatial error.}
\label{tab:growth}
\end{table}

\begin{figure}[htbp]
\centering
\includegraphics[width=\textwidth]{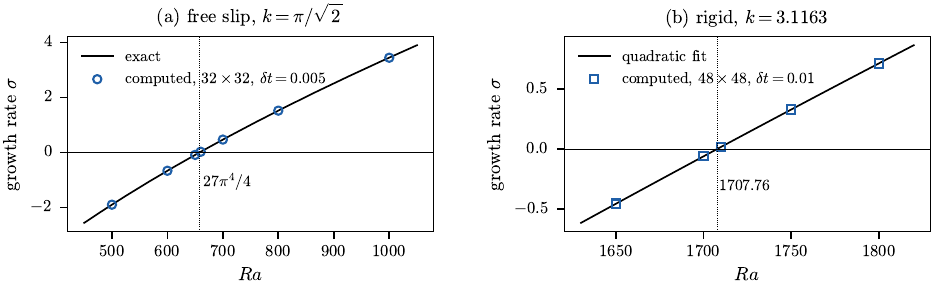}
\caption{At the onset of convection (Section~\ref{sec:onset}).
We plot the growth rate of the kinetic energy against $Ra$. (a) Free-slip walls, $k=\pi/\sqrt2$, uniform $32\times32$ mesh, $\dt=0.005$, and the exact growth rate $\sigma=k\sqrt{Ra}/K-K^2$. (b) Rigid walls, $k=3.1163$, $48\times48$ mesh graded towards the walls, $\dt=0.01$, and the quadratic fit whose zero gives $Ra_c=\RacRG$.}
\label{fig:onset}
\end{figure}

\subsection{Steady rolls}\label{sec:rolls}
The onset tests of this section use perturbations of amplitude $10^{-6}$, for which the convection and advection terms are negligible and the scalar $Q$ plays no role.
Steady rolls just above onset are the simplest flows in which these terms matter---the nonlinear terms balance the buoyancy, and $\tilde Q^{n+1}$ carries their energy error at every step.

As such, investigating steady rolls is a natural next step for several reasons.
First, steady rolls have a nice analytical reference to compare against; close to the onset of convection, weakly nonlinear theory predicts the heat transport.
Second, a steady-state is where a drift of $Q$ does the most harm; without relaxation, the scheme converges to the steady-state of a problem with rescaled convection (Section~\ref{sec:dt}).
With $Q=1$ at every step, by contrast, a steady-state of the scheme does not depend on $\dt$, so these runs measure the spatial discretization and the consistency of $Q$, separately from the time error.
Third, the rolls computed here are the reference flows for the rest of the section: the rigid-wall rolls at $Ra=5000$ for the time-step study of Section~\ref{sec:dt}, and the free-slip rolls at $r=1.5$ and $2$ for the faster flows discussed there.
This section also describes how we measure the Nusselt number.

Near onset, steady free-slip rolls transport heat as
\begin{equation*}
Nu-1\approx2\epsilon,\qquad \epsilon=\frac{Ra-Ra_c}{Ra_c},
\end{equation*}
independently of the Prandtl number \cite{MalkusVeronis1958}.
The three-mode truncation of Lorenz \cite{Lorenz1963} gives a closed form for the whole branch: with $r=Ra/Ra_c$, its steady-state and temperature expansion give
\begin{equation*}
Nu-1=\frac{2(r-1)}{r}=2\epsilon-2\epsilon^2+O(\epsilon^3),
\end{equation*}
which agrees with the weakly nonlinear result to first order in $\epsilon$.
In a steady-state, the heat flux through every horizontal line is the same, so the Nusselt number can be measured
either at a wall or in the volume:
\begin{equation*}
Nu=\frac{1}{L}\int_0^L\frac{\partial_yT}{T_c'}\,dx\ \ \text{on either wall}\qquad\text{or}\qquad
Nu=1+\frac{\langle u_y\theta\rangle}{\kappa|T_c'|}.
\end{equation*}
With finite elements, these two definitions are not equally accurate though.
The wall value inherits the $O(h^2)$ error of the gradient of the discrete temperature, while the volume average converges like the energy, here at $O(h^4)$.
On the $24\times24$ mesh, they differ by $0.5\%$ of $Nu-1$.
Therefore, we report the volume average throughout.
This is a standard definition in simulations of Rayleigh--B\'enard convection, where the agreement between the wall, volume, and dissipation-based values is also used to check the resolution \cite{StevensVerziccoLohse2010,Kooij2018}.
\begin{itemize}
\item \emph{Free-slip walls.} We start from $\theta=10^{-2}\sin(\pi y)\cos(kx)$ and integrate to $t=60$, with $\dt=0.02$ for $r\le1.2$ and $\dt=0.01$, $C=100$ for $r=1.5$ and $2$.
Table~\ref{tab:rolls} (left) lists the steady-states.
The ratio $(Nu-1)/\epsilon$ decreases from $1.967$ at $\epsilon=0.02$ to $1.920$ at $\epsilon=0.05$ and $1.848$ at $\epsilon=0.1$.
This extrapolates to $2.00$ as $\epsilon\to0$ and departs from the Lorenz value further from onset.
The values on the $32\times32$ mesh are converged in space to within $1.5\cdot10^{-5}$, by $h^4$ extrapolation from the $24\times24$ and $32\times32$ meshes.
\item \emph{Rigid walls.} For $k=3.117$ and $Pr=1$, we start from $\theta=0.1\sin(\pi y)\cos(kx)$ and integrate to $t=4$ with $C=100$ and $\dt=2\cdot10^{-3}$ ($10^{-3}$ at $Ra=10^4$).
Table~\ref{tab:rolls} (right) lists $Nu$ and the kinetic energy of the steady rolls, and Figure~\ref{fig:flows}(a) shows those at $Ra=5000$.
To check the spatial resolution, we repeated the run at $Ra=5000$ on a $48\times48$ mesh.
Going from the $32\times32$ to the $48\times48$ mesh changes $Nu$ from $2.110484$ to $2.110491$ and the kinetic energy from $142.3465$ to $142.3491$, a relative difference of $3\cdot10^{-6}$ and $2\cdot10^{-5}$ between their respective values.
The values of Table~\ref{tab:rolls} (right) are accurate to at least four significant digits in the kinetic energy and five in $Nu$.
\item \emph{Failed attempts.} Our first runs at $r=1.5$ and $2$ used $\dt=0.02$ with $C=1$, and they illustrate Section~\ref{sec:dt}.
At $r=2$, $\tilde Q^n$ collapsed to $6\cdot10^{-9}$ and the solution blew up at $t=2.4$.
At $r=1.5$, the run did not blow up.
Instead, it settled into a periodic cycle, with $\tilde Q^n$ oscillating between $0.63$ and $0.91$ and $Nu$ between $1.4$ and $2.1$, rather than reaching the steady rolls with $Nu=1.717$.
Since the stored $Q^n$ was $1$ at every step, only the range of $\tilde Q^n$ shows that this cycle is not a physical time-dependent flow.
It is produced by the scheme when $\dt$ is too large for $C=1$: with $\dt=0.01$ and $C=100$, the same run reaches the steady rolls.
\end{itemize}

\begin{table}[htbp]
\centering
\footnotesize
\begin{tabular}{c c c c @{\hspace{2em}} c c c}
\toprule
\multicolumn{4}{c}{free slip} & \multicolumn{3}{c}{rigid}\\
\cmidrule(lr){1-4}\cmidrule(lr){5-7}
$r$ & $Nu$ & $(Nu-1)/\epsilon$ & $2(r-1)/r\ /\ \epsilon$ & $Ra$ & $Nu$ & $\frac{1}{2}\|\vu\|^2$\\
\midrule
$1.02$ & $1.039330$ & $1.9665$ & $1.9608$ & $3000$ & $1.662664$ & $51.6841$\\
$1.05$ & $1.096006$ & $1.9201$ & $1.9048$ & $5000$ & $2.110484$ & $142.3465$\\
$1.1$ & $1.184755$ & $1.8476$ & $1.8182$ & $10000$ & $2.646399$ & $405.8035$\\
$1.2$ & $1.343910$ & $1.7195$ & $1.6667$ &  & & \\
$1.5$ & $1.716901$ & $1.4338$ & $1.3333$ &  & & \\
$2$ & $2.141476$ & $1.1415$ & $1.0000$ &  & & \\
\bottomrule
\end{tabular}
\caption{Steady rolls, $Pr=1$, Taylor--Hood elements on a $32\times32$ mesh, $Nu$ from the volume average
$1+\langle u_y\theta\rangle/(\kappa|T_c'|)$. Left: free-slip walls, $k=\pi/\sqrt2$, $Ra=rRa_c$, uniform mesh; the ratio
$(Nu-1)/\epsilon$ against the Lorenz value $2(r-1)/r/\epsilon$. Right: rigid walls, $k=3.117$, mesh graded towards the walls;
$Nu$ and the kinetic energy.}
\label{tab:rolls}
\end{table}

\begin{figure}[htbp]
\centering
\includegraphics[width=\textwidth]{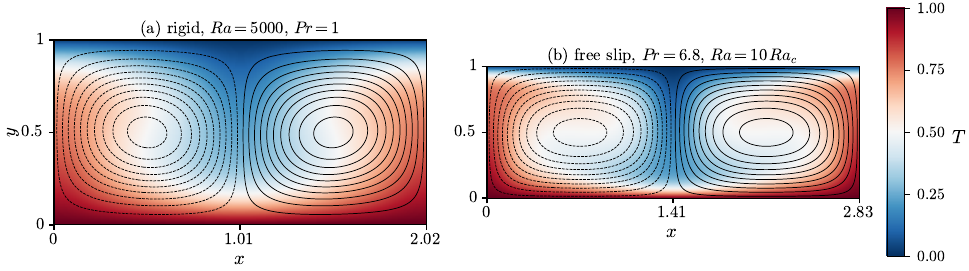}
\caption{Temperature and streamlines (contours of the stream function; dashed: clockwise) over one wavelength. (a) Steady rolls, rigid walls, $Ra=5000$, $Pr=1$, $k=3.117$ (Section~\ref{sec:dt}), $32\times32$ mesh. (b) Steady free-slip cells, $Pr=6.8$, $Ra=10\,Ra_c$ (Table~\ref{tab:mw}), $48\times32$ mesh.}
\label{fig:flows}
\end{figure}

\subsection{Comparison against classic results}\label{sec:bench}
For these comparisons, we set $\nu=Pr$, $\kappa=1$, and $G=Ra\,Pr$, so that time is measured in units of the thermal diffusion time, take one wavelength $L=2\pi/\alpha$ of rolls of wave number $\alpha$, and run with relaxation from $\theta=0.1\sin(\pi y)\cos(\alpha x)$ until the state is steady in the sense of Section~\ref{sec:impl} (variation of $Nu$ below $2\cdot10^{-8}$ relative difference over the last quarter of the run).
Steady-states do not depend on $C$ or $\dt$; we take $C=100$ (or $10^3$ at $Ra=10^4$ and for the free-slip cells) and $\dt$ from $5\cdot10^{-3}$ to $10^{-3}$, chosen for stability only.
As in Section~\ref{sec:rolls}, $Nu$ is the volume average.

\begin{itemize}
\item \emph{Rigid walls.} Clever and Busse \cite{CleverBusse1974} computed steady rolls with a Galerkin method for $Pr=0.71$ and $7$ and several wave numbers.
They estimate their Nusselt numbers to be within $\frac{1}{2}$--$1\%$ of the exact values, and note that as the truncation is refined, the heat transport first overshoots.
The overshoot is about $2$--$3\%$ at moderate Prandtl number, and then decreases to its limit as truncation error is further refined.
Table~\ref{tab:cb} and Figure~\ref{fig:bench}(a) show that, for $\alpha=3.117$ and $Ra\le10^4$, the present values are below theirs in every case, by $0.09$--$0.42\%$, consistent with that estimate.
These values are mesh independent at the precision shown: on the $48\times48$ mesh the two cases at $Ra=10^4$ (the most demanding ones) change by $1.5\cdot10^{-5}$ ($Pr=0.71$) and $1.5\cdot10^{-6}$ ($Pr=7$).

\begin{table}[htbp]
\centering
\footnotesize
\begin{tabular}{r ccr ccr}
\toprule
& \multicolumn{3}{c}{$Pr=0.71$} & \multicolumn{3}{c}{$Pr=7$}\\
\cmidrule(lr){2-4}\cmidrule(lr){5-7}
$Ra$ & present & CB & diff. & present & CB & diff.\\
\midrule
2000 & 1.2105 & 1.212 & $-0.12\%$ & 1.2130 & 1.214 & $-0.09\%$\\
3000 & 1.6602 & 1.663 & $-0.17\%$ & 1.6626 & 1.667 & $-0.27\%$\\
5000 & 2.1103 & 2.116 & $-0.27\%$ & 2.1031 & 2.112 & $-0.42\%$\\
10000 & 2.6525 & 2.661 & $-0.32\%$ & 2.6080 & 2.618 & $-0.38\%$\\
\bottomrule
\end{tabular}
\caption{Rigid walls, rolls of wave number $\alpha=3.117$: Nusselt number against Table~1 of Clever and Busse
\cite{CleverBusse1974} (CB), and the relative difference. Present values are steady states reached from
$\theta=0.1\sin(\pi y)\cos(\alpha x)$ with relaxation, on a $32\times32$ mesh graded towards the walls; $Nu$ is the volume
average of Section~\ref{sec:rolls}.}
\label{tab:cb}
\end{table}

\item \emph{Free-slip walls.} Moore and Weiss \cite{MooreWeiss1973} list, for $Pr=6.8$ and cells of width $\sqrt2$ (so $k=\pi/\sqrt2$, as in Section~\ref{sec:rolls}), the Nusselt numbers of Veronis, from a truncated Fourier expansion with errors below $1\%$, and their own finite-difference values on two meshes.
Table~\ref{tab:mw} and Figure~\ref{fig:bench}(b) show that the present values agree with those of Veronis to within the rounding of the published values up to $r=10$.
Figure~\ref{fig:flows}(b) shows the cells at $r=10$.
\end{itemize}

We have restricted the comparison to steady rolls that are stable and reached by time-stepping.
Larger Rayleigh numbers, other wave numbers, and time-dependent flows would need finer meshes and smaller time-steps with the present finite-element code.
The Ra dependence of the weight $C$ is illustrated in Sections~\ref{sec:dt} and~\ref{sec:ra1e6}.

\begin{table}[htbp]
\centering
\footnotesize
\begin{tabular}{r c c c c}
\toprule
& & \multicolumn{3}{c}{Moore and Weiss, Table 1}\\
\cmidrule(lr){3-5}
$r$ & present & Veronis & $24\times18$ & $48\times34$\\
\midrule
1.1 & 1.1848 & 1.18 & 1.16 & --\\
1.2 & 1.3440 & 1.34 & 1.32 & --\\
1.4 & 1.6067 & 1.61 & 1.58 & --\\
2 & 2.1424 & 2.14 & 2.12 & 2.14\\
3 & 2.6784 & 2.68 & 2.65 & --\\
4 & 3.0402 & 3.04 & 3.01 & --\\
6 & 3.5535 & 3.55 & 3.52 & --\\
10 & 4.2435 & 4.24 & 4.19 & --\\
\bottomrule
\end{tabular}
\caption{Free-slip walls, $Pr=6.8$, cells of width $\sqrt2$ ($k=\pi/\sqrt2$), $Ra=rRa_c$: Nusselt number
against Table~1 of Moore and Weiss \cite{MooreWeiss1973}, which lists the values of Veronis and their
finite-difference values on two meshes. Present: volume average of $Nu$, $32\times32$ mesh ($48\times32$ for $r=10$).}
\label{tab:mw}
\end{table}

\begin{figure}[htbp]
\centering
\includegraphics[width=\textwidth]{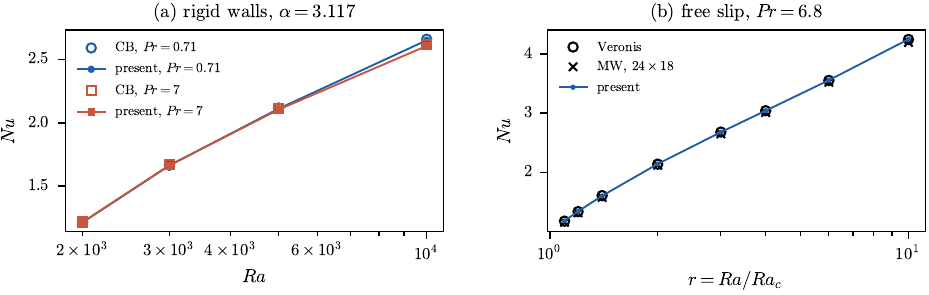}
\caption{Nusselt numbers of steady rolls against published values. (a) Rigid walls, $\alpha=3.117$, against Clever and Busse (CB) \cite{CleverBusse1974}, Table~\ref{tab:cb}. (b) Free-slip walls, $Pr=6.8$, against the values of Veronis and the coarse mesh of Moore and Weiss (MW) \cite{MooreWeiss1973}, Table~\ref{tab:mw}.} \label{fig:bench}
\end{figure}

\subsection{Accuracy of the zero-energy variable}\label{sec:dt}
We follow the formation of the rigid-wall steady rolls at $Ra=5000$ ($32\times32$ mesh) from $\theta=0.1\sin(\pi y)\cos(kx)$ to $t=4$ (Table~\ref{tab:dt}) with two weights: $C=1$, and $C=100$, which is comparable to the kinetic energy $\frac{1}{2}\|\vu\|^2\approx142$ of the rolls.

\begin{itemize}
\item \emph{$C=1$ without relaxation.} During the transient period of the flow, the perturbation grows, saturates near $t=0.2$, and then settles into the rolls (Figure~\ref{fig:C}(a)).
In this period, the flow changes quickly and the energy error $R^{n+1}$ absorbed by $Q$ is large.
This causes $Q^n$ to drift away from $1$, and once the rolls are steady, $Q^n$ stays where it is.
The computed steady-state is then that of a different problem, with convection and advection scaled by $Q$, and its Nusselt number is off by $7\%$ even at $\dt=5\cdot10^{-4}$.
For $\dt\ge10^{-3}$, $Q^n$ collapses to zero.
This switches off the advection that saturates the instability, and the perturbation grows without bound.
Theorem~\ref{thm:stab}(b) allows this, since its constant $K\approx\frac{4Ra^2}{\pi^2}$ is about $10^7$ here.

\item \emph{$C=1$ with relaxation.} The stored $Q^n$ is $1$ at every step, and the steady-state is reproduced to a relative $10^{-8}$ (all digits of Table~\ref{tab:dt}) up to $\dt=3\cdot10^{-3}$.
The value $\tilde Q^n$, with which the velocity and temperature fields are computed, is not $1$, however.
Its deviation grows with the time-step, to about $20\%$ at $\dt=3\cdot10^{-3}$, and at $\dt=4\cdot10^{-3}$ it collapses within a step and the solution blows up, while the stored $Q^n$ is still $1$ (Figure~\ref{fig:C}(c,d)).
Notably, this limit does not depend on the mesh.
On the $24\times24$, $32\times32$, and $48\times48$ meshes, the runs blow up in the same
step.
This is the amplification by the kinetic energy discussed with the weight $C$ in Section~\ref{sec:scheme}.

\item \emph{$C=100$.} In this case, the deviation of $\tilde Q^n$ from $1$ is $100$ times smaller and grows like $\dt^3$, as predicted in Section~\ref{sec:scheme}.
(With $C=10^4$, it is smaller by another factor $100$---see Figure~\ref{fig:C}(b).)
The relaxed scheme now reproduces the steady-state to a relative difference of $10^{-8}$ up to $\dt=6\cdot10^{-3}$, twice the limit for $C=1$.
Without relaxation, the drift of $Q$ is also $100$ times smaller, and the runs no longer blow up for $\dt\le6\cdot10^{-3}$.
However, the steady-states are still biased (by $1.8\%$ at $\dt=3\cdot10^{-3}$).
With $C=100$, the relaxation is therefore needed for accuracy rather than for stability.

\item \emph{The explicit convection.} At $\dt=8\cdot10^{-3}$, the steady rolls give way to a growing oscillation for $C=100$ and $C=10^4$ alike, even though $\tilde Q^n$ stays close to $1$.
Both cases blow up at $\dt=1.2\cdot10^{-2}$, suggesting this limit is set by the flow.
Beyond this limit, a larger time-step does not necessarily fail sooner.
With $C=100$ at $\dt=1.6\cdot10^{-2}$, the run reaches $t=4$, but its kinetic energy swings between less than $1$ and about $450$, against $142$ for the steady rolls, and $\tilde Q^n$ between $0.57$ and $1.52$.
The largest velocity gradient of the rolls, in the wall shear layers (Figure~\ref{fig:flows}(a)), is about $200$.
So, the limit corresponds to $\dt\max|\nabla\vu|\approx1.2$--$1.6$.
It is a restriction of the explicit treatment of the convection, which neither the energy estimate, the relaxation, nor the weight $C$ removes.
With $C=1$, the collapse of $\tilde Q$ comes first, at half this time-step.

\begin{table}[htbp]
\centering
\footnotesize
\setlength{\tabcolsep}{2pt}
\begin{tabular*}{\textwidth}{@{\extracolsep{\fill}}r cc cc cc cc@{}}
\toprule
& \multicolumn{4}{c}{$C=1$} & \multicolumn{4}{c}{$C=100$}\\
\cmidrule(lr){2-5}\cmidrule(lr){6-9}
& \multicolumn{2}{c}{relaxed} & \multicolumn{2}{c}{unrelaxed} & \multicolumn{2}{c}{relaxed} & \multicolumn{2}{c}{unrelaxed}\\
\cmidrule(lr){2-3}\cmidrule(lr){4-5}\cmidrule(lr){6-7}\cmidrule(lr){8-9}
$\dt$ & $Nu(4)$ & range of $\tilde Q^n$ & $Nu(4)$ & range of $\tilde Q^n$ & $Nu(4)$ & range of $\tilde Q^n$ & $Nu(4)$ & range of $\tilde Q^n$\\
\midrule
$1.6\cdot10^{-2}$ & blow-up & collapse & blow-up & collapse & unsteady$^\dagger$ & $0.5725$--$1.5191$ & blow-up & collapse\\
$1.2\cdot10^{-2}$ & blow-up & collapse & blow-up & collapse & blow-up & collapse & blow-up & collapse\\
$8\cdot10^{-3}$ & blow-up & collapse & blow-up & collapse & unsteady$^\dagger$ & $0.9553$--$1.0149$ & blow-up & collapse\\
$6\cdot10^{-3}$ & blow-up & collapse & blow-up & collapse & $2.110511$ & $0.9883$--$1.0044$ & $2.361453$ & $0.8093$--$1.0000$\\
$4\cdot10^{-3}$ & blow-up & collapse & blow-up & collapse & $2.110511$ & $0.9967$--$1.0010$ & $2.182896$ & $0.9274$--$1.0000$\\
$3\cdot10^{-3}$ & $2.110511$ & $0.8216$--$1.1946$ & blow-up & collapse & $2.110511$ & $0.9986$--$1.0004$ & $2.147909$ & $0.9603$--$1.0000$\\
$2\cdot10^{-3}$ & $2.110511$ & $0.9559$--$1.0206$ & blow-up & collapse & $2.110511$ & $0.9996$--$1.0001$ & $2.126345$ & $0.9826$--$1.0000$\\
$10^{-3}$ & $2.110511$ & $0.9947$--$1.0016$ & blow-up & collapse & $2.110511$ & $0.9999$--$1.0000$ & $2.114389$ & $0.9956$--$1.0000$\\
$5\cdot10^{-4}$ & $2.110511$ & $0.9993$--$1.0002$ & $2.255943$ & $0.8695$--$1.0000$ & $2.110511$ & $1.0000$--$1.0000$ & $2.111480$ & $0.9989$--$1.0000$\\
\bottomrule
\end{tabular*}
\caption{Rigid walls, $Ra=5000$, $k=3.117$, $t\in[0,4]$, Taylor--Hood elements on a $32\times32$ mesh graded towards the walls:
$Nu$ at $t=4$ (mean over the two walls) and the range of the computational scalar $\tilde Q^n$ over the run, for the weights $C=1$ and $C=100$ of the
$Q$ equation, with and without relaxation. The converged value on this mesh is $Nu=2.110511$. (The volume average of Section~\ref{sec:rolls} is not used here: without relaxation, the heat flux carries the factor $\tilde Q$.) With relaxation, the stored
$Q^n$ is $1$ at every step; without it, $Q^n=\tilde Q^n$. ``Blow-up'': $\frac{1}{2}\|\vu\|^2>10^{10}$ (the run is stopped);
``collapse'': $\min_n\tilde Q^n<10^{-6}$. $^\dagger$$Nu$ oscillates by more than $1\%$ over $t\in[3,4]$.}
\label{tab:dt}
\end{table}

\begin{figure}[htbp]
\centering
\includegraphics[width=\textwidth]{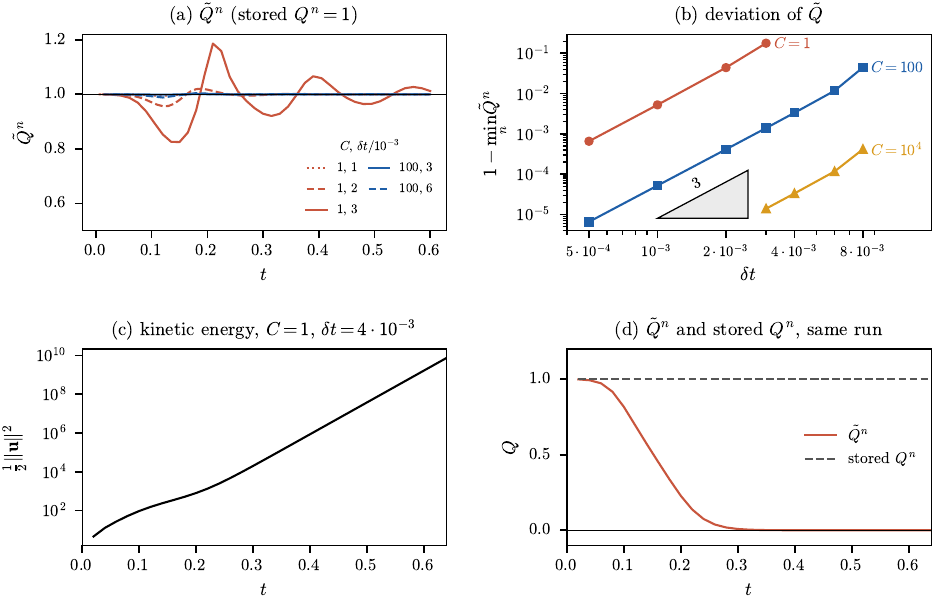}
\caption{The weight $C$ of the $Q$ equation (rigid walls, $Ra=5000$, $32\times32$ mesh, relaxed scheme). (a) The unrelaxed $\tilde Q^n$, with which the velocity and temperature fields are computed, during the formation of the rolls; the stored $Q^n$ is $1$ in all these runs. (b) The largest deviation $1-\min_n\tilde Q^n$ against $\dt$ for $C=1$, $100$, and $10^4$; the triangle shows slope $3$. (c), (d) One run with $C=1$, $\dt=4\cdot10^{-3}$: the kinetic energy grows without bound (c), while $\tilde Q^n$ collapses to $0$ and the stored $Q^n$ stays at $1$ (d).}
\label{fig:C}
\end{figure}

\item \emph{Accuracy during the transient period.} Table~\ref{tab:trans} and Figure~\ref{fig:conv}(b) compare the kinetic energy at $t=0.2$, when the rolls saturate, with a reference computation with a much smaller time-step.
With $C=100$ and relaxation, the error decreases at second-order and is two to four times smaller than with $C=1$---the difference is the error caused by $\tilde Q^n\ne1$.
Without relaxation, the drift of $Q$ dominates.
We see that with $C=1$, the error is already $35\%$ at $\dt=5\cdot10^{-4}$.

\begin{table}[htbp]
\centering
\footnotesize
\begin{tabular}{r cc cc}
\toprule
& \multicolumn{2}{c}{$C=1$} & \multicolumn{2}{c}{$C=100$}\\
\cmidrule(lr){2-3}\cmidrule(lr){4-5}
$\dt$ & relaxed & unrelaxed & relaxed & unrelaxed\\
\midrule
$4\cdot10^{-3}$ & $>1$ & $>1$ & $1.06{\times}10^{-2}$ & $1.72{\times}10^{-1}$\\
$2\cdot10^{-3}$ & $9.03{\times}10^{-3}$ & $>1$ & $2.42{\times}10^{-3}$ $(4.4)$ & $3.72{\times}10^{-2}$ $(4.6)$\\
$10^{-3}$ & $1.61{\times}10^{-3}$ $(5.6)$ & $>1$ & $5.89{\times}10^{-4}$ $(4.1)$ & $9.07{\times}10^{-3}$ $(4.1)$\\
$5\cdot10^{-4}$ & $2.73{\times}10^{-4}$ $(5.9)$ & $3.48{\times}10^{-1}$ & $1.45{\times}10^{-4}$ $(4.1)$ & $2.27{\times}10^{-3}$ $(4.0)$\\
\bottomrule
\end{tabular}
\caption{Formation of the rolls in the run of Table~\ref{tab:dt}: relative error of $\frac{1}{2}\|\vu\|^2$ at $t=0.2$, when the rolls saturate,
against a computation with $\dt=3.125\cdot10^{-5}$ and $C=100$ on the same mesh. In parentheses: ratio of the error at twice
the time step to this error ($4$ for second order).}
\label{tab:trans}
\end{table}

\item \emph{Faster flows.} The free-slip cells of Section~\ref{sec:rolls} at $r=1.5$ and $2$ show what goes wrong when $C$ is left at $1$ and the time-step is chosen for a slower flow.
With $\dt=0.02$, the run at $r=2$ blew up after $\tilde Q^n$ collapsed, and the run at $r=1.5$ settled into a spurious periodic cycle.
In both runs, the stored $Q^n$ was $1$ throughout, and only the range of $\tilde Q^n$ revealed the problem.
Therefore, we recommend choosing $C$ comparable to the kinetic energy of the flow, and reporting the range of $\tilde Q^n$, not the stored $Q^n$, as the diagnostic of a computation with the relaxed scheme.
\end{itemize}

\paragraph{Scaling with $\frac{E}{C}$}
The tests above vary $\dt$ and $C$ at a fixed Rayleigh number, so the kinetic energy $E$ of the rolls is fixed.
To test the dependence on $E$, we follow the formation of the rolls on the same mesh for $Ra\in\{2000,3000,5000,10^4,2\cdot10^4\}$, with $C\in\{1,100\}$ and $\dt$ from $2.5\cdot10^{-4}$ to $2\cdot10^{-3}$.
In each run, we record $\max_n|\tilde Q^n-1|$ until the rolls have settled (up to $t=0.5$, or $t=1.5$ at $Ra=2000$), and we observe the largest kinetic energy $E$ of these runs ranges from $11$ to $1600$.
We exclude seven runs in which $\tilde Q^n$ deviated from $1$ by more than $10\%$ or the solution blew up: with $C=1$, at $Ra=10^4$ for $\dt\ge10^{-3}$ and at $Ra=2\cdot10^4$ for $\dt\ge5\cdot10^{-4}$, and with $C=100$, at $Ra=2\cdot10^4$ for $\dt\ge10^{-3}$.

A least-squares fit of $\log\max_n|\tilde Q^n-1|$ against $\log\dt$, $\log C$, and $\log E$ over the remaining $33$ runs gives:
\begin{equation*}
\max_n|\tilde Q^n-1|\propto\dt^{3.04}\,C^{-1.00}\,E^{2.74}.
\end{equation*}
This fitted law reproduces the individual runs to within a typical factor of $1.25$ (root-mean-square of the logarithmic residuals).
With the advective time-scale $\tau=H/U$, where $U=\sqrt{2E/|\Omega|}$ (so that $E=\frac{1}{2}|\Omega|U^2$), the estimate of Section~\ref{sec:scheme} predicts
\begin{equation*}
\frac{E}{C}\Big(\frac{\dt}{\tau}\Big)^3\propto\dt^3\,C^{-1}\,E^{5/2}.
\end{equation*}
The fitted exponents of $\dt$ and $C$, $3.04$ and $-1.00$, are close to the predicted $3$ and $-1$, while the exponent of $E$, $2.74$, is larger than the predicted $\frac{5}{2}$. 

Figure~\ref{fig:scaling}(b) shows $\frac{C}{E}\max_n|\tilde Q^n-1|$ against $\big(\frac{\dt}{\tau}\big)^3$.
For $Ra\ge5000$ ($E\ge160$), all runs lie within $12\%$ of one line:
\begin{equation*}
\frac{C}{E}\max_n|\tilde Q^n-1|\approx15\Big(\frac{\dt}{\tau}\Big)^3 .
\end{equation*}
Closer to onset, at $Ra=3000$ and $2000$, the runs lie below this line by factors of $1.3$ and $3$.
There, the rolls form on the slower time-scale of the linear growth, so $U$ overestimates the speed of the flow and $H/U$ underestimates its time-scale.
This is consistent with the fitted exponent of $E$ being larger than $\frac{5}{2}$.
So, the deviation of $\tilde Q$ depends on $\dt$ and $C$ as $\frac{E}{C}\big(\frac{\dt}{\tau}\big)^3$ predicts, and on $E$ as predicted wherever $H/U$ is the right time-scale of the flow.
The estimate of Section~\ref{sec:scheme} is an order-of-magnitude argument that neglects the thermal part of $R^{n+1}$. In one rune per Rayleigh number ($C=100$, $A=1$), the largest thermal part is between $0.003\%$ and $0.02\%$ of the largest kinetic part, with the largest share near the onset.
The criterion $C\gtrsim E$ is a practical rule drawn from this scaling, not a condition of the analysis.
\begin{figure}[htbp]
\centering
\includegraphics[width=\textwidth]{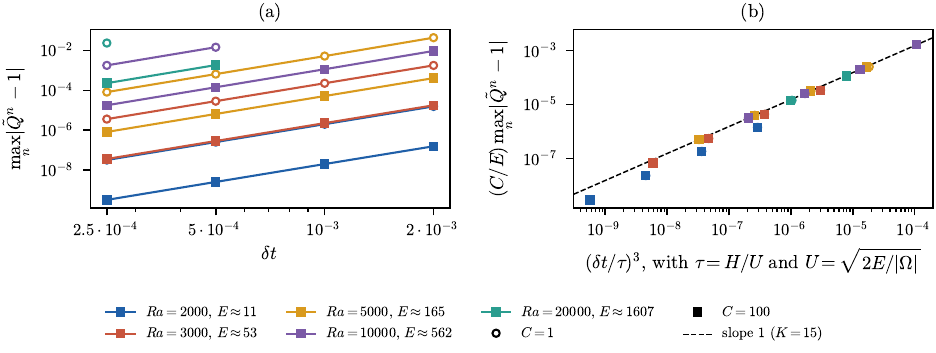}
\caption{Scaling of the deviation of the computational scalar (rigid walls, $32\times32$ mesh, relaxed scheme, formation of the rolls).
(a) $\max_n|\tilde Q^n-1|$ against $\dt$ for five Rayleigh numbers and $C=1$ (open) and $100$ (filled).
(b) The same, scaled by $\frac{C}{E}$, against $\big(\frac{\dt}{\tau}\big)^3$ with $\tau=H/U$ and $U=\sqrt{2E/|\Omega|}$; the dashed line has slope $1$.
Runs in which $\tilde Q^n$ deviated by more than $10\%$ or the solution blew up are omitted.}
\label{fig:scaling}
\end{figure}

\subsection{Energy stability versus practical time-step restrictions}\label{sec:practical}
Theorem~\ref{thm:stab} holds for every $\dt$ for the time discretization used in all runs of Section~\ref{sec:dt}.
Yet, the runs at $Ra=5000$ show three distinct restrictions on the time-step, and the theorem accounts for none of them.
\begin{enumerate}
\item \emph{Drift of the stored $Q$} (no relaxation).
With $C=1$, the steady-state is off by $7\%$ at $\dt=5\cdot10^{-4}$, and from $\dt=10^{-3}$ the solution grows without bound.
(The relaxation removes this drift at the cost of a few integrals per step.)
\item \emph{Deviation of $\tilde Q$ within a step.} $\tilde Q-1$ is $O\left(\frac{E}{C}\left(\frac{\dt}{\tau}\right)^3\right)$, and the relaxation does not change it.
With $C=1$, $\tilde Q$ collapses at $\dt=4\cdot10^{-3}$ on every mesh.
With $C=100$, $\tilde Q$ stays below $1.2\%$ up to $\dt=6\cdot10^{-3}$.
\item \emph{The explicit convection.} At $\dt\max|\nabla\vu|\approx1.2$--$1.6$ ($\dt\approx8\cdot10^{-3}$ here), the steady rolls are replaced by a growing oscillation, and the runs blow up at $1.2\cdot10^{-2}$ for $C=100$ and $C=10^4$ alike.
This limit is set by the flow---neither the relaxation nor $C$ removes it.
\end{enumerate}
However, our theorem is not contradicted.
Under heating, part (b) bounds the energy only by $e^{Kt}$ with $K\approx4G^2C_P^2/(\nu A)\approx10^7$ here, which allows growth on any time-scale of interest.
When $\tilde Q$ collapses, the convection and advection---multiplied by $\tilde Q$---are `switched off,' and the scheme integrates a linear problem whose convective instability nothing saturates.
Part (b) permits this kind of growth.
The energy of the scalar, $\frac{C}{4}\big(|Q^n|^2+|2Q^n-Q^{n-1}|^2\big)$, is about $\frac{C}{2}$ while $Q\approx1$ and drops to about $0$ when $Q$ collapses, so it never exceeds $O(C)$.
It cannot supply the growth, which comes from the buoyancy through the factor $e^{Kt}$.

We note: \emph{an unconditional energy bound is therefore not unconditional accuracy.} In practice, we use the relaxation, take $C\gtrsim E$, keep $\dt\max|\nabla\vu|$ below about $1$, and monitor the range of $\tilde Q^n$.
That range reveals the first two restrictions above, the drift of the stored scalar and the deviation of the computational scalar.
The third, the explicit convection, shows in the solution itself, while $\tilde Q^n$ stays close to $1$.

\subsection{Energy decay without heating}\label{sec:decay}
To test Theorem~\ref{thm:stab}(a), we take $T_b=T_t=T_{\rm ref}=0.5$, $G=3000$, rigid walls, and an initial perturbation of $\theta=0.2\sin(\pi y)\cos(kx)$, which drives a transient flow that then decays.
Here, $10B'/\kappa\approx4.9\times10^5$ for $\dt=10^{-2}$.
Table~\ref{tab:decay} shows that $X^n$ decreases at every step for $\dt$ from $10^{-2}$ to $10^{-4}$, both with $A=5\times10^5$ (as part (a) of Theorem~\ref{thm:stab} states for the time discretization) and with $A=1$.
The largest step is stable, but not accurate---we see its peak kinetic energy is $17\%$ too large.

\begin{table}[htbp]
\centering
\footnotesize
\begin{tabular}{r cc cc}
\toprule
& \multicolumn{2}{c}{$A=5\times10^5$ (theorem applies)} & \multicolumn{2}{c}{$A=1$}\\
\cmidrule(lr){2-3}\cmidrule(lr){4-5}
$\dt$ & $\max_n(X^{n+1}-X^n)$ & $\max_n\frac{1}{2}\|\vu^n\|^2$ & $\max_n(X^{n+1}-X^n)$ & $\max_n\frac{1}{2}\|\vu^n\|^2$\\
\midrule
$10^{-2}$ & $-6.25{\times}10^{-6}$ & $7.542$ & $-2.17{\times}10^{-6}$ & $7.540$\\
$10^{-3}$ & $-4.67{\times}10^{-7}$ & $6.468$ & $-2.97{\times}10^{-8}$ & $6.468$\\
$10^{-4}$ & $-4.51{\times}10^{-8}$ & $6.461$ & $-1.20{\times}10^{-9}$ & $6.461$\\
\bottomrule
\end{tabular}
\caption{No differential heating ($T_b=T_t=T_{\rm ref}$), $G=3000$, rigid walls, relaxed scheme, Taylor--Hood elements on a
$32\times32$ mesh: largest one-step change of the energy $X^n$ of Theorem~\ref{thm:stab}(a) over $t\in[0,0.5]$ (steps $n\ge3$),
and the peak kinetic energy.}
\label{tab:decay}
\end{table}

\subsection{Sensitivity to the relaxation parameter}\label{sec:eta}
All computations above use $\eta=0.95$.
Table~\ref{tab:eta} repeats the time-step study of Table~\ref{tab:dt} with $C=1$ and the decay test of Table~\ref{tab:decay} for $\eta$ between $0$ and $0.99$.

For $\eta\ge0.5$, the relaxation resets the stored $Q^n$ to $1$ at every step and every time-step.
The results agree to all digits shown, and the same holds for $\eta=10^{-2}$ up to $\dt=10^{-3}$.
By Lemma~\ref{lem:contract}, $\eta$ matters only when the budget $\eta\,\dt\,\mathcal D^{n+1}$ is comparable to the change of $\mathcal E_Q$ needed to reset $Q$, about $\frac{3C}{2}|\tilde Q^{n+1}-1|$, and this change is tiny compared with the dissipation of a step.

With $\eta=0$, the budget allows no relaxation in these runs, and $Q$ drifts downwards.
Moving $\tilde Q^{n+1}<1$ towards $1$ increases $\mathcal E_Q$.
So, with $\eta=0$, the scheme is identical to the unrelaxed scheme.
With $\eta=10^{-8}$ and $10^{-6}$, the budget is so small that the results barely differ from those with $\eta=0$.
For $\eta=10^{-4}$, the relaxation is partial, but not enough; $Q^n$ falls to $0.15$ at $\dt=10^{-3}$, to $0.14$ at $\dt=3\cdot10^{-3}$, and then the solution blows up.
At $\eta=10^{-2}$ and $\dt=3\cdot10^{-3}$ though, $Q^n$ dips to $0.939$ during the transient period and returns to $1$, and the steady-state is again exact.

In the decay test, $X^n$ decreases at every step for every $\eta$, as part (a) of Theorem~\ref{thm:stab} states for the time discretization.
The largest one-step change varies by only $10\%$.
For the manufactured solution of Section~\ref{sec:mms} ($\dt=0.0125$), the errors for $\eta\in\{0.5,0.9,0.95,0.99\}$ coincide to all twelve digits shown, and $\eta=0$ reproduces the unrelaxed values.
In these tests, any $\eta\in[0.5,1)$ therefore gives the same results.
The theorem requires only $\eta<1$, and a smaller $\eta$ keeps a larger part $(1-\eta)\mathcal D^{n+1}$ of the dissipation in the energy estimate.

All of this concerns the stored $Q^n$, which is controlled by $\eta$.
The unrelaxed $\tilde Q^n$, with which the velocity and temperature fields are computed, is the same for all $\eta\ge0.5$ (it depends on $\eta$ only through the stored values $Q^n$, $Q^{n-1}$, which are then $1$), and its accuracy is controlled by $\dt$ and $C$ (Section~\ref{sec:dt}).

\begin{table}[htbp]
\centering
\setlength{\tabcolsep}{3.5pt}\footnotesize
\begin{tabular}{c cc cc c}
\toprule
& \multicolumn{2}{c}{$\dt=3\cdot10^{-3}$} & \multicolumn{2}{c}{$\dt=10^{-3}$} & decay\\
\cmidrule(lr){2-3}\cmidrule(lr){4-5}\cmidrule(lr){6-6}
$\eta$ & $Nu(4)$ & $\min Q^n$ & $Nu(4)$ & $\min Q^n$ & $\max_n(X^{n+1}-X^n)$\\
\midrule
$0$ & blow-up & $<10^{-2}$ & blow-up & $<10^{-2}$ & $-2.39{\times}10^{-6}$\\
$10^{-8}$ & blow-up & $<10^{-2}$ & blow-up & $<10^{-2}$ & $-2.39{\times}10^{-6}$\\
$10^{-6}$ & blow-up & 0.0280 & blow-up & 0.0279 & $-2.39{\times}10^{-6}$\\
$10^{-4}$ & blow-up & 0.1350 & blow-up & 0.1463 & $-2.39{\times}10^{-6}$\\
$10^{-2}$ & 2.110511 & 0.9392 & 2.110511 & 1.0000 & $-2.25{\times}10^{-6}$\\
$0.5$ & 2.110511 & 1.0000 & 2.110511 & 1.0000 & $-2.17{\times}10^{-6}$\\
$0.95$ & 2.110511 & 1.0000 & 2.110511 & 1.0000 & $-2.17{\times}10^{-6}$\\
$0.99$ & 2.110511 & 1.0000 & 2.110511 & 1.0000 & $-2.17{\times}10^{-6}$\\
\bottomrule
\end{tabular}
\caption{Sensitivity to the relaxation parameter $\eta$ (Taylor--Hood elements, $32\times32$ mesh). Left: rigid walls, $Ra=5000$,
$k=3.117$, $t\in[0,4]$, $C=1$ as in Table~\ref{tab:dt}: $Nu$ at $t=4$ (converged value $2.110511$ on this mesh) and the
smallest stored $Q^n$. Right: no differential heating, $A=1$, $\dt=10^{-2}$ as in Table~\ref{tab:decay}: largest one-step
change of $X^n$.}
\label{tab:eta}
\end{table}

\subsection{A stress test at $Ra=10^6$}\label{sec:ra1e6}
This run is a stress test of the scheme in an unsteady flow with thin boundary layers and plumes.
It establishes that, with $C$ above the kinetic energy and a time-step below the explicit-convection limit, $\tilde Q^n$ stays accurate and the temperature bounded through the eruption of plumes, and it shows the failure mode when these conditions are violated.

With this test, we do not establish converged statistics, mesh convergence at this Rayleigh number, or long-time behavior.
We take $Ra=10^6$, $Pr=1$ ($\nu=\kappa=1$, $G=Ra$), rigid walls, and a periodic cell of aspect ratio $L=4$, on a $128\times48$ mesh graded towards the walls (wall-spacing $1.1\cdot10^{-3}$), and start from the conduction state with a random temperature perturbation of amplitude $10^{-2}$.
Times are in units of the diffusion time $H^2/\kappa$, in which the free-fall time $H/\sqrt{G\,|T_b-T_t|\,H}$ is $Ra^{-1/2}=10^{-3}$.
We use $\dt=2\cdot10^{-6}$ and $C=10^6$, larger than the kinetic energy at all times, and integrate to $t=0.03$.

Figure~\ref{fig:ra1e6} shows the evolution of the flow.
The perturbation grows in the unstable conduction state, and at $t\approx0.009$, six pairs of plumes erupt from the thermal boundary layers.
The plumes cross the layer and roll up.
Meanwhile, the wall Nusselt numbers leave $1$ at $t\approx0.0085$ and peak at $11.9$ at the bottom wall ($t=0.0117$) and $12.7$ at the top wall ($t=0.0124$).
The flow then reorganizes into three to four large cells, with thin boundary layers and plumes along their edges.
Over $0.015\le t\le0.03$, the Nusselt number at the bottom wall varies between $6.4$ and $8.6$, the mean at both walls is $7.6$, the kinetic energy is about $1.6\cdot10^5$, and the largest speed is about $600$.
The volume average of Section~\ref{sec:rolls} agrees with the wall values only in a steady-state.
While the plumes carry heat into the interior faster than the walls supply it, the volume average peaks at $43$, and over $0.015\le t\le0.03$, its mean is $7.5$.
Because the run is too short for converged statistics, we make no claim about the flow beyond $t=0.03$.
Throughout the run, $\tilde Q^n$ stays within $5\cdot10^{-9}$ of $1$, the stored $Q^n$ is $1$ after every relaxation, and the temperature in the snapshots stays in $[0,1]$ (between $0.004$ and $0.996$ in the interior).

\begin{figure}[htbp]
\centering
\includegraphics[width=\textwidth]{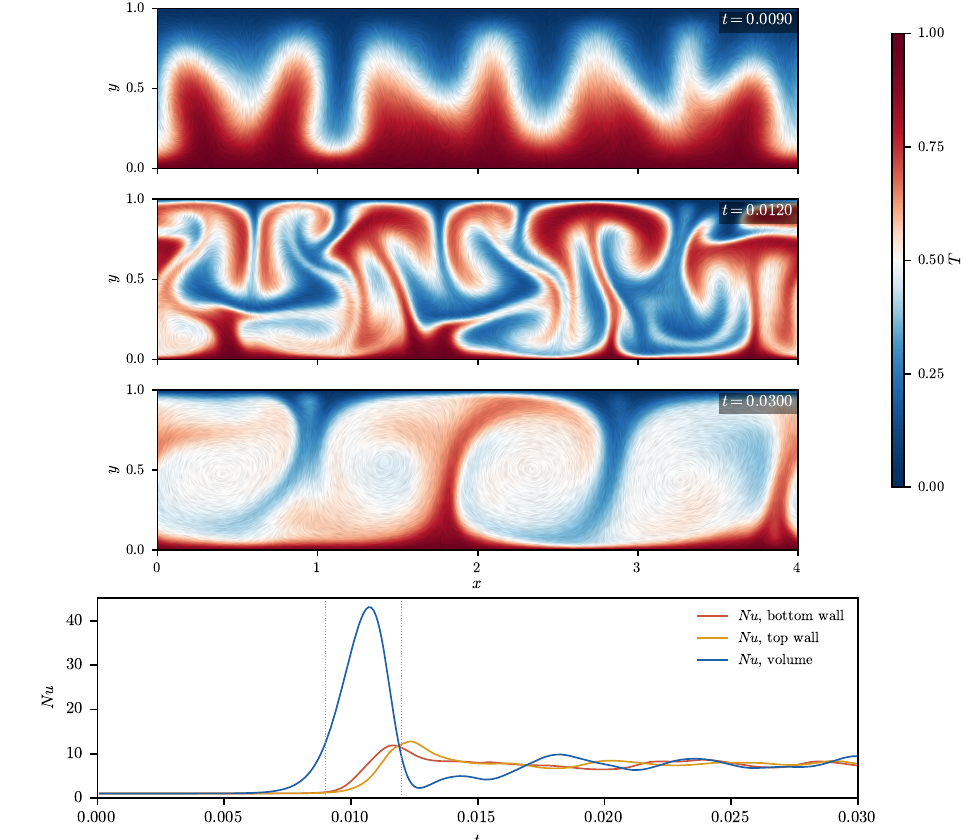}
\caption{Convection at $Ra=10^6$, $Pr=1$, rigid walls, $L=4$, $128\times48$ mesh, $\dt=2\cdot10^{-6}$, $C=10^6$.
Top three panels: temperature (contrast stretched around $T=1/2$) with a line-integral-convolution texture of the velocity, during the eruption of the plumes ($t=0.009$), after their roll-up ($t=0.012$), and in the large-scale circulation ($t=0.03$).
Bottom: the Nusselt number at the two walls and the volume average; the dotted lines mark the first two panels.
The post-processing of this figure is described in the appendix.}
\label{fig:ra1e6}
\end{figure}

A first attempt with $\dt=10^{-5}$ and $C=10^5$ shows the failure mode of Section~\ref{sec:dt}.
Up until $t=0.0102$, $1-\tilde Q^n$ was below $3\cdot10^{-6}$.
Within the next ten steps though, as the plumes erupted and $\max |\nabla\vu|$ grew roughly fourfold, $\tilde Q^n$ fell to $0.32$.
The relaxation step could not restore the stored $Q^n$---which followed it to $0.34$---and the computation blew up at $t=0.0246$.
The collapse was sudden rather than a gradual drift, as at the explicit-convection limit in Section~\ref{sec:dt}.
This run does not separate the two causes, however.
Its value for $C$ was also below the kinetic energy at the collapse ($1.3\cdot10^5$).
The successful run satisfies {\em both} conditions: its time-step, $2\cdot10^{-6}$, is five times smaller than that of the failed attempt, and $C=10^6$ is larger than the kinetic energy.
As in Section~\ref{sec:dt}, the range of $\tilde Q^n$ is what reveals the failure.

\subsection{Three dimensions}\label{sec:3d}
Remark~\ref{rem:3d} shows that the analysis holds unchanged in a three-dimensional channel with flat walls.
This section describes a three-dimensional implementation, the checks it passes, and two computations with it.

\paragraph{Implementation} The domain is $[0,L_x)\times(0,H)\times[0,L_z)$, periodic in $x$ and $z$, with $y$ vertical.
The walls carry the Navier-slip condition on both tangential components, which contributes $l\int_\G(u_xv_x+u_zv_z)\,ds$ to the bilinear forms, and $u_y=0$, $\theta=0$.
The mesh has $N_x\times N_y\times N_z$ hexahedra, each split into six tetrahedra, with the points in $y$ placed the same as in two dimensions for rigid walls.
Everything else is the same as in Section~\ref{sec:impl}: Taylor--Hood elements with $P_2$ temperature, the skew-symmetric form \eqref{skew}, and the hydrostatic pressure of the conduction state removed analytically.
A step solves eight scalar Helmholtz problems now (three components each of $\vu_1$, $\vu_2$, then $\theta_1$, $\theta_2$) with three matrices, since $u_x$ and $u_z$ share one, followed by one pressure Poisson problem, and three $L^2$ projections.
The relaxation uses the budget of Remark~\ref{rem:3d}.
The right-hand sides are products with matrices assembled once (one each for mass, gradient, and divergence) and with one convection matrix assembled per step.
The integrals $I_1$, $I_2$ of \eqref{Qsol} are inner products with these right-hand sides.

In three dimensions, the sparse direct factorization of the $P_2$ matrices is expensive with memory. So, we solve with conjugate gradients, with a relative tolerance $10^{-12}$.
On the manufactured solution and the rolls of the checks below, the errors and diagnostics computed with conjugate gradients agree with those of the direct solver to $10$--$12$ digits.

\paragraph{Checks} The implementation passes the following tests:
\begin{itemize}
\item \emph{Manufactured solution.} \ref{app:mms} ($\nu=0.1$, $\kappa=0.05$, $G=1$, $l=1$), the exact solution extends \eqref{mms} with a second stream function $\chi$ and a dependence on $z$:
\begin{align}\label{mms3dsol}
&\vu=\big(\partial_y\psi,\,-\partial_x\psi-\partial_z\chi,\,\partial_y\chi\big),\\[2pt]
&\psi=\sin t\sin x\,(1-y^2),\\[2pt]
& \chi=\tfrac{1}{2}\sin t\sin z\,(1-y^2),\\[2pt]
&p=\sin t\cos x\sin y\cos z,\\[2pt]
&T=\tfrac{1-y}{2}+\tfrac{1}{2}\cos t\,(1-y^2)\big(\sin x+\tfrac{y}{2}\cos x+\tfrac{1}{2}\cos z\big).
\end{align}
The velocity is divergence free, $u_y=0$ on the walls, and both tangential components slip at the walls.
The forcing of the Navier-slip condition is imposed on both.
Table~\ref{tab:mms3d} shows second-order in time for $u$, $T$, and $p$, and spatial orders of about $3$ for $u$ and $T$ and between $2$ and $2.5$ for $p$, as in two dimensions (Table~\ref{tab:space}).
\item \emph{Isotropy.} For free-slip walls at $Ra=1000$ and $k=\pi/\sqrt2$, a steady roll varying in $x$ ($12\times12\times2$ mesh, $L_z=0.5$) and the same roll varying in $z$ ($2\times12\times12$ mesh, $L_x=0.5$) grow at the same rate, $\sigma=3.446650$ ($\dt=0.005$).
The two-dimensional code on the same $12\times12$ cross-section gives $3.446617$, and the exact value is $3.4530$.
The difference from it is the error of this coarse cross-section, shared by both codes.
\item \emph{Rolls form independent of $z$.} Rigid rolls at $Ra=5000$, $Pr=1$, $k=3.117$ ($12\times12\times2$ mesh, $L_z=0.5$, $\dt=2\cdot10^{-3}$, $C=100$) reach the steady-state by $t=4$, with $Nu=2.1103080$ (volume average) against $2.1103008$ from the two-dimensional code on the $12\times12$ mesh.
The kinetic energy per unit length in $z$ is $142.0568$ against $142.0622$, and the out-of-plane velocity stays exactly zero.
\item \emph{Energy decay without heating.} With the setting of Section~\ref{sec:decay} ($T_b=T_t=T_{\rm ref}$, $G=3000$, rigid walls) on the $8^3$ mesh, with $L_z=L_x$ and the initial perturbation $\theta=\sin(\pi y)\cos(kx)\big(0.2+0.1\cos(kz)\big)$ (its $z$-independent part is the perturbation of Section~\ref{sec:decay}, and the second term makes the initial state three-dimensional), $X^n$ decreases at every step for $A=1$ and $5\cdot10^5$, and for $\dt=10^{-2}$ and $10^{-3}$, as Theorem~\ref{thm:stab}(a) states for the time discretization.
\end{itemize}

\begin{table}[htbp]
\centering\small\setlength{\tabcolsep}{3.7pt}
\begin{tabular}{rccccccc}
\toprule
\multicolumn{8}{l}{(a) Time, $6^3$ mesh: $L^2$ differences from the run with $\dt/32$}\\
$\dt$ & $u$ & order & $T$ & order & $p$ & order & \\
\midrule
$0.05$  & $7.14\cdot10^{-3}$ & 1.98 & $9.43\cdot10^{-3}$ & 2.02 & $6.30\cdot10^{-3}$ & 1.98 & \\
$0.025$  & $1.80\cdot10^{-3}$ & 2.01 & $2.33\cdot10^{-3}$ & 2.02 & $1.60\cdot10^{-3}$ & 2.00 & \\
$0.0125$ & $4.49\cdot10^{-4}$ & 2.06 & $5.72\cdot10^{-4}$ & 2.07 & $3.99\cdot10^{-4}$ & 2.06 & \\
$0.00625$ & $1.07\cdot10^{-4}$ &   & $1.36\cdot10^{-4}$ &   & $9.54\cdot10^{-5}$ &   & \\
\midrule
\multicolumn{8}{l}{(b) Space, $N^3$ mesh, $\dt=0.0125$: $L^2$ errors against the exact solution at $t=0.5$}\\
$N$ & $u$ & order & $T$ & order & $p$ & order & \\
\midrule
$4$ & $2.33\cdot10^{-1}$ &   & $7.97\cdot10^{-2}$ &   & $4.17\cdot10^{-1}$ &   & \\
$6$ & $9.06\cdot10^{-2}$ & 2.33 & $2.49\cdot10^{-2}$ & 2.86 & $1.56\cdot10^{-1}$ & 2.43 & \\
$8$ & $3.74\cdot10^{-2}$ & 3.07 & $1.08\cdot10^{-2}$ & 2.89 & $7.68\cdot10^{-2}$ & 2.46 & \\
$12$ & $9.23\cdot10^{-3}$ & 3.45 & $3.29\cdot10^{-3}$ & 2.94 & $3.13\cdot10^{-2}$ & 2.22 & \\
\bottomrule
\end{tabular}
\caption{Three-dimensional manufactured solution (Section~\ref{sec:3d}), $t\in[0,0.5]$.
The meshes of (b) are still coarse, and the orders are pre-asymptotic.}
\label{tab:mms3d}
\end{table}

\paragraph{Three-dimensional perturbations of steady rolls} We start from the rigid rolls of Table~\ref{tab:cb} ($Pr=0.71$, $Ra=5000$, $\alpha=3.117$) in a square periodic box $L_x=L_z=\frac{2\pi}{\alpha}$, $H=1$, and add a small three-dimensional mode:
\begin{equation*}
\theta=\sin(\pi y)\cos(\alpha x)\big(0.1+0.01\cos(\alpha z)\big).
\end{equation*} We use a $16^3$ mesh, $\dt=2\cdot10^{-3}$ and $C=10^3$, and integrate to $t=1$, by which time the flow is steady. The box only allows perturbations whose wave number in $z$ is a multiple of $\alpha$, so this is a test of the scheme on a flow whose two-dimensional state is known, not a stability analysis of the rolls. We monitor the kinetic energy of $u_z$, which measures the three-dimensional part of the flow, the Nusselt number, and $\tilde Q^n$ (Figure~\ref{fig:caseA}).

The three-dimensional mode first grows along with the rolls, and its kinetic energy peaks near $t=0.12$.
Once the rolls saturate, the energy decays exponentially (Figure~\ref{fig:caseA}(b)), and from $t\approx0.7$ it stays at the level of
round-off and solver tolerance, at most $3\cdot10^{-13}$ of the kinetic energy of the rolls.
In this box, the two-dimensional rolls are therefore stable to the imposed perturbation, and the three-dimensional computation returns to them.
From $t=0.8$ on, the Nusselt number and the kinetic energy no longer change.
The final state agrees with the two-dimensional code on the same $16\times16$ cross-section,
\begin{equation*}
\begin{aligned}
Nu&=2.1101424\ \text{(3D)}, &\qquad &2.1101416\ \text{(2D)},\\
\tfrac{1}{2}\|\vu\|^2&=283.5888\ \text{(3D)}, &&283.5911\left( \text{2D, for $L_z=\tfrac{2\pi}{\alpha}$}\right),
\end{aligned}
\end{equation*}
to better than $10^{-5}$ relative difference, even though the two meshes do not overlap (the three-dimensional mesh consists of tetrahedra and does not contain the two-dimensional one).
The wall values of $Nu$ differ from the volume average by $3\cdot10^{-4}$ relative difference, the error of the wall gradient discussed in Section~\ref{sec:rolls}.
Throughout the run, $\tilde Q^n$ stays within $5\cdot10^{-5}$ of $1$, and once the flow is steady, from $t=0.9$ on, $|\tilde Q^n-1|<10^{-12}$. %The run took $3.3$\,s per step on a laptop.

\begin{figure}[htbp]
\centering
\includegraphics[width=\textwidth]{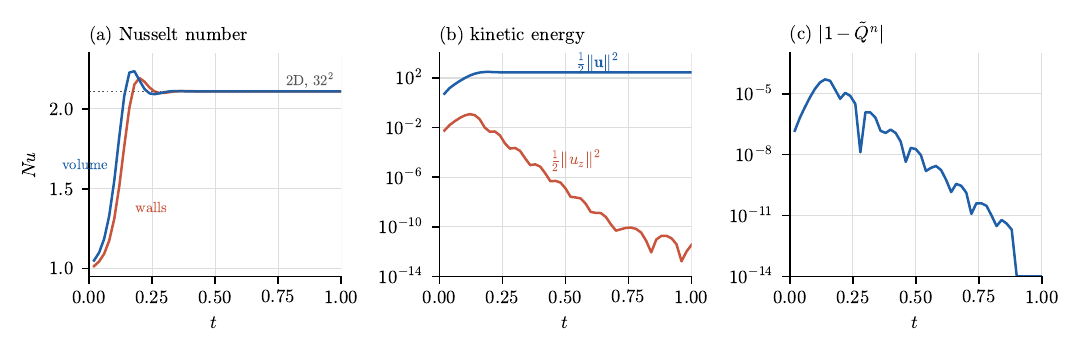}
\caption{Three-dimensional perturbation of steady rolls: $Pr=0.71$, $Ra=5000$, rigid walls, $L_x=L_z=2\pi/\alpha$, $\alpha=3.117$, $16^3$ mesh, $\dt=2\cdot10^{-3}$, $C=10^3$, initial temperature $\sin(\pi y)\cos(\alpha x)(0.1+0.01\cos(\alpha z))$. (a) Nusselt number at the walls and volume average; the dotted line is the two-dimensional value on the $32\times32$ mesh (Table~\ref{tab:cb}). (b) Kinetic energy and the kinetic energy of $u_z$. (c) Deviation of $\tilde Q^n$ from $1$ (below $10^{-12}$ after $t=0.9$).}
\label{fig:caseA}
\end{figure}

\paragraph{Convection from random initial data} In a wide box, $L_x=L_z=4$, $H=1$ ($32\times16\times32$ mesh), with $Pr=0.71$, $Ra=5000$, and rigid walls, we start from the no-flow conductive steady-state with a random temperature perturbation of amplitude $10^{-3}$ (uniform values at the nodes, multiplied by $\sin(\pi y)$, seed $1$), and integrate to $t=6.76$ ($\dt=2\cdot10^{-3}$, $3380$ steps), by which time the flow is steady.
Here, $C=10^4$, below the free-fall bound $\frac{1}{2}|\Omega|U_{\rm ff}^2=2.8\cdot10^4$ of Section~\ref{sec:scheme}.
We describe the patterns by how many wavelengths fit across the box.
A mode $(m,n)$ has $m$ wavelengths along $x$ and $n$ along $z$; its wave vector is $\frac{\pi}{2}(m,n)$, and its wavelength is $4/\sqrt{m^2+n^2}$.
For example, the mode $(2,0)$ consists of four straight rolls with axes along $z$, and the modes $(1,1)$ and $(1,-1)$ consist of rolls along the two diagonals of the box.
We find the dominant modes from the vertical velocity on the mid-plane $y=\frac{1}{2}$.
Figure~\ref{fig:caseB} shows three stages and the time evolution.
\begin{itemize}
\item \emph{Growth and saturation.} The perturbation grows exponentially, and the kinetic energy increases rapidly before saturating at $t\approx0.65$. $Nu$ reaches $2.09$ at $t=0.66$, dips to $2.03$, and then settles near $2.10$.
\item \emph{Rolls of wave number $\pi$.} From $t\approx1$ to $2$, the flow consists of four straight rolls with axes along $z$, the mode $(2,0)$.
This mode carries $87$--$93\%$ of the mid-plane spectrum, and $Nu=2.08$--$2.10$.
The rolls are not exactly straight; the kinetic energy of $u_z$, $\frac{1}{2}\|u_z\|^2$, is $1$--$2.5\%$ of the total.
\item \emph{Transition.} Two oblique modes, $(3,1)$ and $(1,-1)$, then grow on the rolls, from $4\%$ of the mid-plane spectrum at $t=1.5$ to $17\%$ at $t=2.5$.
Around $t=3$, the rolls break up into a cellular pattern (Figure~\ref{fig:caseB}(b)), and $Nu$ drops to its minimum, $1.85$.
\item \emph{Diagonal rolls.} From $t\approx4.5$, a single mode, $(1,-1)$, carries more than $99.5\%$ of the mid-plane spectrum.
The flow now consists of straight rolls along the diagonal of the box, with wave number $\frac{\pi}{\sqrt2}\approx2.22$.
It is steady from $t\approx5$: over the rest of the run, $Nu$ and the kinetic energy change by about one part in $10^6$.
\end{itemize}
Because these rolls are two-dimensional in a rotated frame, the two-dimensional code provides a check on the final state.
We compute rolls of wave number $\frac{\pi}{\sqrt2}$ (period $L=2\sqrt2$) on a $24\times16$ mesh, whose spacing along the wave vector is close to that of the three-dimensional mesh.
The two codes agree:
\begin{equation*}
\begin{aligned}
Nu&=1.9504043\ \text{(3D)}, &\qquad &1.9504015\ \text{(2D)},\\
\tfrac{1}{2}\|\vu\|^2\ \text{per unit area}&=60.101\ \text{(3D)}, &&60.096\ \text{(2D)}.
\end{aligned}
\end{equation*}
On a finer $48\times32$ mesh, the two-dimensional code gives $Nu=1.950454$.
So, the difference between the codes on comparable meshes, $3\cdot10^{-6}$ in $Nu$, is well below the error of the spatial discretization, about $5\cdot10^{-5}$.
Throughout the run, $\tilde Q^n$ stays within $1.3\cdot10^{-5}$ of $1$, with the largest deviation at saturation, and within $10^{-10}$ once the flow is steady.
The stored $Q^n$ is $1$ at every step.

We do not claim that this sequence of patterns is generic.
The box admits only modes with whole numbers of wavelengths across it, and we have not examined the stability of the rolls of wave number $\pi$ to other perturbations. 

\begin{figure}[htbp]
\centering
\includegraphics[width=\textwidth]{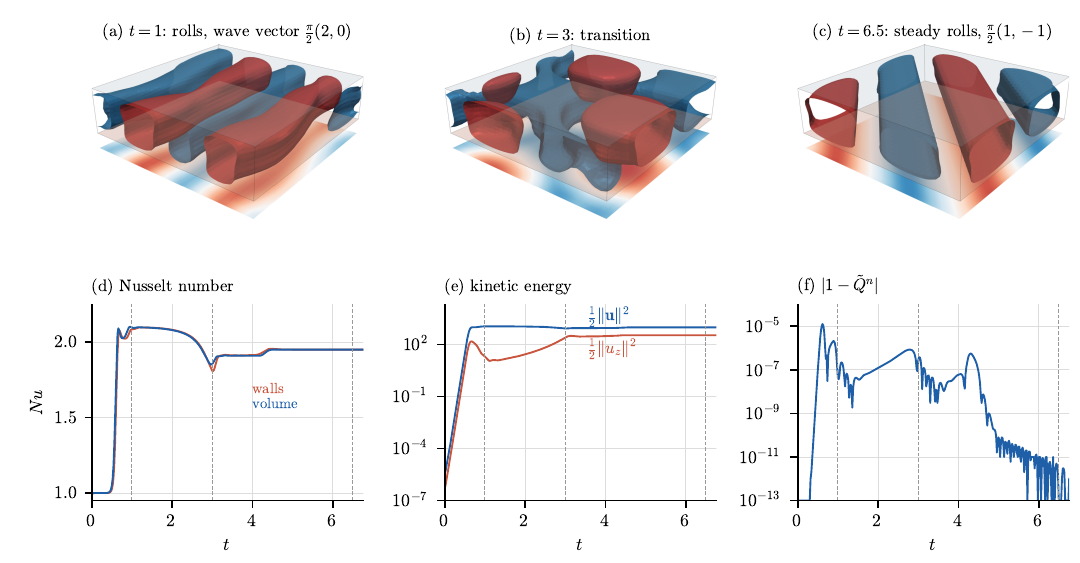}
\caption{Convection from random initial data in the box $4\times1\times4$: $Pr=0.71$, $Ra=5000$, rigid walls, $32\times16\times32$ mesh, $\dt=2\cdot10^{-3}$, $C=10^4$. (a)--(c) Isosurfaces of the temperature anomaly $T-\langle T\rangle(y)$ at $\pm0.12$ (where $\langle T\rangle$: horizontal mean with red as warm, rising fluid and blue as cold, sinking fluid). (a) straight rolls, mode $(2,0)$, (b) the transition, (c) the steady rolls along the diagonal, mode $(1,-1)$. (d) Nusselt number at the bottom wall and volume average. (e) Kinetic energy and the kinetic energy of $u_z$. (f) Deviation of $\tilde Q^n$ from $1$.
Dashed lines: the times of (a)--(c).
The snapshots were written as element averages; for the images these are averaged to the mesh vertices, interpolated linearly to a uniform grid of spacing $1/96$, and the isosurfaces lightly smoothed.}
\label{fig:caseB}
\end{figure}

\section{Concluding remarks}
We presented a linear, second-order and fully decoupled scheme, with four constant-coefficient Helmholtz problems and one pressure Poisson problem per step.
For its time discretization, with exact spatial operators, we proved that for any time-step, the energy decays without heating, under a condition on the thermal weight, and stays bounded on finite time intervals under Rayleigh--B\'enard heating.
While this is not proved for the Taylor--Hood finite element implementation, all our tests without heating showed the discrete energy decreased at every step.
However, as the experiments show, energy stability for every time-step is not accuracy for every time-step.

The scheme differs from related SAV schemes for the Boussinesq equations in three ways.
The GSAV schemes of \cite{AlhomsiWuZheng2026, WagnerWohlmuthZawallich2026} use an energy-based scalar with an exponential integrator and present error estimates in settings without heat input. The stabilized ensemble schemes of \cite{JiangYang2023anm} use two exponential scalars and are proved long-time stable for homogeneous boundary conditions.
Our scheme instead uses a single zero-energy scalar, whose exact value is known and can be restored by relaxation.
Our scheme considers Rayleigh--B\'enard configuration and Navier-slip walls, as verified against: classical onset, weakly nonlinear results, and published Nusselt numbers of steady rolls.
Section~\ref{sec:noerr} explains what stands in the way of an error estimate: growth constants intrinsic to heating from below, the explicit convection, the rotational pressure correction with slip walls, and the relaxation.

The experiments separate three effects that energy stability does not control.
First, the stored zero-energy variable drifts and the relaxation removes this drift at negligible cost.
Without relaxation, the drift biases the computed steady-states and, in convection, can make the solution blow up at time-steps far below the limit set by the explicit terms.
Second, the computational scalar $\tilde Q$ deviates from $1$, and since the velocity and temperature are computed with it, the relaxation cannot correct this deviation.
For a flow with kinetic energy $E$ and time-scale $\tau$, the deviation is of order $\frac{E}{C}\big(\frac{\dt}{\tau}\big)^3$.
In convection, $E$ grows with the Rayleigh number, so the deviation is amplified unless $C$ is comparable to $E$.
Choosing $C$ this way adds no linear solves and leaves the stability analysis unchanged.
In our tests at $Ra=5000$, it reduced the error while the rolls formed by a factor of two to four, and it doubled the largest time-step at which the steady rolls were reproduced (Section~\ref{sec:dt}).
Additionally, at $Ra=10^6$, it kept $\tilde Q$ within $5\cdot10^{-9}$ of $1$ through the eruption of plumes.
With $C=1$, by contrast, a steady-state turned into a spurious periodic cycle that only the range of $\tilde Q$ revealed.
Third, the explicit convection limits the time-step to $\dt\max|\nabla\vu|=O(1)$, a limit set by the flow rather than by the grid.
Therefore, we recommend using the range of $\tilde Q$, not the stored $Q$, as a diagnostic for accuracy.
Stabilization terms, such as those of \cite{JiangYang2023anm,JiangYang2024jmaa}, are added to SAV schemes to improve
their accuracy for complex flows.
Part of the inaccuracy these stabilization terms address may be the scale effect identified here: when the energy of the scalar is of order one but the kinetic energy of the flow is large, the deviation of the scalar is amplified by their ratio, and choosing the weight $C$ comparable to the kinetic energy removes this amplification.
We have not examined whether this is so for those schemes, and this remains a direction for possible research.

In the tests noted here, the results are not sensitive to the relaxation parameter $\eta \in [0.5,1)\,.$
The stability analysis holds unchanged in a three-dimensional channel with flat walls (Remark~\ref{rem:3d}), and a three-dimensional implementation passes corresponding verification tests (Section~\ref{sec:3d}).
In three dimensions, a perturbation of steady rolls decays to round-off error, and convection from random data in a wide box passes through a change of roll pattern to steady diagonal rolls that agree with the two-dimensional code.
In both computations, $\tilde Q$ stays within $10^{-4}$ of $1$.

Natural next steps for this work could be: a study of the three-dimensional instabilities of the rolls \cite{CleverBusse1974} in larger boxes; walls that are curved, rough, or heated from the side, and obstructions inside the cell---which can replace steady convection by periodic flow, as observed in \cite{MhinaSouleyHassane2022}; and a fluid layer coupled to a porous medium, in which convection may stay in the fluid or penetrate the porous layer \cite{McCurdyMooreWang2019,McCurdyMooreWang2022}.
For the coupled case described with those works, we speculate the zero-energy variable would have to absorb the coupling terms on the fluid--porous interface as it absorbs the convection terms here, and the rotational pressure correction for the Stokes--Darcy system \cite{JiangLi2024} is a natural starting point.

\appendix
\section{The manufactured solution}\label{app:mms}
\paragraph{Problem} $\Omega=[0,2\pi)\times(-1,1)$, periodic in $x$, walls at $y=\pm 1$ ($H=2$), $\nu=0.1$, $\kappa=0.05$,
$G=1$, $T_{\rm ref}=0.5$, $l=1$, $T_b=1$ at $y=-1$ and $T_t=0$ at $y=1$.
The exact solution is
\begin{gather*}
\vu=\big(\partial_y\psi,-\partial_x\psi\big)=\sin t\,\big(-2y\sin x,\,-(1-y^2)\cos x\big),\\
\psi=\sin t\sin x\,(1-y^2),\qquad p=\sin t\cos x\sin y,\\
T=\tfrac{1-y}{2}+\tfrac{1}{2}\cos t\,(1-y^2)\big(\sin x+\tfrac{y}{2}\cos x\big).
\end{gather*}
It satisfies $\nabla\cdot\vu=0$ and, on the walls, $u_y=0$, $T=T_b$ at $y=-1$, and $T=T_t$ at $y=1$.
The forcing added to the momentum and heat equations, $\vu_t+(\vu\cdot\nabla)\vu-\nu\Delta\vu+\nabla p-G(T-T_{\rm ref})\ey=(f_x,f_y)$ and $T_t+\vu\cdot\nabla T-\kappa\Delta T=f_T$, is
\begin{align*}
f_x&=\sin x\,\big(2(1+y^2)\sin^2t\cos x-2y\cos t-\sin t\sin y-2\nu y\sin t\big),\\
f_y&=-(1-y^2)\cos t\cos x-2y(1-y^2)\sin^2t+\sin t\cos x\cos y\\
&\quad+\nu(y^2-3)\sin t\cos x-G(T-T_{\rm ref}),\\
f_T&=\tfrac{1-y^2}{4}\sin t\,\big(2y^2\cos t\sin^2x+(3y^2-1)\cos t\cos^2x\\
&\qquad-y\cos x-2\sin x+2\cos x\big)\\
&\quad+\tfrac{\kappa}{4}\cos t\,\big(6\sin x-2y^2\sin x+7y\cos x-y^3\cos x\big),
\end{align*} and the Navier-slip condition becomes: $l u_x+\nu\partial_\vn u_x=g_w$ with
\begin{align*}
g_w&=2(l+\nu)\sin t\sin x &&\textrm{at}\quad y=-1,\\
g_w&=-2(l+\nu)\sin t\sin x &&\textrm{at}\quad y=1.
\end{align*} (These expressions were generated and checked symbolically. 
The script that writes them into the FreeFEM code is in the supplementary material.) In the scheme, the forcing is evaluated at $t_{n+1}$ and added to \eqref{su} and \eqref{sT} and to the slip condition. 
For the perturbation $\theta=T-T_c$, with $T_c=\frac{1-y}{2}$, the forcing of the heat equation is $f_T$ itself.
Since $T_c$ is independent of $t$ and linear in $y$, $\partial_tT_c=\Delta T_c=0$, and the remaining term $\vu\cdot\nabla T_c=T_c'u_y$ is already part of \eqref{theta}.
The buoyancy of the conduction state, $G(T_c-T_{\rm ref})\ey$, is also kept in the momentum equation, so $f_y$ contains the full temperature $T$.
(The hydrostatic splitting of Section~\ref{sec:impl} is not used here.)

\paragraph{Initial data and steps} $\vu^0=0$, $p^0=0$, $T^0=T(\cdot,0)$ (all exact at $t=0$); the first step is backward Euler; $C=1$, $A=1$, $\eta=0.95$ (relaxation on), $Q^0=1$.

\paragraph{Time convergence (Table~\ref{tab:mms})} Uniform $40\times40$ mesh, $\dt=0.05\cdot2^{-k}$, $k=0,\dots,5$, and $L^2$ differences at $t=0.5$ from a reference run with $\dt=1.953125\cdot10^{-4}$ on the same mesh, the pressures compared after removing their means.
This isolates the time error, since the spatial error is the same in all runs.

\paragraph{Space convergence (Table~\ref{tab:space})} Uniform $N\times N$ meshes, $N=8,16,32,64$, $\dt=2.5\cdot10^{-4}$, errors at $t=0.5$ against the exact functions (not their interpolants), computed with a quadrature rule of order $9$.

\paragraph{Post-processing of Figure~\ref{fig:ra1e6}} The images are made from the computed velocity and temperature fields.
\begin{enumerate}
\item \emph{Sampling.} Every $750$ steps ($\Delta t=1.5\cdot10^{-3}$), the full temperature $T=T_c+\theta$ and the velocity are interpolated from the finite-element spaces to the nodes of a uniform $513\times129$ grid on $[0,4]\times[0,1]$.
The panels show the samples closest to $t=0.009$, $0.012$, and $0.03$.
\item \emph{Resampling.} Each field is interpolated by cubic splines, periodic in $x$, to an image of $2000\times500$ pixels.
\item \emph{Color.} The temperature is shown with the reversed `RdBu' color scheme (using Matplotlib), after the contrast stretch
\begin{equation*}
S(T)=\frac{1}{2}+\frac{1}{2}\,\frac{\tanh\big((T-\frac{1}{2})/s\big)}{\tanh\big(1/(2s)\big)},\qquad s=0.3 .
\end{equation*}
Here, $S$ is monotone and keeps $T=0$, $\frac{1}{2}$, and $1$ fixed.
This widens the range of colors near the bulk temperature $T=\frac{1}{2}$, where the well-mixed interior lies, and narrows it near the wall values.
\item \emph{Texture.} The streaky texture is a line integral convolution of the velocity \cite{CabralLeedom1993}.
White noise (with a fixed seed) is averaged along the streamlines of the unit vector field $\vu/|\vu|$, traced forward and backward from each pixel over $25$ steps of $0.6$ pixel with linearly decreasing weights, and normalized to zero mean and unit variance.
The color of each pixel is multiplied by $1+0.05\,\tau$, where $\tau$ is the texture.
The texture therefore shows the direction of the flow only, not its speed, and changes the brightness by a few percent.
\item \emph{Nusselt numbers.} The bottom panel plots the wall and volume Nusselt numbers of Section~\ref{sec:rolls}, recorded every $50$ steps by the solver.
(They are not post-processed.)
\end{enumerate}
The script that produces the figure from the run is part of the supplementary code.

\section*{Declaration of generative AI and AI-assisted technologies in the manuscript preparation process} The author discloses that Claude (Sonnet 4.6) was used during manuscript preparation to support language clarity, grammatical accuracy, and consistency in citation formatting.
Additionally, Claude (Opus 5.5) was used to organize, clean, and restructure parts of the supplemental code (for both FreeFem and \LaTeX), to develop the exact manufactured solution of \ref{app:mms}, and to help with post-processing scripts. After using this tool, the author reviewed and edited the content as needed. Outside of that, these tools were not used to generate the study's findings, analyses, or conclusions; all substantive intellectual contributions are those of the author, and the author takes full responsibility for the content of the published article.

\section*{Declaration of competing interest}
The author declares no known competing financial interests or personal relationships that could have appeared to influence the work reported in this paper.

\section*{Data availability}
The FreeFEM scripts, the command line of every run reported in this work, and the scripts that produce all tables and figures from the run logs are provided as supplementary material at: \url{https://github.com/mcmccurdy/SAV-for-RBC}.

\section*{Funding}
This research did not receive any specific grant from funding agencies in the public, commercial, or not-for-profit sectors.

\bibliographystyle{elsarticle-num}
\bibliography{refs}
\end{document}